\newcommand\footnoteref[1]{\protected@xdef\@thefnmark{\ref{#1}}\@footnotemark}
\newcommand{\declarecolor}[2]{\definecolor{#1}{RGB}{#2}\expandafter\newcommand\csname #1\endcsname[1]{\textcolor{#1}{##1}}}
\definecolor{mydarkblue}{rgb}{0,0.08,0.45}
\let\oldnl\nl
\newcommand{\nonl}{\renewcommand{\nl}{\let\nl\oldnl}}
\newcounter{protocol}
\DeclareMathOperator*{\argmax}{arg\,max}
\DeclareMathOperator*{\argmin}{arg\,min}
\newcommand*{\m}{m}
\renewcommand*{\K}{K_{\mathcal{V}}}
\renewcommand*{\k}{k}
\theoremstyle{plain}
\newtheorem{theorem}{Theorem}[section]
\newtheorem{proposition}[theorem]{Proposition}
\newtheorem{lemma}[theorem]{Lemma}
\theoremstyle{definition}
\newtheorem{definition}[theorem]{Definition}
\newtheorem{remark}[theorem]{Remark}
\theoremstyle{observation}
\newtheorem{observation}[theorem]{Observation}
\title{Efficient Kernelized Learning in Polyhedral Games Beyond Full-Information:  From Colonel Blotto  to Congestion Games}
\author[1,3]{Andreas Kontogiannis\textsuperscript{*}}
\author[2,3]{Vasilis Pollatos\textsuperscript{*}}
\author[4]{\\Gabriele Farina}
\author[3,5]{Panayotis Mertikopoulos}
\author[3,6]{Ioannis Panageas}
\affil[1]{National Technical University of Athens, School of Electrical and Computer Engineering}
\affil[2]{National and Kapodistrian University of Athens, Department of Mathematics}
\affil[3]{Archimedes, Athena Research Center, Greece}
\affil[4]{Massachusetts Institute of Technology}
\affil[5]{Univ. Grenoble Alpes, CNRS, Inria, Grenoble INP, LIG, 38000 Grenoble, France}
\affil[6]{University of California, Irvine}
\begin{document}

\maketitle
\renewcommand{\thefootnote}{\fnsymbol{footnote}}
\footnotetext[1]{Equal contribution. Corresponding authors andreaskontogiannis@mail.ntua.gr, vaspoll@math.uoa.gr.}

\begin{abstract}
We examine the problem of efficiently learning coarse correlated equilibria (CCE) in \textit{polyhedral games}, that is, normal-form games with an exponentially large number of actions per player and an underlying combinatorial structure.
Prominent examples of such games are the classical Colonel Blotto and congestion games. 
To achieve computational efficiency, the learning algorithms must exhibit \textit{regret} and \textit{per-iteration complexity} that scale polylogarithmically in the size of the players' action sets.
This challenge has recently been addressed in the full-information setting, primarily through the use of \textit{kernelization}.
However, in the case of the realistic, but mathematically challenging, partial-information setting, existing approaches result in suboptimal and impractical runtime complexity to learn CCE. 
We tackle this limitation by building a framework based on the kernelization paradigm.
We apply this framework to prominent examples of polyhedral games---namely the Colonel Blotto, graphic matroid and network congestion games --- and provide computationally efficient payoff-based learning algorithms, which significantly improve upon prior works in terms of the runtime for learning CCE in these settings.
\end{abstract}




\newpage

\section{Introduction}



Learning dynamics for computing equilibria in games have been extensively studied over recent decades. The origins trace back to the work of Brown and Robinson in the 1950s \cite{brown1951iterative, robinson1951iterative}, who introduced and analyzed fictitious play. A major conceptual breakthrough came with Blackwell’s approachability theorem \cite{blackwell1956analog}, which laid the foundation for the field of online learning and, in particular, for the development of \textit{no-regret learning} \cite{cesa2006prediction}.
Several influential learning algorithms---such as multiplicative weights update (MWU) \cite{mwu_arora}, follow-the-regularized-leader \cite{shalev2012online}, and follow-the-perturbed-leader \cite{kalai2005efficient}---have been shown to satisfy the no-regret property.
These algorithms typically maintain a probability distribution (commonly referred to as a “policy”) over actions and update it iteratively, with per-iteration complexity that is polynomial in the number of actions.

Remarkably, no-regret algorithms can be used as a black-box in repeated games under the \textit{full-information} setting, where each player observes the cost of all available actions, to recover well-established equilibrium concepts, such as coarse correlated equilibria (CCE). 
The no-regret property is of great importance for learning in games, as it guarantees that the time-average cost of any player using such an algorithm is no worse than the cost of the best fixed action in hindsight---regardless of how the other players choose their actions.
Consequently, if \textit{all} players adopt no-regret algorithms, the learning dynamics converge to CCE. 

\textbf{Polyhedral Games: motivation and challenges.} 
In this paper, we focus on the problem of learning CCE in multi-player games with combinatorial structure and large action spaces where the players simultaneously use no-regret learning dynamics for $T$ rounds. 
Specifically, we consider \textit{polyhedral games} \cite{farina2022kernelized} (also dubbed \textit{linear hypergraph games} \cite{beaglehole2023sampling}), a rich class of normal-form games where the actions per player are $d$-dimensional binary vectors with at most $\m \leq d$ ones. 

Polyhedral games capture important classes of games with large action sets, including the well-studied Colonel Blotto game \cite{borel1953theory}, congestion games \cite{rosenthal1973class}, extensive-form games \cite{kuhn1953extensive}, and dueling games \cite{immorlica2011dueling}. 
For example, in \textit{multi-player Colonel Blotto games}, each player must allocate $n$ soldiers among $\k$ battlefields, where $n$ is typically much larger than $k$. 
In this case, using the one-hot representation (see Section \ref{one-hot}), we have that $\m=\k$, $d= n\k$ and $N = \binom{n+\k-1}{\k-1}$, with the latter being of order $ n^{\k}$.
In \textit{graphic matroid congestion games}, given a undirected graph $G(V,E)$, each player must choose a spanning tree, that is the basis of a graphic matroid of rank $V-1$. 
In this case $m=V-1$, $d = |E|$ and $N$ is of order $|E|^{|V|}$.
Similarly, in \textit{network congestion games}, each player needs to choose a path from $s\to t$, and the maximal path length is $K$. 
Here, $\m = K$, $d = |E|$ and $N$ is of order $|E|^K$.

In all the aforementioned examples, the number of actions $N$ per player, grows exponentially with $\m$ (approximately of order $d^\m$), and as a result the vanilla learning methods for finding CCE become {computationally inefficient} since their per-iteration complexity is polynomial in $N$ and not polynomial in $d,\m$. 
This computational challenge has recently been addressed in the full-information setting.
Beaglehole et al. \cite{beaglehole2023sampling} demonstrated how to perform approximate fast sampling from the MWU distribution in specific polyhedral games (including the Colonel Blotto and graphic matroid congestion games).
However, their approach is somewhat restrictive beyond approximately sampling from MWU, and thus sub-optimal in convergence rate, as it is unknown how to use their techniques to efficiently deploy the near-optimal Optimistic MWU \cite{daskalakis2021optimistic} in such game settings. 
In contrast, Farina et al. \cite{farina2022kernelized} proposed an efficient general methodology to simulate the \textit{exact} MWU (which allows to use optimism) algorithm via \textit{kernelization}, requiring only ${\Theta}(d)$ kernel computations (see Section \ref{sec: prelims} for a formal definition) per iteration for any polyhedral game. 
In particular, the kernelization approach developed in \cite{farina2022kernelized} has led to state-of-the-art \textit{runtime to find CCE} \footnote{The \textit{\textbf{runtime}} of an algorithm for finding an equilibrium is defined as the product between the number of iterations $T$ needed to compute the equilibrium and the algorithm's per-iteration complexity.} in extensive-form games, as recently established in \cite{fan2024optimality}.

However, the applicability of kernelization to polyhedral games remains largely unexplored beyond the full-information setting---that is, in the \textit{bandit} (and also \textit{semi-bandit}) feedback settings. 
These settings are of particular interest in practice, as the full-information assumption---where the costs of all available actions are revealed after each round---is often unrealistic. 
In contrast, bandit feedback reflects a more practical regime in which only the cost of the selected action is observed.
For example, learning under full-information feedback in network congestion games would impractically require each player to be able to observe the cost of \textit{all} paths of the network, rather than just the cost of the path she actually chose.

In order to obtain equilibrium convergence guarantees in a bandit setting, the learning dynamics of each player must satisfy \textit{no-realized-regret} guarantees that hold with \textit{high probability} against \textit{adaptive adversaries} (i.e., assuming that the other players can potentially adjust their policies based on the player's past actions)---a stringent and technically demanding requirement. 
This stands in contrast to the more commonly studied expected regret from the online learning literature (e.g., see \cite{kale2010non, combinatorial_bandits, combes2015combinatorialRevisit}). 
An even more challenging, but very practical, requirement is to ensure that the learning dynamics achieve an \textit{efficient runtime complexity to find CCE}, with minimal dependence on the game parameters $d$ and $m$, while still maintaining the no-regret property with favorable dependence on $T$ as much as possible.



Many algorithms from the bandit linear optimization literature \cite{lattimore2020bandit} can be leveraged to learn $\varepsilon$-CCE in polyhedral games. 
Bartlett et al. \cite{bartlett2008high} provide an algorithm with high probability guarantees that achieves a $\sqrt{T}$ regret bound, albeit requiring a prohibitive per-iteration complexity of $\text{poly}(N)$.
The well-established \textsc{GeometricHedge} algorithm (also known as \textsc{ComBand} \cite{combinatorial_bandits}, or \textsc{Exp2} \cite{bubeck2012towards_exp2}) originally proposed by Dani et al. \cite{dani2007price} has been shown to achieve $T^{2/3}$ regret with high probability \cite{zero_suppressed, braun2016efficientCOMBAND}.
Despite \textsc{GeometricHedge} being a classical algorithm in the literature, how to efficiently implement it remains generally unclear, with path planning being the only setting where efficient implementations --- e.g., via weight pushing \cite{takimoto2003path,GLLO07,VAM21,VAM22} --- were known prior to our work.   
Recently, \cite{bias_no_more} and \cite{zimmert2022returnLattimore} proposed algorithms for continuous action spaces---which can be extended to polyhedral games using the same techniques as \cite{abernethy2008competing}---achieving a regret bound of $\mathcal{O}(md^{7/2}\sqrt{T})$ and $\mathcal{O}(md^{2}\sqrt{T})$, respectively. 
However, the above bounds combined with a per-iteration complexity\footnote{The per-iteration complexity of the algorithm in \cite{bias_no_more} is $\widetilde{\mathcal{O}}(d^3)$ due to the fact that the optimization step is solved via an interior point method (see \cite{abernethy2012interior}), while that of \cite{zimmert2022returnLattimore} is $\widetilde{\mathcal{O}}(d^5)$ due to the pre-processing step needed to sample from a log-concave distribution (see \cite{vempala}).}, which suboptimally depends on $d$, result in impractical runtime complexity results for learning $\varepsilon$-CCE. 
In particular, the runtime of the algorithm in \cite{bias_no_more} to find $\varepsilon$-CCE scales as $d^{10}$, while that of \cite{zimmert2022returnLattimore} scales as $d^9$---both exhibiting impractically large dependence on the game parameters. 
Even more recently, the concurrent work of \cite{dags2025} proposed an algorithm for online shortest paths in DAGs with a near-optimal regret bound of $\mathcal{O}(K^{3/2}\sqrt{|E|T})$. However, their algorithm also comes with a polynomial yet impractical runtime complexity to find an approximate CCE stemming from ellipsoid method calls and other costly procedures.

Given the impractical runtime complexity results of the aforementioned approaches for learning CCE in polyhedral games, in this paper, we aim to address the following question:


\begin{quote}
\textit{Can kernelization techniques be extended beyond the full-information setting to design no-regret learning dynamics for computing CCE with state-of-the-art runtime complexity---achieving minimal dependence on the game parameters $d$ and $m$?} 
\end{quote}





\textbf{Main Contributions and Techniques.} 
In this paper, we answer the above question affirmatively. 
Due to the exponentially large (in $m$) per-player action sets in polyhedral games, designing efficient payoff-based learning algorithms involves addressing three primary challenges: (a) fast calculating the loss estimators which are used to update each player's policy, (b) fast sampling from each player's policy, and (c) ensuring that each player achieves efficient no-realized-regret guarantees, which imply efficiently learning $\varepsilon$-CCE.

To face the above challenges, we propose a \textit{kernelization}-based framework, which allows us to efficiently implement standard loss estimators from bandit linear optimization. 
Specifically, in the bandit setting (Section \ref{sec:bandit}), we propose a kernelized customization of the well-established \textsc{GeometricHedge} algorithm \cite{dani2007price} (see Algorithm \ref{alg: bandit}). 
In contrast to the full-information setting, where the approach of \cite{farina2022kernelized} required the first moments of a MWU distribution, in the bandit setting, we require the \textit{second moments} of MWU, needed to construct the unbiased combinatorial bandit estimator \cite{dani2007price, combinatorial_bandits}.
Importantly, we show that we can efficiently calculate such second moments via only $\Theta(d^2)$ kernel computations (Theorem \ref{thm: second_moment}). 
In the semi-bandit setting, our approach (see Section \ref{sec:semi-bandit}) utilizes the implicit exploration loss estimator \cite{expIX}, which, we show that it is compatible with the kernels used for the first moment of MWU.
In addition, we propose a \textit{general efficient sampling scheme} (Procedure \textsc{Sampling} in Algorithm \ref{alg: bandit}), based on kernelization, which only requires extra $\Theta(d)$ kernel computations.


\begin{table*}[t]
\small
\colorlet{hgh}{green!20}
\centering
\setlength{\tabcolsep}{8pt}
\setlength\extrarowheight{6pt}
\begin{tabular}{|l|ccc|}
\hline
\textbf{Algorithm} & \textbf{Runtime to $\varepsilon$-CCE} & \textbf{Representation} & \textbf{Feedback} \\
\hline

Beaglehole et al. \cite{beaglehole2023sampling} & $\widetilde{\mathcal{O}}(n\k^4/\varepsilon^2)$ & $\mathcal{O}(\k\log n)$ & Full-Info \\
{\bf Our Work} &  $\widetilde{\mathcal{O}}(|\mathcal{P}|n\k^3/\varepsilon)$ & $\mathcal{O}(n\k)$ & Full-Info \\
\hdashline
{\textbf{Our Work}} & $\widetilde{\mathcal{O}}\left(n^2\k^4/\varepsilon^2\right)$ & $\mathcal{O}(n\k)$  & Semi-Bandit \\
\hdashline
{Leon et al. \cite{leon2021bandit}} & $\widetilde{\mathcal{O}}\bigg( \frac{n^4 \k^5}{\varepsilon^3}  \bigg( \max\bigg\{ \frac{1}{\lambda_{\text{min}}}, n^2 \bigg\} \bigg)^{\frac{3}{2}} \bigg)$ {\color{brown}$^{\ddagger}$} & $\mathcal{O}(n^2\k)$ & Bandit \\
Zimmert and Lattimore \cite{zimmert2022returnLattimore} & $\widetilde{\mathcal{O}}( {n^{18} \k^{11}}/{\varepsilon^2} )${\color{red}$^{\dagger}$} & $\mathcal{O}(n^2\k)$ & Bandit \\
{\textbf{Our Work}} & $\widetilde{\mathcal{O}}\left(n^{2+\omega} \k^{6+\omega}/\varepsilon^3\right)$  & $\mathcal{O}(n\k)$  & Bandit \\
\hline
\end{tabular}
\caption{Comparison of results in \textbf{\textit{Colonel Blotto games}}, split by feedback type (full-information, semi-bandit, and bandit). 
{\color{red}$\dagger$}: The approach of \cite{zimmert2022returnLattimore} is evaluated using the layered graph polytope \cite{blotto_aaai_2020} of size $n^2k$. {\color{brown}$\ddagger$}: The runtime of \cite{leon2021bandit} depends on the arbitrarily large $1/\lambda_{\text{min}}$ -- that is, the inverse of the minimum eigenvalue of $\mathbb{E}[vv^T]$ under the exploration distribution.} 
\label{table: blotto}
\end{table*}

\begin{table*}[t]
\colorlet{hgh}{green!20}
\centering

\begin{minipage}[t]{0.465\textwidth}
\setlength{\tabcolsep}{1pt}
\setlength\extrarowheight{6pt}
\centering
\small
\begin{tabular}{|c|cc|}
\hline
\textbf{Algorithm} & \textbf{Runtime to $\varepsilon$-CCE} & \textbf{Feedback} \\
\hline

\cite{beaglehole2023sampling} & $\widetilde{\mathcal{O}}(|V|^5/\varepsilon^2)$ & Full-Info \\

\textbf{Our Work} &  $\widetilde{\mathcal{O}}\big({|P||V|^4(|V|^{\omega-1} + |E|)}/{\varepsilon}\big)$ & Full-Info \\

\hdashline

\textbf{Our Work} &  $\widetilde{\mathcal{O}}(|E|^2|V|^{2+\omega}/\varepsilon^2)$ & Semi-Bandit \\

\hdashline

\cite{zimmert2022returnLattimore} & $\widetilde{\mathcal{O}}(|V|^{29}/\varepsilon^2)$ & Bandit \\

\textbf{Our Work} &  $\widetilde{\mathcal{O}}\big({|E|^3|V|^6(|V|^{\omega-1}+|E|)}/{\varepsilon^3}\big)$ & Bandit \\
\hline
\end{tabular}
\caption{Comparison in \textbf{\textit{Graphic Matroid Congestion Games}}. To assess \cite{zimmert2022returnLattimore}, we used the polytope representation of \cite{martin1991usingMATROID} that uses $d = |V|^3$ and has a small number of constraints.}
\label{table:matroid_congestion}
\end{minipage}
\hfill
\begin{minipage}[t]{0.435\textwidth}
\small
\setlength{\tabcolsep}{2.6pt}
\setlength\extrarowheight{6pt}
\centering
\begin{tabular}{|c|cc|}
\hline
\textbf{Algorithm} & \textbf{Runtime to $\varepsilon$-CCE} & \textbf{Feedback} \\
\hline

\cite{gyorgy2006shortest} & $\widetilde{\mathcal{O}}(|E|^{1+\omega}K^3/\varepsilon^2)$ & Semi-Bandit \\

\cite{congestion_semi_bandit} & $\widetilde{\mathcal{O}}(|E|^9/\varepsilon^4)$\textcolor{red}{$^*$} & Semi-Bandit \\

\textbf{Our Work} &  $\widetilde{\mathcal{O}}(|E|^{1+\omega}K^2/\varepsilon^2)$ & Semi-Bandit \\

\hdashline

\cite{zimmert2022returnLattimore} & $\widetilde{\mathcal{O}}(|E|^9K^{10}/\varepsilon^2)$ & Bandit \\

\textbf{Our Work} &  $\widetilde{\mathcal{O}}(|E|^{2+\omega}K^4/\varepsilon^3)$ & Bandit \\
\hline
\end{tabular}
\caption{Comparison of results in \textbf{\textit{Network Congestion Games}}. {\color{red}$*$}: The algorithm proposed in \cite{congestion_semi_bandit} also achieves convergence to Nash equilibria, albeit with slower rates.}
\label{table: congestion}
\end{minipage}
\end{table*}

Apart from improvements in the per-iteration complexity, our analysis provides the following no-regret results for learning in polyhedral games: 
In the \textit{bandit setting}, we achieve  $\mathcal{\widetilde{O}}(d^{2/3}\m^{4/3}T^{2/3})$ regret with high probability (Theorem \ref{thm: no_regret_bandit}), improving upon baselines \cite{zero_suppressed, braun2016efficientCOMBAND} in the dependence on the game parameters. Moreover, we achieve better regret than \cite{zimmert2022returnLattimore} in the realistic regime where $T \leq d^{6}$. Regarding the \textit{semi-bandit setting}, we achieve $\widetilde{\mathcal{O}}(\m\sqrt{Td})$ regret with high probability (Theorem \ref{thm: no_regret_semi_bandit}), which is a factor $\sqrt{\m}$ worse than the optimal expected regret guarantee \cite{audibert2014regret}. To the best of our knowledge, this is the first high probability result on the general setting.


To showcase the power of our general framework, we study three important classes of polyhedral games: the multi-player Colonel Blotto, graphic matroid and network congestion games. 

In \textit{{Colonel Blotto games}}, we use kernelization techniques based 
on the generator function induced by the game's combinatorial structure, in order to efficiently compute the required kernels.  
Remarkably, our kernelization-based approach operates directly on the geometry of the Colonel Blotto game, by allowing us to leverage an efficient $\Theta(nk)$-representation. Prior work had only been able to operate with DAG representations of the set, leading to suboptimal formulations with $\mathcal{O}(n^2k)$ edges. 
As shown in Table \ref{table: blotto} (and stated in Theorem \ref{thm: blotto}),
in the bandit setting, our approach learns an $\varepsilon$-CCE in time $\widetilde{\mathcal{O}}\left(n^{2+\omega}\k^{6+\omega}/\varepsilon^3\right)$---where $\omega$ is the multiplication exponent (currently the best known is $\approx 2.372$ \cite{omega})---thereby significantly improving over \cite{zimmert2022returnLattimore} in the dependence on the game parameters by a factor $\approx$ $n^{13}k^{2}$.
In the semi-bandit setting, our approach learns an $\varepsilon$-CCE in time $\widetilde{\mathcal{O}}\left(n^2\k^4/\varepsilon^2\right)$. 



In \textit{{graphic matroid congestion games}}, we design kernelization techniques based on the celebrated Matrix-Tree Theorem \cite{tutte2001graph}. 
To reduce the amortized kernel computation time, we use fast rank-1 updates of the LU decomposition of Laplacian matrices based on the structure of the required kernels.
Moreover, we perform efficient exact sampling via an incremental kernelization approach.
As shown in Table \ref{table:matroid_congestion} (and also in Theorem \ref{thm: matroids}), in the bandit setting, our approach learns an $\varepsilon$-CCE in time $\widetilde{\mathcal{O}}\big({|E|^3|V|^6(|V|^{\omega-1}+|E|)}/{\varepsilon^3}\big)$, significantly improving upon the very impractical dependence on $|V|^{29}$ of \cite{zimmert2022returnLattimore}. In the semi-bandit setting, our approach learns an $\varepsilon$-CCE in time $\widetilde{\mathcal{O}}(|E|^2|V|^{2+\omega}/\varepsilon^2)$. 

\smallskip

\begin{remark}
\textit{We can combine our kernelization results for Colonel Blotto and graphic matroid congestion games with the full-information framework developed in \cite{farina2022kernelized} to yield $1/\varepsilon$ convergence to $\varepsilon$-CCE in these games---thus addressing an open question of Beaglehole et al. \cite{beaglehole2023sampling}}
\end{remark}

\begin{remark}
\textit{Our kernelization techniques developed for graphic matroids (see Lemma \ref{lem: matroids}) allow us to efficiently implement \textsc{GeometricHedge} (also known as \textsc{ComBand} and \textsc{Exp2}) over spanning trees---thus, to the the best of our knowledge, resolving an open question posed by Cesa-Bianchi and Lugosi \cite{combinatorial_bandits}.}
\end{remark}

In \textit{{network congestion games}}, our framework improves upon \citep{gyorgy2006shortest, dadi2024congestionBandit, congestion_semi_bandit, zimmert2022returnLattimore}. 
For a summary of our results, we refer to Table \ref{table: congestion}.
Due to space constraints, our formal results can be found in Appendix \ref{sec: network_congestion_games}.
Further details on existing approaches for each of the above games can be found in Appendix \ref{related_work}.

\section{Preliminaries}\label{sec: prelims}

\paragraph{Polyhedral Games.}\label{sec:combinatorial_games}
In this paper, we consider \textit{Polyhedral Games}, a structured class of normal-form games with exponentially large action sets, where each action can be represented as a binary $d$-dimensional vector of at most $\m$ ones and the incurred cost is linear in the action vector. 
For simplicity, here we assume that all players have the same action sets. 
Formally, we represent a polyhedral game as a tuple $\mathcal{G} = (\mathcal{P}, \mathcal{V}, \{L_i\})$. 
The set $\mathcal{P}$ defines the set of players, each of which is assigned a unique player identifier in $[|\mathcal{P}|] := \{1,2,\ldots,|\mathcal{P}|\}$. 
The finite set $\mathcal{V} \subset \mathbb{R}^d$ of size $N$ represents the actions available to each player $i \in [|\mathcal{P}|]$, such that for any $v \in \mathcal{V}$, $\|v\|_1 \leq \m$. 
We denote by $-i$ all agents except $i$.
We define the loss vector function $\ell_i: \mathcal{V}^{|P|} \rightarrow \mathbb{R}^d_{+}$. $L_i: \mathcal{V}^{|P|} \rightarrow \mathbb{R}_{+}$ is the cost function, which is linear in $v_i$; that is, $L_i(v_i;v_{-i}) = \ell(v_i;v_{-i}) \cdot v_i$. 

\paragraph{Online Learning Setup in Polyhedral Games.}\label{sec:setup}
In polyhedral game dynamics under partial-information feedback, each player iteratively updates her strategies based on the feedback she receives about the loss. We consider the bandit and semi-bandit settings. 
In the semi-bandit setting, each player $i$ selects an action $v_i \in \mathcal{V}_i$ and receives the losses $\ell_i(j)$ of the loss vector $\ell_i = \ell_i(v_i;v_{-i})$ for all $j$ such that $v_i(j)=1$. 
In the bandit setting, each player receives only $L_i(v_i;v_{-i})$.


However, it is not clear how a selfish player $i$ should update her strategy in order to minimize her overall loss, since the strategies of the other players can arbitrarily change over time. 
Thus, player $i$ tries to minimize her experienced loss under the worst-case assumption that the loss of each coordinate is selected by a malicious adversary. 

Based on the above, we focus on the single-player's perspective and examine an abstract---online learning---model, where each player is a decision maker interacting with an unknown and potentially adversarial environment.
At each round $t = 1, 2, \ldots, T$ of the online learning process, the decision maker samples an action $v_t \in \mathcal{V}$ from a probability distribution $p_t \in \Delta(\mathcal{V})$. Subsequently, the environment chooses a loss vector $\ell_t \in \mathbb{R}^d$, potentially in an adversarial fashion. 
This is the same setup adopted in \cite{congestion_semi_bandit, dadi2024congestionBandit}.
Given any round $T$, we define the \textit{regret}  up to round $T$ as follows:
\begin{align}
{R}_T = \sum_{t=1}^T v_t \cdot \ell_t - \min_{v^*\in\mathcal{V}} \sum_{t=1}^T v^* \cdot \ell_t.
\end{align}
We note that the above notion measures \textit{realized} regret, that is, it measures the performance of the algorithm based on the actions \textit{sampled} from the distribution $p_t$. 
We say that players are playing no-regret learning in the game if each one of them achieves sublinear regret. 

A prominent result in the theory of learning in games establishes a celebrated connection between no-regret learning and CCE of the game (which, in two-player zero-sum games, are Nash equilibria).

\smallskip

\begin{theorem}[Informal, \cite{freund1999adaptive}] \label{thm: folklore}
Suppose $|P|$ players are playing no-regret learning in the game. Let $\sigma^* := \frac{1}{T} \sum_{t=1}^T v_1^{(t)} \otimes \cdots \otimes v_{|\mathcal{P}|}^{(t)}$ be the time-average joint actions over $T$ rounds. Then, $\sigma^*$ forms an $T^{-1} \max ({R}_{T,1}, \dots, {R}_{T,|\mathcal{P}|})$-approximate CCE of the game, where ${R}_{T,i}$ is the regret for the $i$-th player at the $T$-th round. 
\end{theorem}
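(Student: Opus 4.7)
The plan is to verify the $\varepsilon$-CCE condition directly by unfolding definitions and reducing it to the per-player regret bound. Recall that $\sigma^*$ is an $\varepsilon$-approximate CCE provided that for every player $i\in[|\mathcal{P}|]$ and every deviation $v_i^*\in\mathcal{V}$,
\begin{equation*}
\mathbb{E}_{v\sim\sigma^*}[L_i(v_i;v_{-i})] \;\leq\; \mathbb{E}_{v\sim\sigma^*}[L_i(v_i^*;v_{-i})]+\varepsilon.
\end{equation*}
Since $\sigma^*$ is by construction the uniform average over $t\in[T]$ of the product measures $v_1^{(t)}\otimes\cdots\otimes v_{|\mathcal{P}|}^{(t)}$, both expectations split as time averages of the deterministic evaluations at the empirical joint actions $(v_1^{(t)},\dots,v_{|\mathcal{P}|}^{(t)})$.

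Next, I would invoke the linearity of the cost $L_i$ in $v_i$ stipulated in Section \ref{sec: prelims}: writing $L_i(v_i;v_{-i})=\ell_i(v_i;v_{-i})\cdot v_i$, linearity in $v_i$ forces the loss vector $\ell_i^{(t)}:=\ell_i(v_i^{(t)};v_{-i}^{(t)})$ to depend only on $v_{-i}^{(t)}$, so that $L_i(v_i^*;v_{-i}^{(t)})=\ell_i^{(t)}\cdot v_i^*$ for every deviation $v_i^*$. Subtracting the per-round cost under $\sigma^*$ from the per-round cost under a deviation at fixed $t$ therefore yields exactly $\ell_i^{(t)}\cdot(v_i^{(t)}-v_i^*)$, which is the per-round regret increment that player $i$ would experience against the adversary that plays the deterministic loss sequence $\{\ell_i^{(t)}\}_{t=1}^T$.

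Summing over $t\in[T]$ and dividing by $T$, the deviation gap becomes $\tfrac{1}{T}\bigl(\sum_{t=1}^{T}\ell_i^{(t)}\cdot v_i^{(t)}-\sum_{t=1}^{T}\ell_i^{(t)}\cdot v_i^*\bigr)$. Maximizing over $v_i^*\in\mathcal{V}$ upper bounds this gap by $R_{T,i}/T$, by the very definition of the realized regret ${R}_{T,i}$ recalled just before the theorem. Taking a further maximum over players $i\in[|\mathcal{P}|]$ produces the advertised approximation factor $T^{-1}\max_i R_{T,i}$, and the theorem follows.

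There is no genuine obstacle here; the only point requiring care is noticing that ``linear in $v_i$'' forces $\ell_i$ to be independent of $v_i$, which legitimizes the identity $L_i(v_i^*;v_{-i}^{(t)})=\ell_i^{(t)}\cdot v_i^*$ used in the deviation step. Beyond that, the argument is a direct manipulation of definitions and does not rely on any of the combinatorial structure of polyhedral games, which is why the statement holds as a general folklore fact.
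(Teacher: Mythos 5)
Your proof is correct and is exactly the standard folklore argument (unfold the empirical distribution $\sigma^*$ into a time average, use linearity of $L_i$ in $v_i$ to identify the deviation cost with $\ell_i^{(t)}\cdot v_i^*$, and recognize the resulting gap as $R_{T,i}/T$); the paper does not spell out a proof at all, deferring to the Freund--Schapire citation, so there is nothing to diverge from. Your observation that linearity in $v_i$ lets one take the loss vector to depend only on $v_{-i}$ is precisely the point that legitimizes evaluating deviations against the realized loss sequence, and it is handled correctly.
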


\paragraph{Kernelized MWU.}
Multiplicative Weights Update (MWU) is an online learning algorithm that iteratively updates a distribution ${p_t}$ over actions in $\mathcal{V}$. Let ${p}_{0} := \frac{1}{|\mathcal{V}|}\boldsymbol{1}\in\Delta(\mathcal{V})$.
The MWU rule at each time step $t$ is 
   $p_t(v) \propto p_{t-1}(v) \cdot e^{-\eta_t w_t(v)}, \quad \forall v \in \mathcal{V}$. 
In the standard MWU variant we set $w_t:=\ell_{t-1}$, where $\ell_{t-1}$ is the loss vector observed at $t-1$. By setting $w_t:=2\ell_{t-1}-\ell_{t-2}$ we derive the \textit{Optimistic MWU} (OMWU) algorithm \cite{daskalakis2021optimistic}, which achieves constant regret (up to logarithmic factors), and thus $1/\varepsilon$ convergence to CCE in the context of learning in games. 

In polyhedral games, we are interested in the efficient calculation of moments of the MWU distribution.
For this aim, a useful tool introduced in \cite{takimoto2003path, farina2022kernelized} is the \textit{\textbf{kernel function}} $\mathbb{R}^d \times \mathbb{R}^d \rightarrow \mathbb{R}$ defined as follows:
$ \K(x,y) \coloneqq \sum_{v \in \mathcal{V}} \prod_{j: \in v(j)=1} x(j) \, y(j). \label{kernel}
$ 
The next theorem shows how to compute the first moment of $p_t$ via $d+1$ kernel computations \footnote{Related results were concurrently shown in \cite{soleymani2025faster} to compute the Hessian of a self-concordant function, which is needed to implement Newton's method.}. 

\smallskip

\begin{theorem}[First Moment Calculation, \cite{farina2022kernelized}]\label{thm:kernel}
    At all rounds $t\ge 0$, the first moment of the MWU distribution, $p_t$, can be calculated as follows:
    \[
        {\mathbb{E}_{v\sim p_t}[v]} \! = \! \left(\!
        1 - \frac{\K(C_{t}, \bar{e}_1)}{\K(C_t, \boldsymbol{1})}, \dots,
        1 - \frac{\K(C_{t}, \bar{e}_d)}{\K(C_t, \boldsymbol{1})}
        \!\right),\label{eq:xt}
    \]
where $C_t(j) := \exp\left\{-\sum_{{\tau=1}}^t \eta_\tau w_{\tau}(j)\right\}\label{eq:bt}$ and $\bar{e}_{j}(h) := \mathbbm{1}\{h\neq j\}$, for $h,j\in[d]$. 

\end{theorem}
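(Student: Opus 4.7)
The plan is to unroll the MWU update, recognize the resulting partition function as a kernel evaluation, and then express each coordinate of the first moment by a complement/indicator trick that cancels out terms with $v(i)=1$.

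First, I would iterate the recursion $p_t(v)\propto p_{t-1}(v)\exp(-\eta_t w_t(v))$ starting from the uniform initialization $p_0$, to obtain the closed form
\[
p_t(v) \;=\; \frac{\exp\!\left(-\sum_{\tau=1}^{t}\eta_\tau w_\tau(v)\right)}{Z_t},\qquad
Z_t \;:=\; \sum_{u\in\mathcal V}\exp\!\left(-\sum_{\tau=1}^{t}\eta_\tau w_\tau(u)\right).
\]
Since the loss is linear in $v$, we have $w_\tau(v)=\sum_{j=1}^d w_\tau(j)\,v(j)$, and because $v(j)\in\{0,1\}$, the exponential factorizes as $\exp(-\sum_\tau\eta_\tau w_\tau(v))=\prod_{j:v(j)=1} C_t(j)$, with $C_t(j)$ as defined in the statement.

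Next, I would match this factorized form against the definition of the kernel. Taking $y=\mathbf 1$ in $\K(x,y)=\sum_{v\in\mathcal V}\prod_{j:v(j)=1}x(j)y(j)$ gives exactly $\K(C_t,\mathbf 1)=Z_t$. Hence the normalizing constant already is a single kernel evaluation, which handles the denominators in the claimed formula.

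For the numerators, I would compute the $i$-th coordinate of the first moment by splitting the sum according to whether $v(i)=0$ or $v(i)=1$:
\[
\mathbb{E}_{v\sim p_t}[v(i)] \;=\; \frac{1}{Z_t}\sum_{v\in\mathcal V:\,v(i)=1}\prod_{j:v(j)=1} C_t(j) \;=\; 1-\frac{1}{Z_t}\sum_{v\in\mathcal V:\,v(i)=0}\prod_{j:v(j)=1} C_t(j).
\]
The key observation is that the vector $\bar e_i$ is precisely the indicator of $\{j\neq i\}$, so plugging $y=\bar e_i$ into the kernel yields a product that is zero whenever $v(i)=1$ (because the factor $\bar e_i(i)=0$ appears) and equals $\prod_{j:v(j)=1}C_t(j)$ whenever $v(i)=0$. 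Therefore $\K(C_t,\bar e_i)=\sum_{v:v(i)=0}\prod_{j:v(j)=1}C_t(j)$, which combined with the display above gives $\mathbb{E}_{v\sim p_t}[v(i)]=1-\K(C_t,\bar e_i)/\K(C_t,\mathbf 1)$, as claimed.

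There is no real obstacle here: the proof is essentially an algebraic identity driven by the binary nature of $\mathcal V\subseteq\{0,1\}^d$ and the linearity of the losses. The only thing to be careful about is the indicator trick with $\bar e_i$, which relies on the convention that $\bar e_i(i)=0$ kills any term with $v(i)=1$ inside the kernel sum; once this is noted, the identity follows immediately.
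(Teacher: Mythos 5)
Your proof is correct and follows essentially the same route as the paper's treatment: unrolling MWU into the product form $\prod_{j:v(j)=1}C_t(j)$, identifying $\K(C_t,\boldsymbol{1})$ with the partition function, and using that $\bar e_j$ acts as the indicator $\mathbbm{1}\{j\notin v\}$ inside the kernel sum --- exactly the feature-map identity the paper itself exploits in its Appendix proof of the second-moment theorem. No gaps; the handling of $w_\tau(v)=\sum_j w_\tau(j)v(j)$ and the cancellation of the uniform initialization are the only subtleties, and you address both.
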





\section{Kernelized Payoff-based Learning in Polyhedral Games}\label{sec: no-regret}


In this section, we design a framework for efficient payoff-based learning in polyhedral games, under bandit and semi-bandit feedback. Upon this framework, we build learning algorithms, which achieve efficient no-realized-regret learning with high probability guarantees against adaptive adversaries---a key requirement to show convergence to CCE. 
For our algorithms to be efficiently implementable, as we will see, it suffices that the kernels used for constructing the loss estimators and for sampling (highlighted in orange in Algorithm~\ref{alg: bandit}) can be computed efficiently.
This can be achieved by effectively leveraging the game's combinatorial structure, as we will explain in depth later in the paper.
For the remainder of this section, we assume that we have oracles for calculating the required kernels.
In the next sections, we will demonstrate how our algorithms can be implemented efficiently in prominent examples of polyhedral games. 

\subsection{Kernelized \textsc{GeometricHedge} for Bandit No-Regret Learning}\label{sec:bandit}

\setlength{\textfloatsep}{12pt}
\begin{algorithm2e}[h]
\caption{Kernelized \textsc{GeometricHegde}} \label{alg: bandit}
\DontPrintSemicolon
\vspace{0.2em}
\KwData{ $d$, $\m$, $\eta > 0$, $\gamma \in [0,1]$}
{{Compute a $2$-approximate-barycentric-spanner}} $B$  \;
Initialize $q_0 = [1/N, \ldots, 1/N] \in \Delta(\mathcal{V})$, $\mu = \frac{1}{d}\mathbbm{1}\{v \in B\}$, $c_{0}(j) = 0$ and $C_0(j) = 1$, $\forall j \in [d]$ \;

\For{$t = 1, \ldots ,T$}{
    \vspace{0.25em}
    {Mixing:} \quad ${p}_t = (1-\gamma) q_t + \gamma \mu$, \quad where $q_t = \text{MWU}(C_t)$ \;
    \vspace{0.25em}
    {{\color{orange}Compute the kernels}:  $ \quad \K(C_{t-1}, \textbf{1})$ and $\{\K(C_{t-1}, \bar{e}_{j,j'})\}, $  $ \forall j, j' \in [d]$  \;}
    \vspace{0.25em}
    Sample $v_t \sim (1-\gamma) \textsc{Sampling}(\mathcal{V}, C_{t-1}) + \gamma \mu$ \;
    \vspace{0.25em}
    Observe the bandit loss $L_t = \ell_t \cdot v_t$ \;
    \vspace{0.25em}
    Compute $\Sigma_t(q_t)$ using Theorem \ref{thm: second_moment} and set $\Sigma_t = (1-\gamma) \Sigma_t(q_t) + \frac{\gamma}{d}BB^T$ \;
    Compute the unbiased loss estimator: \quad $\widehat{\ell}_t = L_t \Sigma^{-1}_t v_t$ \;
    Update the aggregated loss estimators:
    $\quad c_t(j) = c_{t-1}(j) + \widehat{\ell}_t(j), \quad \forall j \in [d] $ \;
    \vspace{0.5em}
    Update the exponential cumulative loss estimators:
    $\quad C_t(j) = \exp(-\eta c_{t}(j)), \quad \forall j \in [d] $ \;
    \vspace{-0.4em}
}
\;
\textbf{Procedure:} \textsc{Sampling}\;
\textbf{Input:} $\mathcal{V}$, $C$\;
Sample $v[1] \sim Be\left(1 - \frac{K_{\mathcal{V}} (C, \bar{e}_1)}{\K(C, \boldsymbol{1})}\right)$ \;
\For{ $j = 2, ..., d$}{ 
\vspace{0.25em}
{\color{orange}Compute the kernel}: $K_{\mathcal{V}(j)}$ \;
Set $\mathcal{V}(j)=\{v' \in  \mathcal{V} : v'[i] = v[i],\ \forall i \in [j-1] \}$
and
$p_j=1-\frac{K_{\mathcal{V}(j)} (C, \bar{e}_j)}{K_{\mathcal{V}(j)}(C, \boldsymbol{1})}$\;
\vspace{0.25em}
Sample $v[j]\sim Be(p_j)$\;
}
\textbf{Return:} $v$\;
\end{algorithm2e}

We present our first algorithm, which establishes efficient no-regret learning in polyhedral games under bandit feedback. 
Our algorithm (Algorithm \ref{alg: bandit}) is a \textit{kernelized} customization of \textsc{GeometricHedge} \cite{dani2007price}, a classical algorithm in the study of combinatorial bandits \cite{lattimore2020bandit}. 
Despite \textsc{GeometricHedge} being an algorithm with a well-studied expected regret analysis, how to efficiently implement it remains largely unclear.
The primary challenges in applying the vanilla method are the following: 

    
\begin{enumerate}
    \item Calculating $\Sigma=\mathbb{E}[vv^T]$ -- needed to construct the unbiased loss estimates which will be used by a MWU routine -- in $\text{poly}(d,m)$ time.
    \medskip
    \item Sampling from MWU in $\text{poly}(d,m)$ time. 
\end{enumerate}
In this paper, we tackle the above challenges in the context of polyhedral games (however, our approach can also be applied to the well-studied combinatorial settings discussed in \cite{combinatorial_bandits}).
The main idea of our approach is to utilize a loss estimate for each coordinate $j \in [d]$, which can be kernelized efficiently, and simulate MWU using a fast sampling schema based on the computed kernels. Subsequently, we present the main components of our approach. 

\paragraph{Second Moment Kernelization.} 
Algorithm \ref{alg: bandit} uses a distribution $p_t$ which is the mixture between a MWU distribution $q_t$ and the uniform distribution, $\mu$, over a 2-approximate barycentric spanner of $\mathcal{V}$.
Due to space constraints, we prompt the interested reader to Appendix \ref{sec:barycentric_spanners} for background on barycentric spanners. 
In contrast to the full-information setting, where kernelized MWU \cite{farina2022kernelized} requires the first moment of the MWU distribution to simulate the update rule, our algorithm also requires the \textit{second moment} of $p_t$ (i.e., the autocorrelation matrix, denoted by $\Sigma_t$) to construct the standard {unbiased} estimator of \textsc{GeometricHedge} (Step 9). 
Through Step 8, it suffices to efficiently calculate $\Sigma_t(q_t)$, that is the autocorrelation matrix under the law of $q_t$, which in general was not known how to efficiently compute prior to our work (with the only exception being path planning problems where weight pushing techniques \cite{takimoto2003path, zero_suppressed} can be applied). 

For this purpose, we will make use of kernelization. The next theorem shows that we can efficiently calculate the second moment of $q_t$ using only $d^2 + 1$ kernel computations.

\smallskip

\begin{theorem}[Second Moment Calculation]\label{thm: second_moment}
Let ${\Sigma}_{t}(q_t) := \sum_{v\in\mathcal{V} }{q_t}(v)\cdot(vv^T)$ be the autocorrelation matrix under the law of a MWU distribution $q_t$. Then, for all $j,j' \in [d]$, 
$${\Sigma}_t(q_t)[j,j'] = 1 - \frac{\K (C_t, \bar{e}_{j}) + \K(C_t, \bar{e}_{j'})- \K(C_t, \bar{e}_{j,j'})}{\K (C_t,\boldsymbol{1})}$$
where $\bar{e}_{jj'}(h) := \mathbbm{1}\{h\neq j \text{ and } h\neq j'\}$, for $h,h',j\in[d]$ and $\bar{e}_{j}(h) := \mathbbm{1}\{h\neq j\}$, for $h,j\in[d]$.
\end{theorem}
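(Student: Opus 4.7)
The plan is to reduce the second moment to a small collection of event-probabilities under $q_t$ via inclusion–exclusion, and then to express each of these probabilities as a ratio of kernels by exploiting the ``killing'' behavior of the masks $\bar e_{j}, \bar e_{j'}, \bar e_{j,j'}$.

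First I would unpack the diagonal/off-diagonal entry of interest. Since every $v\in\mathcal{V}$ is a binary vector, $(vv^T)[j,j']=v(j)\,v(j')=\mathbbm{1}\{v(j)=1,\,v(j')=1\}$. Applying inclusion–exclusion to the indicator in display math gives
\[
v(j)\,v(j') \;=\; 1-\mathbbm{1}\{v(j)=0\}-\mathbbm{1}\{v(j')=0\}+\mathbbm{1}\{v(j)=0,\,v(j')=0\},
\]
so taking expectation under $q_t$ yields
\[
\Sigma_t(q_t)[j,j']\;=\;1-\Pr_{v\sim q_t}\!\bigl[v(j)=0\bigr]-\Pr_{v\sim q_t}\!\bigl[v(j')=0\bigr]+\Pr_{v\sim q_t}\!\bigl[v(j)=v(j')=0\bigr].
\]

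Next I would identify each of these three probabilities as a kernel ratio. Recall $q_t(v)\propto\prod_{h:v(h)=1}C_t(h)$, with normalizer $Z_t=\sum_{v\in\mathcal{V}}\prod_{h:v(h)=1}C_t(h)=\K(C_t,\mathbf{1})$ by the definition of the kernel. Since $\bar e_j(h)=\mathbbm{1}\{h\neq j\}$, the product $\prod_{h:v(h)=1}C_t(h)\,\bar e_j(h)$ vanishes whenever $v(j)=1$ (because the factor at $h=j$ is $0$) and equals $\prod_{h:v(h)=1}C_t(h)$ whenever $v(j)=0$ (since all active coordinates satisfy $h\neq j$). Summing over $v\in\mathcal{V}$ therefore restricts the kernel sum to the event $\{v(j)=0\}$:
\[
\K(C_t,\bar e_j)\;=\;\sum_{v\in\mathcal{V}:\,v(j)=0}\prod_{h:v(h)=1}C_t(h)\;=\;Z_t\,\Pr_{v\sim q_t}\!\bigl[v(j)=0\bigr].
\]
The symmetric argument gives the same identity for $j'$, and for the mask $\bar e_{j,j'}$ the product vanishes whenever $v(j)=1$ \emph{or} $v(j')=1$, so $\K(C_t,\bar e_{j,j'})=Z_t\Pr_{v\sim q_t}[v(j)=0,\,v(j')=0]$.

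Finally I would assemble these identities. Dividing each kernel by $Z_t=\K(C_t,\mathbf{1})$ and substituting into the inclusion–exclusion expansion of $\Sigma_t(q_t)[j,j']$ produces
\[
\Sigma_t(q_t)[j,j']\;=\;1-\frac{\K(C_t,\bar e_j)+\K(C_t,\bar e_{j'})-\K(C_t,\bar e_{j,j'})}{\K(C_t,\mathbf{1})},
\]
which is exactly the claimed formula. Note this recovers Theorem~\ref{thm:kernel} as a degenerate special case when $j=j'$ (since $\bar e_{j,j}=\bar e_j$, two of the three kernels collapse).

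\textbf{Main obstacle.} There is no substantial technical obstacle: the identity is essentially a two-coordinate generalization of the first-moment calculation in Theorem~\ref{thm:kernel}. The only point demanding care is bookkeeping around the indicator $\bar e_{j,j'}$, to verify that its ``double-kill'' behavior on the kernel sum precisely matches the single inclusion–exclusion correction term $\Pr[v(j)=v(j')=0]$, and to confirm the sign in the final display (the $-\K(C_t,\bar e_{j,j'})$ arises from pulling the $+$ sign inside the overall fraction). The counting of kernel evaluations---$d$ single-mask kernels $\K(C_t,\bar e_j)$, one $\K(C_t,\mathbf{1})$, and $\binom{d}{2}$ double-mask kernels $\K(C_t,\bar e_{j,j'})$, totaling $\Theta(d^2)$---follows immediately from the closed form.
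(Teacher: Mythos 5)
Your proposal is correct and follows essentially the same route as the paper's proof: both apply inclusion–exclusion to the indicator $\mathbbm{1}\{v(j)=v(j')=1\}$ and then recognize that the masked kernels $\K(C_t,\bar e_j)$, $\K(C_t,\bar e_{j'})$, $\K(C_t,\bar e_{j,j'})$ compute exactly the corresponding restricted sums, normalized by $\K(C_t,\boldsymbol{1})$ (the paper phrases this step via feature maps $\phi(\bar e_{j,j'})[v]=\mathbbm{1}_{j\notin v\land j'\notin v}$ and Theorem~\ref{thm:kernel}, while you work directly with $q_t(v)\propto\prod_{h:v(h)=1}C_t(h)$, which is the same underlying fact). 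No gap; your remarks on the diagonal case $j=j'$ and the $\Theta(d^2)$ kernel count are consistent with the paper.
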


\paragraph{Kernelization implies efficient exact sampling.} 
Based on kernelization, we propose an efficient sampling scheme (Procedure \textsc{Sampling} in Algorithm \ref{alg: bandit}) that only requires extra $d$ kernel computations. 
We are interested in sampling $v \sim p_t$. By using the chain rule on the probability of the intersection events, we derive the following:
\begin{align}
    p_t(v) = \Pr[v(1)]\Pr[v(2)|v(1)]\cdots \Pr[v(d)|v(1),\dots,v(d-1)]
\end{align}
It is easy to see that the $j$-th term of the above product equals the $j$-th coordinate of the first moment kernelization (see Theorem \ref{thm:kernel} and Observation \ref{obs}) of the conditional polytope $\mathcal{V}(j)$---i.e., the polytope which has the first $j-1$ coordinate values equal to the $j-1$ sampled values (Step 18).
Based on this, we iteratively sample each coordinate $j \in [d]$ from a Bernoulli distribution (Step 19), which has probability equal to $p_j = \Pr[v(j)|v(1),\dots,v(j-1)] = 1-\frac{K_{\mathcal{V}(j)} (C, \bar{e}_j)}{K_{\mathcal{V}(j)}(C, \boldsymbol{1})}$ (Step 19). 

\paragraph{Improved regret dependence on the game parameters.} 
Our analysis differs from that of the original paper of \textsc{GeometricHedge} \cite{dani2007price}, which studied expected regret. 
Importantly, we improve upon prior analyses \cite{zero_suppressed, braun2016efficientCOMBAND}, which also studied the realized regret of the algorithm, by reducing the regret's dependence on $d$ and $\m$, while avoiding dependence on the possibly exponentially small minimum eigenvalue, $\lambda_{\text{min}}$, of the autocorrelation matrix under the law of the initial distribution. 
In particular, we achieve this by using a more careful analysis on the effect of the barycentric spanner to the variance of the estimator.
The following theorem shows that Algorithm \ref{alg: bandit} is no-regret. 

\smallskip

\begin{theorem}[No-Regret under Bandit Feedback]\label{thm: no_regret_bandit}
For $T \geq 8d^2\m$, by setting $\gamma = \frac{d^{2/3} \m^{1/3}
}{T^{1/3}}$ and $\eta = \frac{1}{4d^{4/3} \m^{2/3} T^{1/3}}$, Algorithm \ref{alg: bandit} achieves regret
$ R_T \leq \mathcal{\widetilde{O}}(d^{2/3}\m^{4/3}T^{2/3})$ with high probability.
\end{theorem}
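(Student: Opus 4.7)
The strategy is the standard three-part decomposition for \textsc{GeometricHedge}-type algorithms, but carried out carefully enough to obtain a high-probability bound against an adaptive adversary with the favorable dependence $\widetilde{\mathcal{O}}(d^{2/3}m^{4/3}T^{2/3})$. Write
\[
R_T = \underbrace{\sum_t \bigl(L_t - \langle p_t, \ell_t\rangle\bigr)}_{\text{(A)}} \;+\; \underbrace{\gamma \sum_t \bigl(\langle \mu,\ell_t\rangle - \langle q_t,\ell_t\rangle\bigr)}_{\text{(B)}} \;+\; \underbrace{\sum_t \langle q_t - e_{v^\star}, \widehat\ell_t\rangle}_{\text{(C)}} \;+\; \underbrace{\sum_t \bigl(\langle q_t,\ell_t - \widehat\ell_t\rangle + \langle e_{v^\star}, \widehat\ell_t - \ell_t\rangle\bigr)}_{\text{(D)}},
\]
after using $\mathbb{E}[\widehat\ell_t\mid \mathcal F_{t-1}] = \ell_t$. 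Term (A) is a bounded martingale (each increment has absolute value at most $m$) and is controlled by Azuma-Hoeffding. Term (B) contributes at most $\gamma m T$ since $\ell_t\in[0,1]^d$ and $\|v\|_1\le m$.

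\textbf{Bounding the MWU term (C).} Apply the standard negative-entropy regret inequality for MWU on losses $\widehat\ell_t$ with learning rate $\eta$, under the condition $\eta\,\langle v,\widehat\ell_t\rangle \le 1$ for every $v\in\mathcal V$ (which I verify below from the parameter choices). This yields
\[
\text{(C)} \;\le\; \frac{\log N}{\eta} \;+\; \eta \sum_t \widehat\ell_t^{\,\top}\, \Sigma_t(q_t)\, \widehat\ell_t,
\qquad \log N \le m\log d.
\]
The crucial step is a \emph{deterministic} bound on the variance term. Since $\Sigma_t = (1-\gamma)\Sigma_t(q_t) + \gamma\Sigma_\mu$ with $\Sigma_\mu = \tfrac1d BB^\top$, we have $\Sigma_t(q_t)\preceq \tfrac1{1-\gamma}\Sigma_t$ and $\Sigma_t^{-1}\preceq \tfrac1\gamma \Sigma_\mu^{-1}$. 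Plugging in $\widehat\ell_t = L_t\,\Sigma_t^{-1}v_t$ and using $L_t\le m$,
\[
\widehat\ell_t^{\,\top}\Sigma_t(q_t)\widehat\ell_t \;\le\; \tfrac{L_t^2}{1-\gamma}\, v_t^\top \Sigma_t^{-1} v_t \;\le\; \tfrac{m^2}{\gamma(1-\gamma)}\, v_t^\top \Sigma_\mu^{-1} v_t.
\]
This is where the $2$-approximate barycentric spanner enters: writing $v_t = B\alpha$ with $\|\alpha\|_\infty\le 2$ gives $v_t^\top\Sigma_\mu^{-1}v_t = d\|\alpha\|_2^2 \le 4d^2$, so (C) is bounded, deterministically, by $m\log d/\eta + 4\eta T d^2 m^2/(\gamma(1-\gamma))$. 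This is the key place where the more careful use of the spanner replaces the loose $1/\lambda_{\min}$ factor present in prior analyses.

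\textbf{Concentration for term (D).} Both pieces of (D) are sums of martingale differences with unbounded increments. For each I would apply Freedman's inequality with (i) the deterministic increment bound $|v^\top \widehat\ell_t|\le L_t\cdot |v^\top\Sigma_t^{-1} v_t|\le 4d^2m/\gamma$, obtained by Cauchy-Schwarz applied to $v^\top\Sigma_t^{-1}v_t \le 4d^2/\gamma$, and (ii) the conditional second-moment bound $\mathbb{E}[(v^\top\widehat\ell_t)^2\mid\mathcal F_{t-1}] \le m^2\, v^\top\Sigma_t^{-1}v \le 4d^2m^2/\gamma$, using $\mathbb{E}[v_tv_t^\top\mid\mathcal F_{t-1}] = \Sigma_t$. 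The resulting high-probability bound on (D), for $v\in\{q_t\text{-average}, v^\star\}$, is $\widetilde{\mathcal O}\bigl(\sqrt{Td^2m^2/\gamma} + d^2m/\gamma\bigr)$ up to log-of-$1/\delta$ factors. The small-loss precondition $\eta\,\langle v,\widehat\ell_t\rangle\le 1$ needed for (C) follows from the same increment bound and the stated choice $\eta = 1/(4d^{4/3}m^{2/3}T^{1/3})$ once one checks $4\eta d^2 m/\gamma \le 1$, which holds for $T\ge 8d^2 m$.

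\textbf{Balancing the terms.} Combining (A)--(D) yields, with high probability,
\[
R_T \;\lesssim\; \gamma m T \;+\; \frac{m\log d}{\eta} \;+\; \frac{\eta T d^2 m^2}{\gamma} \;+\; \widetilde{\mathcal O}\!\left(m\sqrt{T}+ \sqrt{\tfrac{Td^2m^2}{\gamma}} + \tfrac{d^2m}{\gamma}\right).
\]
Substituting the stated $\gamma = d^{2/3}m^{1/3}/T^{1/3}$ and $\eta = 1/(4d^{4/3}m^{2/3}T^{1/3})$ makes the dominant three terms ($\gamma mT$, $m/\eta$ and the leading Freedman term $\sqrt{Td^2m^2/\gamma}$) all equal to $\widetilde{\mathcal O}(d^{2/3}m^{4/3}T^{2/3})$, and the remaining terms are lower order under the hypothesis $T\ge 8d^2m$. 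The main obstacle in this argument is the concentration step: controlling the heavy-tailed martingales produced by $\widehat\ell_t$ without paying a $1/\lambda_{\min}$ factor. This is precisely where the spanner-based bound $v^\top\Sigma_\mu^{-1}v\le 4d^2$ is used both for the variance of MWU and for Freedman's inequality, giving the improved dependence on the game parameters claimed in the theorem.
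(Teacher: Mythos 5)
Your overall structure (regret decomposition, MWU regret bound on the estimated losses, Bernstein/Freedman concentration, and using the $2$-approximate barycentric spanner to bound $v^\top\Sigma_\mu^{-1}v\le 4d^2$ instead of paying $1/\lambda_{\min}$) is essentially the same strategy as the paper, which follows Bartlett et al.\ [bartlett2008high] with the spanner-based eigenvalue control replacing their exploration-distribution argument. However, there is a genuine gap in how you handle the quadratic variance term in (C), and it is fatal to the claimed rate.

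You bound $\widehat\ell_t^{\,\top}\Sigma_t(q_t)\widehat\ell_t$ \emph{deterministically} by $\frac{m^2}{\gamma(1-\gamma)}\,v_t^\top\Sigma_\mu^{-1}v_t\le \frac{4d^2m^2}{\gamma(1-\gamma)}$, which after summing over $t$ and multiplying by $\eta$ gives a contribution of order $\eta T d^2 m^2/\gamma$. With the stated parameters one has $\eta/\gamma = 1/(4d^2 m)$, so this term equals $Tm/4$ — it is \emph{linear in $T$}, not $\widetilde{\mathcal O}(d^{2/3}m^{4/3}T^{2/3})$. Your final display explicitly contains this term, but your "balancing" step quietly drops it by only discussing $\gamma m T$, $m/\eta$, and $\sqrt{Td^2m^2/\gamma}$. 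The issue is that the pointwise bound $v_t^\top\Sigma_t^{-1}v_t\le 4d^2/\gamma$ is a worst case that is a factor $d/\gamma$ larger than the typical value: by the trace identity, $\mathbb{E}_t[v_t^\top\Sigma_t^{-1}v_t]=\mathrm{tr}(\Sigma_t^{-1}\Sigma_t)=d$. The paper exploits exactly this (Lemma F.2, property 4, and Lemma F.7): the leading variance contribution is $\eta\,d\,m^2T$ (which with the stated $\eta$ is $m^{4/3}T^{2/3}/(4d^{1/3})$, comfortably within the target), and the deterministic bound $4d^2/\gamma$ is used only as the range in a Bernstein-type concentration for the \emph{deviation} of $\sum_t v_t^\top\Sigma_t^{-1}v_t$ from its mean $dT$. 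To repair your argument you would need to replace the deterministic bound on the variance term with this two-step "expectation plus martingale deviation" argument, as in the paper's Lemma F.7.
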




\subsection{The Semi-Bandit Feedback Case: Kernelizing Implicit Exploration}\label{sec:semi-bandit}

Now, we discuss our second learning algorithm for polyhedral games which establishes efficient no-regret learning under semi-bandit feedback.
The main idea is similar to that of the bandit setting---that is, we utilize a loss estimator for each coordinate $j \in [d]$, which can be kernelized efficiently, and use the \textsc{Sampling} procedure to fast sample from a MWU routine using the computed kernels. 
Due to the exponentially large action set $\mathcal{V}$, one challenge here is that it is intractable to brute-force over $\mathcal{V}$ in order to compute the unconditional probabilities, $\Pr [v_t(j)=1]$, of selecting $j \in [d]$ as an active coordinate, needed to compute the standard loss estimators used in adversarial multi-armed bandits (MABs) \cite{lattimore2020bandit}.
The following observation suggests that we can kernelize such loss estimators.

\smallskip

\begin{observation}\label{obs}
\textit{Using Theorem \ref{thm:kernel}, we can compute the first moment $x_t$ (Step 7), which it turns out to provide the probabilities needed to compute the standard loss estimators used in adversarial MABs, since for any $j\in[d]$, we have that} 
$
x_t(j) := \mathbb{E}_{v \sim p_t}[\mathbbm{1}\{ v(j) = 1 \}] = \Pr[v(j) = 1]. \nonumber
$
\end{observation}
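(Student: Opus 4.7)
\medskip

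\textbf{Proof Proposal.} The plan is to reduce the claimed identity to the elementary fact that, for Bernoulli-valued random variables, the expectation coincides with the probability of taking value one, and then to invoke Theorem~\ref{thm:kernel} to exhibit how the resulting quantity is efficiently computable via kernel evaluations.

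First I would recall the structural assumption on the action set: every $v \in \mathcal{V}$ is a binary vector in $\{0,1\}^d$ with $\|v\|_1 \leq m$. In particular, for each coordinate $j \in [d]$ the random variable $v(j)$ induced by $v \sim p_t$ takes values in $\{0,1\}$, so $\mathbbm{1}\{v(j) = 1\} = v(j)$ pointwise. Taking expectations on both sides under $p_t$ yields
\[
x_t(j) \;=\; \mathbb{E}_{v \sim p_t}[\mathbbm{1}\{v(j)=1\}] \;=\; \mathbb{E}_{v \sim p_t}[v(j)] \;=\; \Pr_{v \sim p_t}[v(j) = 1],
\]
which is the claimed identity. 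This identifies the $j$-th coordinate of the first moment of $p_t$ with the marginal probability that coordinate $j$ is active, precisely the unconditional probability needed to form the standard implicit-exploration loss estimator used in adversarial multi-armed bandit routines.

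Next I would address the algorithmic content: that $x_t$ can be explicitly computed through Theorem~\ref{thm:kernel}. Since Algorithm~\ref{alg: bandit} sets $p_t = (1-\gamma)q_t + \gamma\mu$, linearity of expectation decomposes the first moment as $\mathbb{E}_{v \sim p_t}[v] = (1-\gamma)\,\mathbb{E}_{v \sim q_t}[v] + \gamma\,\mathbb{E}_{v \sim \mu}[v]$. For the MWU component $q_t$, Theorem~\ref{thm:kernel} gives $\mathbb{E}_{v \sim q_t}[v](j) = 1 - \K(C_t, \bar{e}_j)/\K(C_t, \boldsymbol{1})$, requiring only $d+1$ kernel evaluations. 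For the uniform mixture $\mu = \tfrac{1}{d}\mathbbm{1}\{v \in B\}$ over the barycentric spanner $B$, the first moment is the explicit average $\tfrac{1}{d}\sum_{v \in B} v$, computable directly without kernelization since $|B|=d$.

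Since the observation is a direct consequence of the binary-valued structure of $\mathcal{V}$ combined with the pre-established kernelization result, there is no serious obstacle in the argument; the only subtlety worth noting is to ensure that the decomposition through the mixture $p_t = (1-\gamma)q_t + \gamma\mu$ is preserved when plugging $x_t$ into the downstream loss estimator, so that the resulting estimator remains unbiased under the actual sampling distribution used by the algorithm.
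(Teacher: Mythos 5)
Your proposal is correct and takes essentially the same (immediate) route as the paper: because every $v \in \mathcal{V} \subset \{0,1\}^d$ is binary, $\mathbbm{1}\{v(j)=1\} = v(j)$ pointwise, so the coordinates of the first moment are exactly the marginal activation probabilities, and Theorem~\ref{thm:kernel} renders them computable with $d+1$ kernel evaluations. One minor remark: the observation is invoked in the semi-bandit setting (Algorithm~\ref{alg:semi_band}), where $p_t$ is the pure MWU distribution, so the mixture decomposition $p_t=(1-\gamma)q_t+\gamma\mu$ you introduce pertains to the bandit algorithm and is not needed here, though it is harmless and would be the correct extension if the marginals were required under the mixed distribution.
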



Our algorithm (Algorithm \ref{alg:semi_band}, Appendix \ref{appendix: semi_bandits}) kernelizes the \textit{implicit exploration} (IX) loss estimator, $\widetilde{\ell}_t(j) = \frac{\ell_t(j)}{x_t(j) + \gamma} \mathbbm{1}\{v_t(j)=1\}$, proposed in \cite{expIX}, which ensures sufficient exploration for each coordinate $j \in [d]$ with low variance. 
Despite the fact that the implicit exploration loss estimator is biased, it satisfies the important property that, with high probability, the aggregated estimator losses are upper bounded by the realized losses plus a factor of $\widetilde{\mathcal{O}}(1/\gamma)$. 
The following theorem shows that the proposed algorithm is no-regret.

\smallskip

\begin{theorem}[No-Regret under Semi-Bandit Feedback]\label{thm: no_regret_semi_bandit}
By setting $\gamma = \m/\sqrt{dT}$ and $\eta = {1}/{\sqrt{dT}}$, Algorithm \ref{alg:semi_band}  achieves regret ${R}_T \leq \widetilde{\mathcal{O}}(\m\sqrt{Td})$ with high probability.
\end{theorem}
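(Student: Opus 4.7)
My plan is to decompose the realized regret into four pieces and then control each one either by a martingale concentration argument or by the dedicated inequality for IX estimators due to \cite{expIX}. Writing $x_t := \mathbb{E}_{v\sim q_t}[v]$ and $\widetilde{\ell}_t$ for the implicit-exploration estimator, I would use
\begin{align*}
R_T &= \underbrace{\sum_{t=1}^T \langle v_t - x_t, \ell_t\rangle}_{\text{(I)}}
+ \underbrace{\sum_{t=1}^T \langle x_t, \ell_t - \widetilde{\ell}_t\rangle}_{\text{(II)}}
+ \underbrace{\sum_{t=1}^T \langle x_t - v^*, \widetilde{\ell}_t\rangle}_{\text{(III)}}
+ \underbrace{\sum_{t=1}^T \langle v^*, \widetilde{\ell}_t - \ell_t\rangle}_{\text{(IV)}}.
\end{align*}

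Term (I) is a martingale difference sequence with $|\langle v_t - x_t, \ell_t\rangle| \le 2\m$, so Azuma--Hoeffding yields $\text{(I)} \le \widetilde{\mathcal{O}}(\m\sqrt{T})$ with high probability. For term (II) I would observe that $\langle x_t, \widetilde{\ell}_t\rangle = \sum_j \tfrac{x_t(j)}{x_t(j)+\gamma}\,\ell_t(j)\,v_t(j) \in [0,\m]$, so the difference to $\langle x_t,\ell_t\rangle$ is bounded by $\m$; the MDS piece concentrates at rate $\widetilde{\mathcal{O}}(\m\sqrt{T})$, while the drift is $\sum_t \sum_j \tfrac{\gamma x_t(j)\ell_t(j)}{x_t(j)+\gamma} \le \gamma d T$, which by the choice $\gamma = \m/\sqrt{dT}$ is exactly $\m\sqrt{dT}$.

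For term (III) I would invoke the standard MWU analysis on the non-negative losses $\widetilde{\ell}_t$ (using $e^{-x}\le 1-x+x^2/2$ for $x\ge 0$, valid without an upper bound on the loss) to obtain
\[
\text{(III)} \le \frac{\log|\mathcal{V}|}{\eta} + \frac{\eta}{2}\sum_{t=1}^T \mathbb{E}_{v\sim q_t}\!\bigl[\langle v, \widetilde{\ell}_t\rangle^2\bigr].
\]
Since $\|v\|_1 \le \m$, Cauchy--Schwarz gives $\langle v,\widetilde{\ell}_t\rangle^2 \le \m \sum_j v(j)\widetilde{\ell}_t(j)^2$, and $\widetilde{\ell}_t(j)^2 v(j) = v(j) v_t(j) \ell_t(j)^2 /(x_t(j)+\gamma)^2$, so
\[
\mathbb{E}_{v\sim q_t}\!\bigl[\langle v,\widetilde{\ell}_t\rangle^2\bigr] \le \m \sum_j \frac{v_t(j)}{x_t(j)+\gamma} =: \m\, V_t.
\]
Each $V_t$ is bounded deterministically by $\m/\gamma$ and its conditional mean is at most $d$, so Azuma gives $\sum_t V_t \le dT + \widetilde{\mathcal{O}}(\m\sqrt{T}/\gamma)$ with high probability. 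Using $\log|\mathcal{V}| \le \m\log d$ together with the stated choices of $\eta,\gamma$ collapses everything to $\widetilde{\mathcal{O}}(\m\sqrt{dT})$.

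The main obstacle is term (IV), where I need the bias of the IX estimator to be controlled with high probability rather than just in expectation. The key tool is the exponential-moment inequality of \cite{expIX}: for any $\mathcal{F}_{t-1}$-measurable sequence $\alpha_t\in[0,2\gamma]^d$, one has $\mathbb{E}[\exp(\sum_t\langle \alpha_t, \widetilde{\ell}_t - \ell_t\rangle)]\le 1$. Applying Markov with $\alpha_t = 2\gamma v^*$ for each fixed $v^*\in\mathcal{V}$ and union-bounding over $|\mathcal{V}|\le d^{\m}$ actions yields $\text{(IV)} \le \frac{m\log(d/\delta)}{2\gamma} = \widetilde{\mathcal{O}}(\sqrt{dT})$ with high probability. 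Summing the four contributions gives $R_T \le \widetilde{\mathcal{O}}(\m\sqrt{dT})$ w.h.p., as claimed.
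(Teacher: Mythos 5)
Your proposal is correct, but it takes a genuinely different route from the paper's proof. The paper runs the MWU potential argument directly on the IX-estimated losses: it expands $\log(W_T/W_0)$ via $e^{x}\le 1+x+x^2$ (which requires $\eta\widetilde{L}_t(v)\le 1$, enforced through $\eta m\le\gamma$), and then exploits the pathwise identity $\sum_{v}p_t(v)\langle v,\widetilde{\ell}_t\rangle=\sum_{j\in v_t}\frac{x_t(j)}{x_t(j)+\gamma}\ell_t(j)=L_t(v_t)-\gamma\sum_{j\in v_t}\widetilde{\ell}_t(j)$, so the \emph{realized} loss appears without any martingale concentration; high probability enters only through Lemma \ref{lem: expIX} (Corollary 1 of \cite{expIX}), applied once to the at most $m$ coordinates of the comparator $v^*$ and once to $\sum_{j\in[d]}\sum_t\widetilde{\ell}_t(j)$ to absorb the $(\eta m+\gamma)$-weighted estimator mass. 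You instead use a four-term decomposition, control (I), (II) and the second-moment sum in (III) by Azuma, use the refined inequality $e^{-x}\le 1-x+x^2/2$ for $x\ge 0$ so that no boundedness condition on $\eta\langle v,\widetilde{\ell}_t\rangle$ is needed, and handle the comparator bias (IV) with Neu's exponential-moment lemma plus a union bound over $|\mathcal{V}|\le d^{m}$ (the paper's per-coordinate corollary summed over the coordinates of $v^*$ gives the same $\frac{m\log(d/\delta)}{2\gamma}$). The trade-offs: your argument is more modular and removes the step-size constraint $\eta m\le\gamma$ (which the stated parameters satisfy only with equality), at the cost of three extra concentration steps and the attendant lower-order $\widetilde{\mathcal{O}}(m\sqrt{T})$ terms; the paper's argument is leaner because the IX identity makes your terms (I)--(II) appear deterministically as $\gamma\sum_{j\in v_t}\widetilde{\ell}_t(j)$ and handles them with the same IX concentration bound. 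Both routes yield $R_T\le\widetilde{\mathcal{O}}(m\sqrt{dT})$ with the stated $\gamma=m/\sqrt{dT}$ and $\eta=1/\sqrt{dT}$.
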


\section{Efficient Kernelization in Colonel Blotto Games (CBGs)}\label{sec:blotto}

We consider the setting proposed in \cite{duels1} for the multiplayer Colonel Blotto game \cite{multiplayerBlotto}. 
Each player $i \in [|\mathcal{P}|]$ must allocate $n$ soldiers among $k$ battlefields. 
Let variable \( s_{i,h} \) denote the number of soldiers allocated by the \( i \)-th player to the \( h \)-th battlefield. 
Given the soldier assignments of all players, a per-battlefield loss is defined for each player $i \in [|P|]$, and 
and the incurred cost of player $i$ is given by the sum of her per-battlefield losses.

\textbf{$\boldsymbol{\Theta(n\k)}$- representation.}\label{one-hot}
We aim to find succinct vector representations of each player's actions and the loss. 
One challenge here is that we need the action and loss representations to satisfy the definition of polyhedral games---
that is, the cost of each action, given the actions of other players, must equal the dot product of the action and loss representations. One such representation is through the \textit{layered graph} \cite{simpler_solutions_blotto} (also used in \cite{vu2019combinatorial, blotto_aaai_2020, leon2021bandit}), which implies a representation dimensionality of $\Theta(n^2\k)$ that can be a bottleneck for efficiently learning CCE, as shown in Table \ref{table: blotto}. 

Without loss of generality, we focus on player $i$ and drop the subscript $i$. 
We use the notation $[b]_0 = \{0, 1, ..., b\}$ for $b \in \mathbb{N}$. Let $d = (n+1) \k$. For any action $a \in \mathcal{A}$, we consider its \textit{succinct representation} $v \in \mathcal{V} \subset \{0,1\}^d$ such that 
for all $h \in [\k]$ and $s \in [n]_0$, $v[h, s] = 1$ iff $a$ assigns $s$ soldiers to the $h$-th battlefield.
We similarly define the representation of the loss $\ell$, such that for all $h \in [\k]$ and $s \in [n]_0$, $\ell[h,s]$ is the $h$-th battlefield loss observed when assigning $s$ soldiers to the $h$-th battlefield, given the assignments of the other players in $h$. 



\smallskip

\begin{remark}\label{rem: blotto}
    \textit{Although the $\Theta(nk)$-representation is more straightforward to design and more succinct than the $\Theta(n^2k)$-graph-representation \cite{blotto_aaai_2020}, it is not known how to derive a polytope description in the form of a small number of linear inequalities with the pure actions as corners---thus  common techniques, such as Carathéodory decomposition (e.g., \cite{combes2015combinatorialRevisit}) and barrier methods (e.g. \cite{bias_no_more, zimmert2022returnLattimore}) cannot be used. 
    Kernelization overcomes this obstacle by operating directly on the game's geometry.}
\end{remark}

\paragraph{Kernelization.} With the succinct representation established above, we are now ready to describe how fast kernel computations are achieved in Colonel Blotto games.

Given weight vectors $x, y \in \{0,1\}^d$ we define the polynomial $P_{x,y}(z)$ as follows:
\begin{align}
P_{x,y}(z):=\prod_{h=1}^{\k}\sum_{s=0}^n x[h,s]y[h,s]z^s. \label{polynomial}
\end{align}
The key point is that the $n$-th coefficient of $z$ in $P_{x,y}(z)$ generates kernel $\K(x,y)$. 
The main idea is as follows. To compute the $n$-th coefficient of  \ref{polynomial}, we execute a running product over the factors of the polynomial. This process involves $k$ updates of the partial product. After each update, the partial product is truncated down to degree $n$. Thus, inductively we ensure that all $k$ multiplications involve polynomials of degree at most $n$. 
Based on the above, we derive the following proposition.

\smallskip

\begin{proposition}\label{prop: kernel}
    For given $x,y \in \{0,1\}^d$, kernel $\K(x,y)$ can be computed in time $\mathcal{O}(n\k\log n)$. 
\end{proposition}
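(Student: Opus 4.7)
The plan is to show that $\K(x,y)$ is precisely the coefficient of $z^n$ in the polynomial $P_{x,y}(z)$ defined in \eqref{polynomial}, and then to argue that this coefficient can be extracted in time $\mathcal{O}(nk\log n)$ via an iterated polynomial-multiplication procedure with degree truncation.

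First I would make the combinatorial identification explicit. Recall that, under the succinct representation, every action $v\in\mathcal{V}$ corresponds bijectively to a tuple $(s_1,\dots,s_k)\in[n]_0^k$ with $\sum_{h=1}^k s_h=n$, where $v[h,s]=1$ iff $s_h=s$. Substituting this into the definition of the kernel,
\[
\K(x,y)=\sum_{v\in\mathcal{V}}\prod_{h=1}^k x[h,s_h]\,y[h,s_h]
=\sum_{\substack{(s_1,\dots,s_k)\in[n]_0^k\\ s_1+\cdots+s_k=n}}\prod_{h=1}^k x[h,s_h]\,y[h,s_h].
\]
Expanding $P_{x,y}(z)=\prod_{h=1}^k\bigl(\sum_{s=0}^n x[h,s]y[h,s]z^s\bigr)$ distributively, each monomial $\prod_h x[h,s_h]y[h,s_h]\, z^{s_1+\cdots+s_k}$ contributes to the coefficient of $z^n$ exactly when the soldier-count constraint $\sum_h s_h=n$ is met. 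Therefore $[z^n]\,P_{x,y}(z)=\K(x,y)$.

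Next I would describe the algorithm. Let $A_h(z):=\sum_{s=0}^n x[h,s]y[h,s]z^s$, which has degree at most $n$, and define the truncated partial products $Q_1:=A_1$ and $Q_h:=\operatorname{trunc}_n(Q_{h-1}\cdot A_h)$, where $\operatorname{trunc}_n$ discards all monomials of degree strictly greater than $n$. The truncation is lossless for our purposes: all factors $A_h$ have only non-negative powers of $z$, so any intermediate monomial of degree $>n$ can only combine with non-negative-degree terms in the remaining factors to yield a final degree $>n$, which cannot contribute to $[z^n]\,P_{x,y}(z)$. Consequently $[z^n]\,Q_k=[z^n]\,P_{x,y}(z)=\K(x,y)$.

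Finally I would bound the cost. By induction, each $Q_h$ has degree at most $n$, so computing $Q_h$ from $Q_{h-1}$ and $A_h$ amounts to multiplying two polynomials of degree at most $n$ — an $\mathcal{O}(n\log n)$ operation via FFT — followed by an $\mathcal{O}(n)$ truncation. Since there are $k$ such updates, the total runtime is $\mathcal{O}(nk\log n)$, which also dominates the cost of the initial scan to form the $A_h$. The only mildly subtle point is the truncation justification, but as noted above this is immediate from the non-negativity of exponents in each $A_h$; no further obstacles arise.
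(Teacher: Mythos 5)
Your proposal is correct and follows essentially the same route as the paper: identify $\K(x,y)$ with the coefficient of $z^n$ in $P_{x,y}(z)$, then compute it by a running product over the $k$ factors with truncation to degree $n$ after each step, using FFT for each $\mathcal{O}(n\log n)$ multiplication. Your added justifications (the bijection between actions and tuples summing to $n$, and the harmlessness of truncation due to non-negative exponents) are fine and simply make explicit what the paper leaves implicit.
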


To construct the loss estimators of the proposed algorithms, we need to compute $d$ kernels $\K(C_t,\bar{e}_{j})$, for $j\in[d]$, for the semi-bandit setting and $d^2$ kernels $\K(C_t,\bar{e}_{j,j'})$, for $j,j'\in[d]$, for the bandit setting, as well as the kernel $\K(C_t,\textbf{1})$. A naive approach is to use Proposition \ref{prop: kernel} separately for each kernel, resulting in a total kernel computation time $\mathcal{O}(n^3\k^3 \log n)$ for the bandit setting and $\mathcal{O}(n^2\k^2 \log n)$ for the semi-bandit. 

We provide two algorithms, namely Algorithm \ref{alg:blotto_kernel} and \ref{alg:blotto_kernel_2} (Appendix \ref{appendix: algo_kernel_blotto}), which speed up the process of computing all required kernels by leveraging the above ideas and appropriate precomputing.    

\smallskip

\begin{lemma}\label{thm: blotto_time_1}
    At each round $t \in [T]$, all kernels $\K(C_t,\bar{e}_{j})$, for $j\in[d]$, can be computed in  time $\mathcal{O}(n\k\log n)$. Moreover, all kernels $\K(C_t,\bar{e}_{j,j'})$, for $j,j'\in[d]$, can be computed in time ${O}(n^2\k^2)$.
\end{lemma}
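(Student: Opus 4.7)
The plan is to view every kernel as the $n$-th coefficient of a product polynomial and then reduce the computation to controlling how few factors change. Define $f_h(z) := \sum_{s=0}^{n} C_t[h,s]\, z^s$, so that by Proposition~\ref{prop: kernel}, $\K(C_t, y) = [z^n] \prod_{h=1}^{\k}\sum_{s=0}^{n} C_t[h,s]\, y[h,s]\, z^s$ for any binary $y \in \{0,1\}^d$. The key observation is that $\bar{e}_j$ (resp.\ $\bar{e}_{j,j'}$) agrees with $\mathbf{1}$ except at coordinate $j = (h^*, s^*)$ (resp.\ also at $j' = (h^{**}, s^{**})$); consequently the corresponding product polynomial differs from $P_{C_t,\mathbf{1}}(z)$ in only one or two of its $\k$ factors. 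Leave-one-out and leave-two-out products over $\{f_h\}$ therefore suffice to recover all desired kernels.

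For the first claim, I would incrementally build the prefix products $L_h(z) := \prod_{h' < h} f_{h'}(z)$ and suffix products $R_h(z) := \prod_{h' > h} f_{h'}(z)$, each truncated modulo $z^{n+1}$; using FFT this costs $\mathcal{O}(n \log n)$ per multiplication and $\mathcal{O}(n\k \log n)$ overall. The leave-one-out polynomials $Q_h(z) := L_h(z)\, R_h(z) \bmod z^{n+1}$ are then obtained in another $\mathcal{O}(n\k \log n)$ time. For any $j = (h^*, s^*)$, substituting the altered factor $f_{h^*}(z) - C_t[h^*, s^*]\, z^{s^*}$ yields
\[
\K(C_t, \bar{e}_j) \;=\; \K(C_t, \mathbf{1}) \;-\; C_t[h^*, s^*]\cdot [z^{n-s^*}]\, Q_{h^*}(z),
\]
an $\mathcal{O}(1)$ evaluation per kernel, so producing all $d = (n+1)\k$ kernels costs only $\mathcal{O}(n\k)$ additional time, matching the claimed $\mathcal{O}(n\k \log n)$ bound.

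For the second claim, I would split into cases based on whether $j, j'$ sit in the same battlefield. If $h^* = h^{**}$, only one factor is perturbed (by two monomials), so an analogous $\mathcal{O}(1)$ formula based on $Q_{h^*}$ handles all $(n+1)^2\k$ such pairs in $\mathcal{O}(n^2 \k)$ time total. If $h^* \neq h^{**}$, expanding $(f_{h^*} - \alpha z^{s^*})(f_{h^{**}} - \beta z^{s^{**}})\,Q_{h^*,h^{**}}(z)$ with $\alpha := C_t[h^*,s^*]$, $\beta := C_t[h^{**}, s^{**}]$, and $Q_{h^*,h^{**}}(z) := \prod_{h \notin \{h^*, h^{**}\}} f_h(z) \bmod z^{n+1}$, and using $f_{h^{**}} Q_{h^*,h^{**}} = Q_{h^*}$ and $f_{h^*} Q_{h^*,h^{**}} = Q_{h^{**}}$, gives
\[
\K(C_t, \bar{e}_{j,j'}) \;=\; \K(C_t,\mathbf{1}) - \alpha\, [z^{n-s^*}] Q_{h^*}(z) - \beta\, [z^{n-s^{**}}] Q_{h^{**}}(z) + \alpha\beta\, [z^{n-s^*-s^{**}}] Q_{h^*,h^{**}}(z),
\]
so once all $Q_{h^*,h^{**}}$ are in hand, every such kernel is an $\mathcal{O}(1)$ lookup.

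The main obstacle is producing the $\Theta(\k^2)$ leave-two-out polynomials $Q_{h^*,h^{**}}$ within the stated $\mathcal{O}(n^2 \k^2)$ budget. My plan is a sweep: for each fixed $h^*$, iterate $h^{**}$ through the remaining indices while maintaining the running prefix $\prod_{h \leq h^{**},\, h \neq h^*} f_h(z) \bmod z^{n+1}$, then combine with the precomputed $R_{h^{**}}(z)$ via one further multiplication. Each step is a polynomial multiplication of two degree-$\leq n$ polynomials, which naive convolution executes in $\mathcal{O}(n^2)$; across the $\Theta(\k^2)$ ordered pairs this totals $\mathcal{O}(n^2 \k^2)$. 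Since the subsequent $d^2 = \Theta(n^2 \k^2)$ kernel values are then extracted by $\mathcal{O}(1)$ coefficient lookups, the overall running time for this case is also $\mathcal{O}(n^2\k^2)$, completing the proof.
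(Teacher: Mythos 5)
Your proof is correct, and for the first claim it is essentially identical to the paper's Algorithm \ref{alg:blotto_kernel}: truncated prefix/suffix products via FFT, leave-one-out polynomials, and the extraction $\K(C_t,\bar e_{h,s})=\K(C_t,\mathbf 1)-C_t[h,s]\cdot[z^{n-s}]Q_h(z)$. For the second claim your organization differs somewhat from the paper's Algorithm \ref{alg:blotto_kernel_2}: the paper precomputes all interval products $P_{int}[h,h']$ and then, for every triple $(h,h',s)$, performs fresh truncated polynomial multiplications (substituting the modified factor $f_h-C_t[h,s]z^{s}$) before sweeping over $s'$, whereas you derive a closed-form inclusion--exclusion identity
$\K(C_t,\bar e_{j,j'})=\K(C_t,\mathbf 1)-\alpha[z^{n-s^*}]Q_{h^*}-\beta[z^{n-s^{**}}]Q_{h^{**}}+\alpha\beta[z^{n-s^*-s^{**}}]Q_{h^*,h^{**}}$, so that only the $\Theta(\k^2)$ leave-two-out products need be formed and every one of the $d^2$ kernels becomes an $\mathcal O(1)$ lookup. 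This buys a genuine (if minor) improvement: the paper's scheme uses $\Theta(n\k^2)$ FFT multiplications, i.e.\ $\mathcal O(n^2\k^2\log n)$, while yours needs only $\Theta(\k^2)$ multiplications plus $\Theta(n^2\k^2)$ lookups and thus meets the stated $\mathcal O(n^2\k^2)$ bound without the logarithmic overhead (and could be pushed to $\mathcal O(n\k^2\log n+n^2\k^2)$ with FFT). The only details worth spelling out, both trivial, are the diagonal/same-battlefield case (when $s=s'$ one must subtract the monomial only once, as the paper does via the indicator $\mathbbm{1}\{s\neq s'\}$) and the convention that $[z^{n-s^*-s^{**}}]$ is zero when $s^*+s^{**}>n$.
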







Combining the above with the exact sampling procedure provided in \cite{beaglehole2023sampling} (see Algorithm \ref{alg:sampling_blotto}, Appendix \ref{appendix: proof_blotto}), based on which we can calculate the required kernels of our \textsc{Sampling} procedure in time ${\mathcal{O}}(n\k\log n)$, the per-iteration complexities for the bandit and semi-bandit are ${\mathcal{O}}(n^{\omega}\k^{\omega}\log n)$ and ${\mathcal{O}}(n\k\log n)$, respectively. 
Putting everything together, we derive the following main result.

\smallskip

\begin{theorem}[Runtime to learn $\varepsilon$-CCE]\label{thm: blotto}
In a Colonel Blotto game,
under bandit feedback, if all players adopt Algorithm \ref{alg: bandit}, then the total runtime for finding an $\varepsilon$-CCE, with high probability, is $\widetilde{\mathcal{O}}(n^{2+\omega}\k^{6+\omega}/\epsilon^3)$. Under semi-bandit feedback, if all players adopt Algorithm \ref{alg:semi_band} (Appendix \ref{appendix: semi_bandits}), then the total runtime for finding an $\varepsilon$-CCE, with high probability, is $\widetilde{\mathcal{O}}(n^2\k^4/\epsilon^2)$.
\end{theorem}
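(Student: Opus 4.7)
The plan is to combine three ingredients: (i) Theorem~\ref{thm: folklore}, which translates an $R_T$ regret bound into an $R_T/T$-approximate CCE when all players run the same no-regret algorithm, (ii) the high-probability realized-regret bounds of Theorems~\ref{thm: no_regret_bandit} and \ref{thm: no_regret_semi_bandit}, and (iii) the per-iteration complexity of Algorithms~\ref{alg: bandit} and \ref{alg:semi_band} once the Colonel Blotto kernels are implemented via Lemma~\ref{thm: blotto_time_1}. First I would instantiate the dimensionality of the $\Theta(nk)$-representation: $d = (n+1)k = \Theta(nk)$ and $m = k$, since each action places exactly one ``$1$'' per battlefield. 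A union bound over the $|\mathcal{P}|$ players preserves the high-probability guarantee at the cost of a $\log|\mathcal{P}|$ factor that is absorbed in the $\widetilde{\mathcal{O}}$ notation.

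For the bandit case, plugging the above into Theorem~\ref{thm: no_regret_bandit} yields realized regret $\widetilde{\mathcal{O}}(d^{2/3}m^{4/3}T^{2/3}) = \widetilde{\mathcal{O}}(n^{2/3}k^2 T^{2/3})$. Setting this to at most $\varepsilon T$ and solving for $T$ gives the iteration count $T = \widetilde{\mathcal{O}}(n^2 k^6/\varepsilon^3)$. Next I would account for the per-iteration cost of Algorithm~\ref{alg: bandit}: (a) the first-moment kernels $\K(C_t,\mathbf{1})$ and $\{\K(C_t,\bar{e}_j)\}$ cost $\widetilde{\mathcal{O}}(nk)$ by Lemma~\ref{thm: blotto_time_1}; (b) all $d^2$ second-moment kernels $\{\K(C_t,\bar{e}_{j,j'})\}$ cost $\mathcal{O}(n^2 k^2)$ by the same lemma; (c) assembling and inverting the $d\times d$ matrix $\Sigma_t$ in order to produce $\widehat{\ell}_t = L_t \Sigma_t^{-1} v_t$ costs $\mathcal{O}(d^{\omega}) = \mathcal{O}(n^{\omega}k^{\omega})$; and (d) the exact sampling of $v_t$ via the \textsc{Sampling} procedure requires $d$ additional conditional kernel computations, each executable in $\widetilde{\mathcal{O}}(nk)$ time via the method of \cite{beaglehole2023sampling} referenced in the paragraph preceding the statement. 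Summing these dominant terms gives a per-iteration cost of $\widetilde{\mathcal{O}}(n^{\omega}k^{\omega})$, and multiplying by $T$ yields the claimed $\widetilde{\mathcal{O}}(n^{2+\omega}k^{6+\omega}/\varepsilon^3)$ runtime. The one-time cost of the $2$-approximate barycentric spanner $B$ (Step~1) is subsumed by $T$ iterations of the main loop, so no separate accounting is needed.

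For the semi-bandit case, Theorem~\ref{thm: no_regret_semi_bandit} gives regret $\widetilde{\mathcal{O}}(m\sqrt{Td}) = \widetilde{\mathcal{O}}(k^{3/2}\sqrt{nT})$, which, when equated with $\varepsilon T$, yields $T = \widetilde{\mathcal{O}}(nk^3/\varepsilon^2)$. Algorithm~\ref{alg:semi_band} only requires the first-moment kernels for both the implicit-exploration estimator $\widetilde{\ell}_t(j) = \ell_t(j)\mathbbm{1}\{v_t(j)=1\}/(x_t(j)+\gamma)$ and for the \textsc{Sampling} procedure, so by Lemma~\ref{thm: blotto_time_1} the per-iteration cost collapses to $\widetilde{\mathcal{O}}(nk)$. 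Multiplying $T$ by the per-iteration cost gives the claimed $\widetilde{\mathcal{O}}(n^2 k^4/\varepsilon^2)$ bound.

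The main technical obstacle is the bookkeeping in the bandit case: one has to verify that the matrix inversion of $\Sigma_t$ remains the dominant non-sampling cost (hence the $n^{\omega}k^{\omega}$ factor) rather than the $d^2$ second-moment kernel evaluations, and that the mixing with $\frac{\gamma}{d}BB^{T}$ does not alter the kernelization pipeline---the spanner $B$ contributes an additive, rank-at-most-$d$ correction that can be folded into $\Sigma_t$ at cost $\mathcal{O}(d^2)$ and is thus harmless. Everything else is a direct multiplication of the regret-to-iterations bound with the per-iteration cost just derived.
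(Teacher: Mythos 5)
Your overall route is the same as the paper's: instantiate $d=\Theta(nk)$, $m=k$ in the high-probability regret bounds of Theorems~\ref{thm: no_regret_bandit} and~\ref{thm: no_regret_semi_bandit}, convert regret to iteration counts via Theorem~\ref{thm: folklore}, and multiply by the kernelized per-iteration cost from Lemma~\ref{thm: blotto_time_1}; your iteration counts $T=\widetilde{\mathcal{O}}(n^2k^6/\varepsilon^3)$ and $T=\widetilde{\mathcal{O}}(nk^3/\varepsilon^2)$ and the bandit per-iteration bound $\widetilde{\mathcal{O}}(n^{\omega}k^{\omega})$ (dominated by inverting $\Sigma_t$) match the paper, and the spanner/union-bound remarks are fine.

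There is, however, a gap in your semi-bandit accounting of the \textsc{Sampling} step. Lemma~\ref{thm: blotto_time_1} only bounds the cost of kernels over the \emph{full} action set $\mathcal{V}$, i.e.\ $\K(C_t,\mathbf{1})$ and $\K(C_t,\bar e_j)$; it says nothing about the conditional kernels $K_{\mathcal{V}(j)}(C_t,\cdot)$ over the conditional polytopes $\mathcal{V}(j)$ that the \textsc{Sampling} procedure of Algorithm~\ref{alg:semi_band} requires, so your claim that ``by Lemma~\ref{thm: blotto_time_1} the per-iteration cost collapses to $\widetilde{\mathcal{O}}(nk)$'' is not justified as written. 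Worse, it contradicts your own bandit-side accounting: there you price the sampler at $d=\Theta(nk)$ conditional kernels at $\widetilde{\mathcal{O}}(nk)$ each, i.e.\ $\widetilde{\mathcal{O}}(n^2k^2)$ per iteration. That is harmless in the bandit case (it is dominated by $n^{\omega}k^{\omega}$), but if applied consistently to the semi-bandit algorithm it would give a per-iteration cost of $\widetilde{\mathcal{O}}(n^2k^2)$ and a total runtime of $\widetilde{\mathcal{O}}(n^3k^5/\varepsilon^2)$, which misses the claimed $\widetilde{\mathcal{O}}(n^2k^4/\varepsilon^2)$. The missing ingredient is exactly what the paper invokes: the Blotto-specific exact sampler of \cite{beaglehole2023sampling} (Algorithm~\ref{alg:sampling_blotto}), which samples battlefield by battlefield and computes all the needed conditional probabilities (equivalently, all conditional-polytope kernels) via FFT-computed partition functions in \emph{total} time $\mathcal{O}(nk\log n)$ per iteration, not $\widetilde{\mathcal{O}}(nk)$ per coordinate. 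With that substitution the semi-bandit per-iteration cost is indeed $\widetilde{\mathcal{O}}(nk)$ and your final bounds go through.
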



\begin{remark}
\textit{If the Colonel Blotto game is two-player zero-sum (a more traditional setting which has received much attention \cite{papadimitriou_blotto, papadimitriou_blotto_2, duels1}), then our algorithm learns an $\epsilon$-Nash equilibrium.}
\end{remark}




\section{Efficient Kernelization in Graphic Matroid Congestion Games (GMCGs)}
In a graphic matroid congestion game (GMCG), players compete for the edges of a connected undirected graph $ G = (V, E) $, with the actions of each player being spanning trees in $G$ \cite{werneck2000finding,ackermann2008impact,fokkema2024price}. We use the incidence vector representation of actions $v \in \{0,1\}^{|E|}$ and denote by $\mathcal{V}$ the set of all these incidence vectors. 
Given an action profile $(v_i, v_{-i})$, the total loss of player $i$ is the sum of the losses of the selected edges of $v_i$. 
Typically the cost of each edge is equal to the number of players using it but our framework can also handle arbitrary edge cost functions.
Next, we will show how to efficiently compute the required kernels and perform efficient sampling in GMCGs. 

\paragraph{Kernelization.} 
Given an edge weight vector $C\in \mathbb{R}^{|E|}$ to compute the kernel $\K(C, \boldsymbol{1})$, we make use of the \textit{weighted Matrix-Tree Theorem} \cite{tutte2001graph,lyons2017probability}, which states that the
value of $\sum_{T \in \mathcal{V}}\prod_{e \in T}C(e)$ equals the value of a cofactor of the weighted Laplacian $A$ of the graph, where $A_{u,u}=\sum_{e' \in E \text{ incident to } u} C({e'})$ and $A_{u,v}=-C(e) \cdot \mathbbm{1}\{e\in E\}$ for $u \neq v$ and edge $e=(u,v)$.

A naive approach is to use the Matrix-Tree Theorem  for each kernel separately, taking total time $\mathcal{O}(|E|^2|V|^{\omega})$ for the kernel computations in the bandit and $\mathcal{O}(|E||V|^{\omega})$ in the semi-bandit setting. 

We provide an algorithm (see Appendix \ref{appendix: matroids}) that reduces the amortized time per kernel computation. 
Notably, the Matrix-Tree Theorem holds for any cofactor of the Laplacian matrix.
We leverage this property by making a strategic choice of which row and column to delete. 
For each edge $j\in[|E|]$ consider the Laplacian used for the computation of the  kernel $\K(C,\bar{e}_{j})$. 
The Laplacian for this kernel is constructed in the same way as the one we described above for $\K(C,\textbf{1})$ but with the difference that $C(j)$ is set to zero. The main idea is that for each node $v\in V$, we can precompute the LU decomposition of $A_{-v,-v}$, that is the submatrix of $A$ derived by deleting row $v$ and column $v$, and then for each $j =(u,u') \in E$, we can fast compute kernel $K(C, \bar{e}_j)$ by computing the determinant of that kernel's Laplacian via recursive LU updating \cite{stange2006efficient} in $\mathcal{O}(|V|^2)$.
The key point in this analysis is that we can always select a submatrix of the kernel's Laplacian that only differs in one element than $A_{-u,-u}$. 
Similar arguments can also be used for fast computing $\K(C, \bar{e}_{j,j'})$.

\paragraph{Sampling.} We provide an efficient implementation of the \textsc{Sampling} procedure of Algorithm \ref{alg: bandit} for GMCGs (see Appendix \ref{appendix: matroids}, Algorithm 8).
Our approach is based on an iterative kernelization process where we sample each coordinate incrementally.
The challenge here is how to perform kernelization on the conditional action set $\mathcal{V}(j)$, for $j \in [|E|]$, induced by the so far sampled coordinates up to $j$. Interestingly, $\mathcal{V}(j)$ operates on an underlying multi-graph.
The main idea is to transform this multi-graph into a meta-graph, where a meta-node merges the nodes of the $j$-th edge of the initial graph and a meta-edge accumulates the weights of parallel edges connecting the same nodes.
First, we show that it suffices to perform kernelization on the meta-graph (Proposition \ref{prop: induction}) and, based on that, we show that our approach is efficient via an induction argument.
Importantly, we derive the following lemma.

\smallskip

\begin{lemma}\label{lem: matroids}
    At each round $t \in [T]$, all kernels $\K(C_t,\bar{e}_{j})$, for $j\in[|E|]$, can be computed in  time $\mathcal{O}(|V|^{\omega+1}+|E||V|^2)$ and all kernels $\K(C_t,\bar{e}_{j,j'})$, for $j,j'\in[d]$, can be computed in time $\mathcal{O}(|E||V|^{\omega+1}+|E|^2|V|^2)$. Moreover, $\textsc{Sampling}(\mathcal{V}, C_t)$ can be implemented in time $\mathcal{O}(|E||V|^{\omega})$.
\end{lemma}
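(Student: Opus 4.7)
The plan is to invoke the weighted Matrix-Tree Theorem to express each kernel as the determinant of a cofactor of the weighted Laplacian $A(C_t)$ of $G$, where $A(C_t)_{u,u} = \sum_{e \ni u} C_t(e)$ and $A(C_t)_{u,v} = -C_t(e)\mathbbm{1}\{e=(u,v)\in E\}$. Since the value of a principal cofactor obtained by deleting one row and its corresponding column is independent of the choice of vertex, we have the freedom to pick, for each kernel, the deletion point that minimizes the number of entries in which the associated cofactor differs from a precomputed pivot matrix. The strategy is then to precompute a bank of LU decompositions and use fast rank-1 updates (via Stange's recursive LU update scheme) to absorb these localized differences in $\mathcal{O}(|V|^2)$ each.

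For the single-index kernels $\K(C_t, \bar{e}_j)$ with $j = (u,u') \in E$, note that the Laplacian $A^{(j)}$ coincides with $A(C_t)$ except that $C_t(j)$ is reset to zero; this alters the four entries at positions $(u,u)$, $(u',u')$, $(u,u')$ and $(u',u)$. Deleting row/column $u$ from $A^{(j)}$ removes three of these changes, so $A^{(j)}_{-u,-u}$ differs from $A(C_t)_{-u,-u}$ by a single diagonal perturbation at $(u',u')$, which is a rank-1 modification. I would therefore precompute the LU decompositions of $A(C_t)_{-v,-v}$ for every $v \in V$ in total time $\mathcal{O}(|V|^{\omega+1})$, and then for each edge $j = (u,u')$ perform one rank-1 LU update in $\mathcal{O}(|V|^2)$ to recover the determinant and hence $\K(C_t, \bar{e}_j)$; summing yields the stated $\mathcal{O}(|V|^{\omega+1} + |E||V|^2)$ bound. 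For the double-index kernels $\K(C_t, \bar{e}_{j,j'})$, I would fix $j$, construct the modified weight vector with $C_t(j) = 0$, rerun the above single-index subroutine on this modified Laplacian at cost $\mathcal{O}(|V|^{\omega+1} + |E||V|^2)$, and sum over the $|E|$ choices of $j$, giving $\mathcal{O}(|E||V|^{\omega+1} + |E|^2|V|^2)$ as claimed.

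For the sampling bound, the plan is to realize each conditional probability $p_j$ in the \textsc{Sampling} procedure as a ratio of two kernels over a multigraph that evolves with the previously sampled coordinates: if the $i$-th edge was sampled to $1$, contract it; if it was sampled to $0$, delete it. The resulting multigraph may contain parallel edges, which I would collapse into a meta-graph whose meta-edges carry the sum of the weights of the merged parallel edges. The first step is to verify the faithfulness claim (essentially the proposition referenced in the paragraph preceding the lemma): the sum of weighted spanning trees of the multigraph equals the corresponding kernel on the meta-graph, because in any spanning tree of a multigraph at most one edge from each parallel class is used, so the total weighted contribution of the class reduces exactly to its merged weight. With this in hand, each of the $\mathcal{O}(|E|)$ sampling iterations requires computing two Matrix-Tree determinants on a meta-Laplacian of size at most $|V|$, costing $\mathcal{O}(|V|^{\omega})$ each and $\mathcal{O}(|E||V|^{\omega})$ in total.

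The main obstacle I anticipate is the second of these: carefully tracking the multigraph-to-meta-graph reduction so that the weights remain correct under the interleaving of contractions and deletions and so that the kernel ratio on the meta-graph truly equals the conditional marginal on $\mathcal{V}(j)$; one must in particular handle self-loops created by contraction (which contribute nothing to any spanning tree and should be dropped) and confirm that the Matrix-Tree determinant on the shrunk Laplacian indeed yields the conditional kernel. A secondary bookkeeping obstacle is ensuring that the per-vertex LU bank used for the single-index kernels is organized so that, for every edge $j$, at least one endpoint's precomputed factorization is available to absorb the single-entry perturbation in $\mathcal{O}(|V|^2)$ without re-factorization.
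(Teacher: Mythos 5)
Your proposal is correct and follows essentially the same route as the paper's proof: for the single- and double-index kernels you invoke the weighted Matrix-Tree Theorem, precompute the per-vertex LU bank of the minors $A_{-v,-v}$ in $\mathcal{O}(|V|^{\omega+1})$, and then observe (as the paper does) that deleting a row/column corresponding to an endpoint of $j=(u,u')$ isolates the perturbation to a single diagonal entry, so the rank-1 LU update of \cite{stange2006efficient} recovers each determinant in $\mathcal{O}(|V|^2)$; you then iterate this over the $|E|$ choices of the first index for the pairwise kernels, exactly as the paper's terse ``similar arguments'' intends. For \textsc{Sampling} you reproduce the contraction/deletion meta-graph, justify the faithfulness of merging parallel edges by the same spanning-tree counting identity as the paper's Proposition~F.1, and bound the cost by one Matrix-Tree determinant of size $|V|$ per coordinate, matching the $\mathcal{O}(|E||V|^{\omega})$ claim; your remark about discarding self-loops produced by contraction is a correct detail that the paper's pseudocode glosses over but does not affect the argument.
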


\noindent
Putting everything together, we derive the following main result.

\smallskip

\begin{theorem}[Runtime to learn $\varepsilon$-CCE]\label{thm: matroids}
In a graphic matroid congestion game,
under bandit feedback, if all players adopt Algorithm \ref{alg: bandit}, then the total runtime for finding an $\varepsilon$-CCE, with high probability, is $\widetilde{\mathcal{O}}\big({|E|^3|V|^6(|V|^{\omega-1}+|E|)}/{\varepsilon^3}\big)$. Under semi-bandit feedback, if all players adopt Algorithm \ref{alg:semi_band} (Appendix \ref{appendix: semi_bandits}), then the total runtime for finding an $\varepsilon$-CCE, with high probability, is $\widetilde{\mathcal{O}}(|E|^2|V|^{2+\omega}/\varepsilon^2)$.
\end{theorem}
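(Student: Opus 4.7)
The plan is to assemble this runtime bound from three ingredients already in place: (i) the regret bounds of Theorem~\ref{thm: no_regret_bandit} / Theorem~\ref{thm: no_regret_semi_bandit}, (ii) the per-iteration complexities implied by Lemma~\ref{lem: matroids}, and (iii) the folklore conversion of per-player regret to an $\varepsilon$-CCE provided by Theorem~\ref{thm: folklore}. The first step is to instantiate the polyhedral-game parameters: in a GMCG, actions are incidence vectors of spanning trees of $G=(V,E)$, so we take $d=|E|$ and $m=|V|-1$. By Theorem~\ref{thm: folklore}, if every player runs the same algorithm and achieves (with high probability) regret $R_T$, then the time-averaged joint play is an $(R_T/T)$-CCE, so it suffices to choose $T$ large enough that $R_T/T\le\varepsilon$.

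For the bandit case, Theorem~\ref{thm: no_regret_bandit} with $d=|E|$, $m=|V|-1$ gives $R_T=\widetilde{\mathcal{O}}(|E|^{2/3}|V|^{4/3}T^{2/3})$, and solving $R_T/T\le\varepsilon$ yields $T=\widetilde{\mathcal{O}}(|E|^2|V|^4/\varepsilon^3)$. For the per-iteration cost of Algorithm~\ref{alg: bandit}, I would go line by line: the $d^2$ kernel computations needed to assemble $\Sigma_t(q_t)$ via Theorem~\ref{thm: second_moment} cost $\mathcal{O}(|E||V|^{\omega+1}+|E|^2|V|^2)$ by Lemma~\ref{lem: matroids}; the $d$ extra kernels needed by \textsc{Sampling} cost $\mathcal{O}(|E||V|^{\omega})$; assembling $\widehat{\ell}_t=L_t\Sigma_t^{-1}v_t$ requires one inversion of an $|E|\times|E|$ matrix, i.e.\ $\mathcal{O}(|E|^\omega)$; and the MWU bookkeeping on the exponentiated loss vector $C_t$ is linear in $d$. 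The dominating term is $\widetilde{\mathcal{O}}(|E||V|^2(|V|^{\omega-1}+|E|))$, and multiplying by $T$ yields the claimed $\widetilde{\mathcal{O}}(|E|^3|V|^6(|V|^{\omega-1}+|E|)/\varepsilon^3)$.

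For the semi-bandit case I would repeat the same argument with Theorem~\ref{thm: no_regret_semi_bandit}: the regret is $\widetilde{\mathcal{O}}(|V|\sqrt{|E|T})$, giving $T=\widetilde{\mathcal{O}}(|E||V|^2/\varepsilon^2)$. Now only the $d$ first-moment kernels $\K(C_t,\bar{e}_j)$ and the $d$ kernels inside \textsc{Sampling} are needed, which by Lemma~\ref{lem: matroids} cost $\mathcal{O}(|V|^{\omega+1}+|E||V|^2)$ and $\mathcal{O}(|E||V|^{\omega})$ respectively; the IX estimator and the MWU update contribute only $\widetilde{\mathcal{O}}(|E|)$ per round. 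Assuming the mild regime $|E|\ge|V|$ (otherwise the graph is disconnected and the game degenerate), the per-iteration cost is $\widetilde{\mathcal{O}}(|E||V|^\omega)$, and multiplying by $T$ gives $\widetilde{\mathcal{O}}(|E|^2|V|^{2+\omega}/\varepsilon^2)$.

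The main subtle point, rather than a true obstacle, is bookkeeping the cost of the second-moment kernel computations correctly: one must avoid the naive count of $|E|^2$ invocations of the Matrix-Tree Theorem at cost $|V|^\omega$ each (which would give $|E|^2|V|^\omega$), and instead rely on the amortized bound of Lemma~\ref{lem: matroids} based on the rank-one LU updates; this is precisely where the factor $|V|^{\omega-1}+|E|$ in the final runtime originates. Once this is correctly accounted for, the theorem follows by multiplying the iteration counts $T$ derived above against the per-iteration costs, and applying Theorem~\ref{thm: folklore} to convert the per-player high-probability regret into a high-probability $\varepsilon$-CCE guarantee (a union bound over the $|\mathcal{P}|$ players, which is logarithmic in the confidence parameter and hence absorbed into the $\widetilde{\mathcal{O}}$).
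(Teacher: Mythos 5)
Your proposal is correct and follows exactly the paper's own route: the paper's proof is a one-line combination of Theorems~\ref{thm: no_regret_bandit} and~\ref{thm: no_regret_semi_bandit} with Theorem~\ref{thm: folklore} and Lemma~\ref{lem: matroids}, and your write-up simply makes the implicit bookkeeping explicit (instantiating $d=|E|$, $m=|V|-1$, solving $R_T/T\le\varepsilon$ for $T$, and multiplying by the amortized per-iteration kernel costs). The arithmetic in both the bandit and semi-bandit cases checks out against the stated bounds.
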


\section{Conclusion} 
In this paper, we focused on the problem of efficiently learning coarse correlated equilibrium (CCE) in polyhedral games via kernelization---beyond full-information feedback.  
In particular, we proposed kernelized no-regret learning algorithms that improve the runtime of state-of-the-art methods in three important classes of polyhedral games, namely Colonel Blotto, graphic 
matroid and network congestion games.

There are several important open questions for follow-up research:

\begin{itemize}
    \item Most important of all is whether we can design an FPTAS algorithm for efficiently learning correlated equilibria (CE) in polyhedral games; a stronger equilibrium notion than CCE.
    \item Another interesting open question is whether we can further leverage kernelization to achieve $1/\varepsilon^2$ dependence in the bandit setting, with a better dependence on $d$ than Zimmert and Lattimore \cite{zimmert2022returnLattimore}.
    \item Computing Nash equilibria in the general setting of the Colonel Blotto games we have studied in this paper is \textsc{PPAD}-hard, as any m-player normal-form game can be polynomially reduced to an m-player Colonel Blotto game. However, what can be said about the computational complexity of computing Nash equilibria in Colonel Blotto games with monotone piecewise-constant utility functions (e.g., see \cite{beaglehole2023sampling})?

\end{itemize}
\section*{Acknowledgments}
Gabriele Farina was supported by NSF
Award CCF-244306 and ONR grant 000142512296.
Ioannis Panageas was supported by NSF grant CCF-2454115. Most of this work was done while all authors were visiting Archimedes Research Unit, Athens, Greece. This work was also supported by the French National Research Agency (ANR) in the framework of the PEPR IA FOUNDRY project (ANR-23-PEIA-0003) and project MIS 5154714 of the National Recovery and Resilience Plan Greece 2.0 funded by the European Union under the NextGenerationEU Program.

\bibliographystyle{alpha}   
\bibliography{main}         



\clearpage
\tableofcontents



\newpage


\newpage
\appendix
\onecolumn

\section{Extended Related Work}\label{related_work}

\paragraph{Online learning in games.} The connection between no-regret learning and the computation of approximate CCE in games has been well-known since the work of Freund and Schapire \citep{freund1999adaptive} (see also \citep{cesa2006prediction}); assuming that all players use no-regret learning algorithms with regret $O\left(\sqrt{T}\right)$, the time-averaged history of joint-play consists a $O\left(\frac{1}{\sqrt{T}}\right)$-CCE. 
In \cite{DBLP:conf/soda/DaskalakisDK11}, it was shown for two player zero-sum games that rate of convergence $\tilde{O}(1/T)$ can be achieved, improving the standard $O\left(1/\sqrt{T}\right)$ that can be derived from black-box regret analysis. Further improvements using the idea of optimism were shown in \citep{rakhlin2013optimization} and for general-sum games appeared in later works \cite{DBLP:conf/nips/SyrgkanisALS15, DBLP:conf/nips/ChenP20, daskalakis2021optimistic, DBLP:conf/nips/AnagnostidesFKL22} (see also references therein as there is a vast literature in learning in games and it is impossible to cite properly all works).

\paragraph{Colonel Blotto games.}
The classical Colonel Blotto game introduced by Borel \cite{borel1953theory} dates back to 1953. Some notable works about computing NE in two-player zero-sum games include \citep{duels1, papadimitriou_blotto, papadimitriou_blotto_2}, and for learning CCE in multi-player games include \cite{beaglehole2023sampling, leon2021bandit}. Moreover, \cite{vu2019combinatorial, blotto_aaai_2020} show no-expected-regret learning in Colonel Blotto games which does not suffice for convergence to CCE. 
Our work improves upon previous works the runtime time to learn a CCE (see also Table \ref{table: blotto}).

\paragraph{Congestion games.} Congestion games are potential games \citep{rosenthal1973class} and always admit a pure Nash Equilibrium (NE); i.e, a state in which no agent has an incentive to unilaterally deviate. In full-information feedback, a long line of research studies the convergence properties to NE of game dynamics (e.g. best/better response play or no-regret). The seminal work of Takimoto and Warmuth \citep{takimoto2003path}, which studies online shortest paths, provides an efficient learning algorithm for network congestion games. Regarding the semi-bandit and bandit feedback settings, \cite{GLLO07}, based on \cite{takimoto2003path}, provide efficient algorithms for online shortest paths, which can also be applied to network congestion games. 
Moreover, efficient algorithms based on online gradient descent have also been established \citep{congestion_semi_bandit,dadi2024congestionBandit}. The regret rate and per-iteration complexity, i.e., the total running time to reach a CCE of the aforementioned works is inferior to ours (see also Table \ref{table: congestion}). Regarding learning on spanning trees, \cite{koo2007structured} used the Matrix-Tree Theorem in the context of \textit{directed} spanning trees for calculating the normalization factor of exponentiated gradient algorithm. However, their approach does not provide exact sampling, nor kernelization of bandit estimators.

\newpage

\section{Semi-bandit No-Regret Learning: Analysis of Algorithm \ref{alg:semi_band}}\label{appendix: semi_bandits}

\begin{algorithm2e}[t]
\caption{Kernelized Algorithm based on IX under \textit{semi-bandit feedback}}\label{alg:semi_band}
\DontPrintSemicolon
\vspace{0.2em}
\KwData{ $d$, $\m$, $\eta>0$ and $\gamma \in [0,1]$}
{Initialize $c_{0}(j) = 0$ and $C_{0}(j) = 0$ for all $j \in [d]$, $p_0 = [1/N, \ldots, 1/N] \in \Delta(\mathcal{V}_i)$ \;}
\For{$t = 1, ..., T$}{
    \setlength{\abovedisplayskip}{5pt}
    \setlength{\belowdisplayskip}{0pt}
    {{\color{orange}Compute the kernels}:  
    $\quad \K(C_{t-1}, \textbf{1})$ and $\{\K(C_{t-1}, \bar{e}_j)\}$ for $j \in [d]$  \;}
    \vspace{0.25em}
    {{Sample} an action $v_t \sim p_t$ (\text{MWU}) using $v_t = \textsc{Sampling}(\mathcal{V}, C_{t-1})$ \;}
    \vspace{0.25em}
    Observe semi-bandit losses $\ell_t \in \mathbb{R}^d$ \;
    \vspace{0.25em}
    {Compute the unconditional probabilities: \quad 
        $x_t \! = \left(\!
        1 - \frac{\K(C_{t-1}, \bar{e}_1)}{\K(C_{t-1}, \textbf{1})}, \dots,
        1 - \frac{\K(C_{t-1}, \bar{e}_d)}{\K(C_{t-1}, \textbf{1})}
        \right)$\; }
    {Compute the IX loss estimators: \quad
    $\widetilde{\ell}_t(j) = \frac{\ell_t(j)}{x_t(j) + \gamma} \mathbbm{1}\{v_t(j)=1\}, \quad \forall j \in [d]$
    \;}
    {Update the aggregated loss estimators: 
    $\quad c_t(j) = c_{t-1}(j) + \widetilde{\ell}_t(j), \quad \forall j \in [d]$ \;}
    \vspace{0.5em}
    {Update the exponential cumulative loss estimators: $\quad C_{t}(j) = \exp\left(-\eta c_{t}(j)\right), \quad \forall j \in [d]$ \;}
}
\;

\textbf{Procedure:} \textsc{Sampling}\;
\textbf{Input:} $\mathcal{V}$, $C$\;
Sample $v[1] \sim Be\left(1 - \frac{K_{\mathcal{V}} (C, \bar{e}_1)}{\K(C, \boldsymbol{1})}\right)$ \;
\For{ $j = 2, ..., d$}{ 
\vspace{0.25em}
{\color{orange}Compute the kernel}: $K_{\mathcal{V}(j)}$ \;
Set $\mathcal{V}(j)=\{v' \in  \mathcal{V} : v'[i] = v[i],\ \forall i \in [j-1] \}$
and
$p_j=1-\frac{K_{\mathcal{V}(j)} (C, \bar{e}_j)}{K_{\mathcal{V}(j)}(C, \boldsymbol{1})}$\;
\vspace{0.25em}
Sample $v[j]\sim Be(p_j)$\;
}
\textbf{Return:} $v$\;
\end{algorithm2e}

\begin{lemma}[Corollary 1, \cite{expIX}]\label{lem: expIX}
     Let $\gamma_{t}=\gamma \geq 0$ for all $t$. With probability at least $1-\delta'$,
$$
\sum_{t=1}^{T}\left(\widetilde{\ell}_{t}(i)-\ell_{t}(i)\right) \leq \frac{\log (d / \delta')}{2 \gamma}
$$
simultaneously holds for all $i \in[d]$.
\end{lemma}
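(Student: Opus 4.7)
The plan is to sketch the proof of this lemma, which is a direct quotation of Corollary~1 in \cite{expIX}. The argument is a Chernoff-style high-probability analysis tailored to the implicit exploration (IX) estimator, reducing a deviation bound for the aggregated estimator to a one-step conditional moment generating function (MGF) control.

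First, I would fix a single coordinate $i \in [d]$ and derive the key one-step MGF bound: for $\alpha = 2\gamma$ and any round $t$, with $\mathbb{E}_t[\cdot]$ denoting conditional expectation given the history $\mathcal{F}_{t-1}$,
$$\mathbb{E}_t\!\left[\exp\!\bigl(\alpha(\widetilde{\ell}_t(i) - \ell_t(i))\bigr)\right] \leq 1.$$
The workhorse is the elementary calculus inequality $\log(1+z) \geq \frac{z}{1+z/2}$ for $z \geq 0$, equivalently $\exp\bigl(\tfrac{z}{1+z/2}\bigr) \leq 1+z$. The IX shift $x_t(i)+\gamma$ in the denominator of $\widetilde{\ell}_t(i)$ forces the a.s.\ boundedness $\widetilde{\ell}_t(i) \leq 1/\gamma$, so that $\alpha \widetilde{\ell}_t(i) \leq 2$ when $\alpha \leq 2\gamma$, which is precisely the regime making the inequality effective. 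Applying it with $z = \alpha \widetilde{\ell}_t(i)$, taking conditional expectation, and using the downward-biased first moment $\mathbb{E}_t[\widetilde{\ell}_t(i)] = x_t(i)\ell_t(i)/(x_t(i)+\gamma) \leq \ell_t(i)$, the right-hand side collapses into $\exp(\alpha \ell_t(i))$.

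Next, I would promote this into a global supermartingale bound by iterating over $t$ via the tower property. The process
$$M_T \;:=\; \exp\!\left(2\gamma \sum_{t=1}^T \bigl(\widetilde{\ell}_t(i) - \ell_t(i)\bigr)\right)$$
then satisfies $\mathbb{E}[M_T] \leq 1$. Markov's inequality yields, for the fixed coordinate $i$,
$$\Pr\!\left[\sum_{t=1}^T \bigl(\widetilde{\ell}_t(i) - \ell_t(i)\bigr) \geq \frac{\log(d/\delta')}{2\gamma}\right] \;\leq\; \frac{\delta'}{d}.$$
A union bound over the $d$ coordinates gives overall failure probability at most $\delta'$, which is exactly the claimed simultaneous high-probability statement.

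The main technical obstacle is the one-step MGF bound: one must obtain the admissible range $\alpha \leq 2\gamma$ (not $\alpha \leq \gamma$) so as to produce the sharp $\log(d/\delta')/(2\gamma)$ penalty rather than a constant-factor weaker one. Without implicit exploration the standard IPS estimator $\ell_t(i)\mathbbm{1}\{v_t(i)=1\}/x_t(i)$ is unbounded and has variance that can be arbitrarily large, so no pointwise exponential bound exists; the $+\gamma$ shift is precisely the ingredient that simultaneously bounds the estimator by $1/\gamma$ and makes the calculus inequality yield a clean, biased-downward MGF domination. Everything downstream --- supermartingale construction, Markov step, and union bound --- is then routine.
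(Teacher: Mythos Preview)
The paper does not prove this lemma; it is quoted verbatim as Corollary~1 of \cite{expIX} and used as a black box in the analysis of Algorithm~\ref{alg:semi_band}. Your sketch faithfully reproduces Neu's original argument, so there is nothing in the paper to compare against.

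One technical slip in your sketch: applying $\exp\bigl(\tfrac{z}{1+z/2}\bigr)\leq 1+z$ with $z=\alpha\widetilde{\ell}_t(i)$ only upper-bounds the exponential of something \emph{smaller} than $\alpha\widetilde{\ell}_t(i)$, which is the wrong direction. In Neu's proof the inequality is applied with $z=2\gamma\,\bar\ell_t(i)$, where $\bar\ell_t(i)=\ell_t(i)\mathbbm{1}\{v_t(i)=1\}/x_t(i)$ is the \emph{unbiased} IPS estimator. The point is that
\[
\frac{z}{1+z/2}=\frac{2\gamma\,\ell_t(i)\,\mathbbm{1}\{v_t(i)=1\}}{x_t(i)+\gamma\,\ell_t(i)\,\mathbbm{1}\{v_t(i)=1\}}\;\geq\;\frac{2\gamma\,\ell_t(i)\,\mathbbm{1}\{v_t(i)=1\}}{x_t(i)+\gamma}=2\gamma\,\widetilde\ell_t(i),
\]
using $\ell_t(i)\mathbbm{1}\{v_t(i)=1\}\leq 1$. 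This yields $\exp(2\gamma\,\widetilde\ell_t(i))\leq 1+2\gamma\,\bar\ell_t(i)$, and the conditional expectation step then uses the \emph{unbiasedness} of $\bar\ell_t(i)$ (not the downward bias of $\widetilde\ell_t(i)$) to obtain $\mathbb{E}_t[\exp(2\gamma\,\widetilde\ell_t(i))]\leq 1+2\gamma\,\ell_t(i)\leq\exp(2\gamma\,\ell_t(i))$. The remainder of your argument --- supermartingale, Markov, union bound --- is correct.
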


\begin{theorem}[Theorem \ref{thm: no_regret_semi_bandit} restated]
For any $\delta \in (0,1)$, the sequence $v_1, ..., v_T$ of actions played by Algorithm \ref{alg:semi_band} with $\gamma = \frac{\m}{\sqrt{dT}}$ and $\eta = \frac{1}{\sqrt{dT}}$ satisfies
\[ R(T) \leq \mathcal{O}\left(\left(\m\sqrt{Td} + d\right)\log (d/ \delta)\right), \] with probability at least $1-\delta$.
\end{theorem}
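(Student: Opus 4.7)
The plan is to adapt Neu's implicit-exploration analysis to the polyhedral semi-bandit setting. Fix any comparator $v^* \in \mathcal{V}$ and decompose
\[
R_T \;=\; \underbrace{\sum_{t=1}^{T}\bigl(\ell_t \cdot v_t - \langle x_t, \ell_t\rangle\bigr)}_{\mathrm{(A)}}
+ \underbrace{\sum_{t=1}^{T}\langle x_t, \ell_t - \widetilde{\ell}_t\rangle}_{\mathrm{(B)}}
+ \underbrace{\sum_{t=1}^{T}\bigl(\langle x_t, \widetilde{\ell}_t\rangle - \widetilde{\ell}_t \cdot v^*\bigr)}_{\mathrm{(C)}}
+ \underbrace{\sum_{t=1}^{T}(\widetilde{\ell}_t - \ell_t) \cdot v^*}_{\mathrm{(D)}},
\]
where $x_t(j) = \Pr[v_t(j)=1]$ is the first-moment vector of $p_t$ from Observation~\ref{obs}. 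Terms (A), (B), and (D) are handled by off-the-shelf tools. Since $\|\ell_t\|_\infty \leq 1$ and $\|v_t\|_1 \leq \m$, the increments in (A) are bounded by $\m$, so Azuma--Hoeffding yields $\mathrm{(A)} = \mathcal{O}(\m\sqrt{T\log(1/\delta)})$. A direct calculation gives $\mathbb{E}[\langle x_t, \ell_t - \widetilde{\ell}_t\rangle \mid \mathcal{F}_{t-1}] = \sum_j \gamma\, x_t(j)\ell_t(j)/(x_t(j)+\gamma) \leq \gamma d$, so the predictable part of (B) is at most $\gamma d T$ and its martingale deviation is again subdominant via Azuma. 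For (D), coordinatewise application of Lemma~\ref{lem: expIX} combined with the $\m$-sparsity of $v^*$ yields $\mathrm{(D)} \leq \m\log(d/\delta')/(2\gamma)$ with probability at least $1-\delta'$.

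The main obstacle is controlling (C), the Hedge pseudo-regret on the estimated losses. The plan is to invoke the standard exponential-weights regret bound with nonnegative losses $f_t(v) := \widetilde{\ell}_t \cdot v$; since our parameter choice ensures $\eta f_t(v) \leq \eta\, \m/\gamma = 1$, this yields
\[
\mathrm{(C)} \;\leq\; \frac{\log|\mathcal{V}|}{\eta} + \eta \sum_{t=1}^{T}\mathbb{E}_{v\sim p_t}\bigl[(\widetilde{\ell}_t \cdot v)^2\bigr]
\;\leq\; \frac{\m\log d}{\eta} + \eta \sum_{t=1}^{T}\mathbb{E}_{v\sim p_t}\bigl[(\widetilde{\ell}_t \cdot v)^2\bigr].
\]
The heart of the argument is the variance term. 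Since $\widetilde{\ell}_t$ is supported on the at most $\m$ coordinates with $v_t(j)=1$, Cauchy--Schwarz with indicator weights gives $(\widetilde{\ell}_t \cdot v)^2 \leq \|v\|_1 \sum_j \widetilde{\ell}_t(j)^2 v(j) \leq \m \sum_j \widetilde{\ell}_t(j)^2 v(j)$, whence
\[
\mathbb{E}_{v\sim p_t}\bigl[(\widetilde{\ell}_t \cdot v)^2\bigr] \;\leq\; \m \sum_{j=1}^{d} x_t(j)\widetilde{\ell}_t(j)^2 \;\leq\; \m \sum_{j=1}^{d}\widetilde{\ell}_t(j),
\]
where the last step uses $x_t(j)\widetilde{\ell}_t(j) \leq x_t(j)/(x_t(j)+\gamma) \leq 1$. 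A \emph{second} application of Lemma~\ref{lem: expIX} then converts this into $\sum_t\sum_j \widetilde{\ell}_t(j) \leq dT + d\log(d/\delta')/(2\gamma)$ with high probability. This is the critical step: it replaces the naive bound $\eta \m^2 T/\gamma$ (which would arise from $\widetilde{\ell}_t(j)\leq 1/\gamma$ and scales as $\m T$ under our parameters) with $\eta \m d T + \eta \m d \log(d/\delta')/(2\gamma)$, restoring the desired $\sqrt{dT}$ scaling.

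Substituting $\eta = 1/\sqrt{dT}$ and $\gamma = \m/\sqrt{dT}$, the contributions $\m\log d/\eta$, $\gamma dT$, and $\eta\m dT$ each equal $\mathcal{O}(\m\sqrt{dT}\log d)$; the cross term $\eta \m d\log(d/\delta')/\gamma$ becomes $\mathcal{O}(d\log(d/\delta'))$; term (D) becomes $\mathcal{O}(\sqrt{dT}\log(d/\delta'))$, dominated by the leading term; and the Azuma residuals in (A) and (B) are subdominant. Taking a union bound over the two invocations of Lemma~\ref{lem: expIX} and the two Azuma applications at confidence level $\delta/4$ each, we conclude $R_T \leq \mathcal{O}\bigl((\m\sqrt{Td} + d)\log(d/\delta)\bigr)$ with probability at least $1-\delta$, matching the claim.
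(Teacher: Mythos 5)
Your proposal is correct: every step checks out (the four-term telescoping decomposition is exact, (A) and the deviation of (B) are genuine bounded-increment martingales given the conditional independence of $v_t$ and $\ell_t$, the bias of (B) is at most $\gamma d T$, the second-order Hedge bound applies since $\eta\,\widetilde{\ell}_t\cdot v \le \eta \m/\gamma = 1$, and the chain $(\widetilde{\ell}_t\cdot v)^2 \le \m\sum_j \widetilde{\ell}_t(j)^2 v(j)$, $x_t(j)\widetilde{\ell}_t(j)\le 1$, followed by Lemma~\ref{lem: expIX} is valid), and with the stated parameters the terms indeed collect to $\mathcal{O}((\m\sqrt{Td}+d)\log(d/\delta))$. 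The route differs from the paper's in one structural respect. The paper runs a single MWU potential argument directly on the estimated losses and exploits the exact identity $\langle x_t,\widetilde{\ell}_t\rangle = \sum_{j\in v_t}\frac{x_t(j)}{x_t(j)+\gamma}\ell_t(j) = L_t(v_t) - \gamma\sum_{j\in v_t}\widetilde{\ell}_t(j)$, so the \emph{realized} loss of the algorithm appears deterministically in the potential bound and no Azuma--Hoeffding step is ever needed; the only concentration used is Lemma~\ref{lem: expIX}, applied twice (once for the comparator, once for $\sum_t\sum_j\widetilde{\ell}_t(j)$), which is precisely the design purpose of the IX estimator. You instead split off terms (A) and (B) and pay for them with two Azuma applications plus an explicit bias computation; this is more modular (it would work verbatim for any estimator with bias $\mathcal{O}(\gamma)$ per coordinate and the same variance profile) at the cost of extra union-bound events and slightly larger constants, while the shared heart of both arguments --- the variance bound $\mathbb{E}_{p_t}[(\widetilde{\ell}_t\cdot v)^2]\le \m\sum_j\widetilde{\ell}_t(j)$ and the double use of Lemma~\ref{lem: expIX} --- is identical. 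Two cosmetic remarks: your two invocations of Lemma~\ref{lem: expIX} can share a single event, since the lemma already holds simultaneously over all coordinates; and that same coordinatewise uniformity is what lets your bound on (D) hold for all comparators at once, which is needed because in the game setting the minimizing $v^*$ is itself random.
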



\begin{proof}

Let $\hat{L}_t(v)= \sum_{i\in \mathcal{V}} \hat{\ell}_t(i)$ be the loss estimator of selecting pure action $v$ at time step $t$ and  $L_t(v)= \sum_{i\in \mathcal{V}} \ell_t(i)$ the corresponding true loss. Moreover, let $\hat{C}_t(v) = \sum_{t'=1}^t \hat{L}_{t'}(v)$ be the cumulative loss estimator of selecting pure action $v$ for the first $t$ time steps. Also, let ${w}_t(v) = \exp (- \eta \hat{C}_t(v))$, where $\eta$ is the learning rate of MWU, and let ${W}_t = \sum_{v \in \mathcal{V}} w_t(v)$, with $W_0 = |\mathcal{V}| = N$.

As in the standard analysis of MWU, we will upper and lower bound the quantity $\log \frac{W_T}{W_0}$. First, we fix a pure action $v^* \in \mathcal{V}$ and using the fact that $W_t(v) \geq w_t(v^*)$, for all $t$, we have the following lower bound:

\begin{equation}
\begin{split}
\log \frac{W_T}{W_0} \geq - \eta \hat{C}_T(v^*) - \log N \label{lower_bound}
\end{split}
\end{equation}

\medskip

\noindent
On the other hand, assuming $\eta \hat{L}_t(v) \leq 1$ for all $A \in \mathcal{A}$ (this condition will be verified later), using the elementary inequalities $e^x \leq 1 + x  + x^2$ for $|x| \leq 1$ and $\ln(1+y) \leq y$ for $y >- 1$, we get the standard following upper bound:

\begin{align}
    \log \frac{W_t}{W_{t-1}} &= \log \sum_{v \in \mathcal{V}} p_t(v) \exp(-\eta \hat{L}_t(v)) \\
    &\leq \log \sum_{v \in \mathcal{V}} p_t(A) (1 - \eta \hat{L}_t(v) + \eta^2 \hat{L}_t(v)^2) \\
    &\leq \eta^2 \sum_{v\in\mathcal{V}}p_t(v)\hat{L}_t(v)^2 - \eta \sum_{v\in\mathcal{A}}p_t(v)\hat{L}_t(v) \label{MWU_ineq_W}
\end{align}

\medskip
\noindent
Now, we will upper bound the first term and lower bound the second term of \ref{MWU_ineq_W}, as follows:

\begin{align}   \sum_{v\in\mathcal{V}}p_t(v)\hat{L}_t(v)^2 &= \sum_{v\in\mathcal{V}} p_t(v) \left( \sum_{i \in v} \hat{\ell}_t(i)\right)^2 \\
&\leq \m \sum_{v\in\mathcal{V}} p_t(v) \sum_{i\in v} \hat{\ell}_t(i)^2 \label{arithmetic_mean} \\
&= \m \sum_{i \in [d]} \hat{\ell}_t(i)^2 \sum_{v \in \mathcal{V}: i \in v} p_t(v) \\
&= \m \sum_{i \in [d]} x_t(i) \hat{\ell}_t(i)^2 \\
&= \m \sum_{i \in [d]} x_t(i) \frac{\ell_t(i) \mathbbm{1}\{i \in v_t\}}{x_t(i) + \gamma} \hat{\ell}_t(i) \\
&\leq \m \sum_{i \in v_t} q_t(i) \hat{\ell}_t(i) \label{rew_ineq}  \\
&\leq \m \sum_{i \in v_t} \hat{\ell}_t(i) \label{q_ineq} 
\end{align}

\medskip
\noindent
where in \ref{arithmetic_mean} we have used the property that the arithmetic mean is less or equal than the quadratic mean, in \ref{rew_ineq} we have used the fact that $\ell_t(i) \leq 1$ and we defined $q_t(i) = \frac{x_t(i)}{x_t(i) + \gamma}$, and in \ref{q_ineq} we have the fact that $q_t(i) \leq 1$.

\medskip
\noindent
Similarly, we lower bound the second term as follows:

\begin{align}
\sum_{v\in\mathcal{V}}p_t(v)\hat{L}_t(v) &=  \sum_{v\in\mathcal{V}} p_t(v) \sum_{i\in v} \hat{\ell}_t(i) \\
&= \sum_{i \in [d]} \hat{\ell}_t(i) \sum_{v \in \mathcal{V}: i \in v} p_t(v) \\
&= \sum_{i \in v_t} \frac{x_t(i)}{x_t(i) + \gamma} \ell_t(i) \\
&= \sum_{i \in v_t} \ell_t(i) - \gamma \sum_{i \in v_t} \frac{\ell_t(i)}{x_t(i) + \gamma} \\
&= \sum_{i \in v_t} \ell_t(i) - \gamma \sum_{i \in v_t} \hat{\ell}_t(i)
\end{align}

\medskip
\noindent
Now, summing for $t=1, 2, ..., T$, we get: 

\begin{align}
    \log \frac{W_T}{W_0} &\leq \eta^2 \m \sum_{t=1}^{T}\sum_{i \in v_t} \hat{\ell}_t(i) - \eta \sum_{t=1}^{T}\sum_{i \in v_t} (\ell_t(i) - \gamma \hat{\ell}_t(i)) 
\end{align}

\medskip
\noindent
Combining the above with the lower bound of \ref{lower_bound}, we get the following:

\begin{align}
- \eta \hat{C}_t(v^*) - \log N &\leq \eta^2 \m \sum_{t=1}^{T}\sum_{i \in v_t} \hat{\ell}_t(i) - \eta \sum_{t=1}^{T}\sum_{i \in v_t} (\ell_t(i) - \gamma \hat{\ell}_t(i)) 
\end{align}

\medskip
\noindent
which implies that 

\begin{align}
\sum_{t=1}^{T}\sum_{i \in v_t} \ell_t(i) - \hat{C}_T(v^*) \leq  \frac{\log N}{\eta} + \eta \m \sum_{t=1}^{T}\sum_{i \in v_t} \hat{\ell}_t(i) + \sum_{t=1}^{T}\sum_{i \in v_t} \gamma \hat{\ell}_t(i) \\
\Rightarrow \sum_{t=1}^{T}L_t(v_t) - \sum_{t=1}^{T}\hat{L}_t(v^*) \leq  \frac{\log N}{\eta} + \sum_{t=1}^{T}\sum_{i \in v_t} (\eta \m + \gamma) \hat{\ell}_t(i)
\end{align}

\medskip

\noindent
Then, using Lemma \ref{lem: expIX}, with probability at least $1-\delta'$, we get the following:

\begin{align}
\sum_{t=1}^{T}L_t(v_t) - \sum_{t=1}^{T}L_t(v^*) &\leq \frac{\m\log (d/ \delta')}{2\gamma} +   \frac{\log N}{\eta} + (\eta \m + \gamma) \sum_{t=1}^{T}\sum_{i \in v_t} \hat{\ell}_t(i) \\
& \leq \frac{\m\log (d/ \delta')}{2\gamma} +   \frac{\log N}{\eta} + (\eta \m + \gamma) \sum_{i \in [d]} \sum_{t=1}^{T} \hat{\ell}_t(i)
\end{align}

Now, we can apply Lemma \ref{lem: expIX} to the term $ \sum_{i \in [d]}\sum_{t=1}^{T}\hat{\ell}_t(i)$. Then with probability at least $1-2\delta'$ we obtain:

\begin{align}
\sum_{t=1}^{T}(L_t(v_t) - L_t(v^*)) &\leq \frac{\m\log (d/ \delta')}{2\gamma} + \frac{\log N}{\eta} + (\eta \m + \gamma) \sum_{t=1}^{T}\sum_{i \in [d]} {\ell}_t(i) + 
(\eta \m + \gamma) \frac{d\log (d/ \delta')}{2\gamma} \label{Upper_bound_1}\\
&\leq \frac{\m\log (d/ \delta')}{2\gamma} +  \frac{2\m\log d}{\eta}  + (\eta d \m + \gamma d) T + (\eta \m + \gamma) \frac{d\log (d/ \delta')}{2\gamma} \label{Upper_bound}
\end{align}

\medskip
\noindent
where in the last inequality we have used the fact that $N\leq\sum_{i=0}^\m \binom{d}{i}\leq \m d^{\m} \Rightarrow \log N \leq 2\m \log d$ and that $\sum_{i \in v_t}\ell_t(i) \leq \m$, which holds because $|v_t|_1 \leq \m$.

Next we will optimize over parameters $\gamma$ and $\eta$ to minimize the RHS in \ref{Upper_bound}. We have one constraint over the parameters. In the proof of  \ref{MWU_ineq_W} we used the condition $\eta \hat{L}_t(v_t) \leq 1$ for all $t \in [T]$. It is easy to verify that if $\eta \m \leq \gamma$ then the above condition is satisfied. We set $\gamma = \frac{\m}{\sqrt{dT}}$ and  $\eta = \frac{1}{\sqrt{dT}}$ to balance the dominating terms in the regret bound. Moreover, we set $\delta = 2\delta'$. Therefore, using the above and plug them in \ref{Upper_bound}, we get the following:

\begin{equation}
\begin{split}
    \sum_{t=1}^{T}(L_t(v_t) - L_t(v^*)) &\leq \m\sqrt{dT}\log (2d/ \delta) + 2\m\sqrt{dT}\log d  +  2 \m\sqrt{dT}+ d\log (2d/\delta)
\end{split} 
\end{equation}

Finally, we obtain our result by setting  $v^* = \argmin_{v \in \mathcal{V}} \sum_{t=1}^T L_t(v)$.

\end{proof}


\section{Second Moment Calculation via Kernelization}
\begin{theorem}[Theorem \ref{thm: second_moment} restated]\label{thm:so}
Let ${\Sigma}_t(q_t) := \sum_{v\in\mathcal{V} }{q_t}(v)vv^T$ be the autocorrelation matrix under the law of $q_t$. Then,  for all $j,j' \in [d]$, it holds that:
$${\Sigma}_t(q_t)[j,j']=1 - \frac{K ({ b}^{(t)}, \bar{e}_{j}) + K ({ b}^{(t)}, \bar{e}_{j'})- K ({ b}^{(t)}, \bar{e}_{j,j'})}{K ({ b}^{(t)},\boldsymbol{1})}$$
\end{theorem}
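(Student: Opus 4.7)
The plan is to prove this via a direct probabilistic interpretation combined with inclusion-exclusion, using the same mechanism by which the first-moment formula of Theorem~\ref{thm:kernel} was derived.

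First, I would observe that since each coordinate $v(j)$ is binary, we have $v(j)v(j') = \mathbbm{1}\{v(j)=1\}\mathbbm{1}\{v(j')=1\}$, so
\[
\Sigma_t(q_t)[j,j'] \;=\; \sum_{v\in\mathcal{V}} q_t(v)\, v(j) v(j') \;=\; \Pr_{v\sim q_t}\bigl[v(j)=1 \text{ and } v(j')=1\bigr].
\]
Thus the second-moment entry is really the probability that both coordinates $j$ and $j'$ are active under $q_t$.

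Next, I would invoke inclusion-exclusion on the complementary event to write
\[
\Pr[v(j)=1,\, v(j')=1] \;=\; 1 - \Pr[v(j)=0] - \Pr[v(j')=0] + \Pr[v(j)=0,\, v(j')=0].
\]
The heart of the argument is then a lemma that each of these three probabilities is expressible as a ratio of kernels. Concretely, recall $q_t(v) \propto \prod_{h:v(h)=1} C_t(h)$ with normalizer $K(C_t,\boldsymbol{1})$. Now plug the indicator weight $\bar{e}_j$ into the kernel: since $\bar{e}_j(j)=0$, any $v$ with $v(j)=1$ contributes a zero factor, while any $v$ with $v(j)=0$ contributes $\prod_{h:v(h)=1} C_t(h)$ unchanged. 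Hence
\[
K(C_t,\bar{e}_j) \;=\; \sum_{v:\,v(j)=0} \prod_{h:v(h)=1} C_t(h) \;=\; K(C_t,\boldsymbol{1}) \cdot \Pr[v(j)=0].
\]
The identical argument with the two-coordinate indicator $\bar{e}_{j,j'}$ yields
\[
K(C_t,\bar{e}_{j,j'}) \;=\; K(C_t,\boldsymbol{1}) \cdot \Pr[v(j)=0,\,v(j')=0].
\]

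Finally, I would substitute these three ratios into the inclusion-exclusion expansion to obtain exactly
\[
\Sigma_t(q_t)[j,j'] \;=\; 1 \;-\; \frac{K(C_t,\bar{e}_j) + K(C_t,\bar{e}_{j'}) - K(C_t,\bar{e}_{j,j'})}{K(C_t,\boldsymbol{1})},
\]
which is the desired identity. There is no real obstacle here: once the binary-coordinate-to-probability reduction is made, the whole proof is a two-line inclusion-exclusion together with the observation (already implicit in the first-moment proof) that substituting an indicator weight into the kernel precisely zeroes out the contributions of the violating actions. The only care needed is in matching notation ($C_t$ versus $b^{(t)}$) and in verifying that the diagonal case $j=j'$ is handled correctly, where $\bar{e}_{j,j}=\bar{e}_j$ collapses the formula back to the first-moment expression of Theorem~\ref{thm:kernel}.
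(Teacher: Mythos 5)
Your proof is correct and is essentially the same argument as the paper's: both reduce $\Sigma_t(q_t)[j,j']$ to the indicator $\mathbbm{1}\{j,j'\in v\}$, expand it by inclusion-exclusion into $1-\mathbbm{1}\{j\notin v\}-\mathbbm{1}\{j'\notin v\}+\mathbbm{1}\{j\notin v, j'\notin v\}$, and recognize each term as a kernel ratio because plugging an indicator weight $\bar{e}_j$ (resp. $\bar{e}_{j,j'}$) into $K$ exactly zeroes out the actions that use coordinate $j$ (resp. $j$ or $j'$). The only cosmetic difference is that the paper phrases the step through the feature map identity $\phi(\bar{e}_{j,j'})[v]=\mathbbm{1}_{j\notin v\wedge j'\notin v}$, whereas you phrase it probabilistically as $K(C_t,\bar{e}_j)/K(C_t,\boldsymbol{1})=\Pr[v(j)=0]$; these are the same computation.
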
 
\begin{proof}
We observe that for all $j,j' \in [d]$, the feature map $\phi (\bar{e}_{j,j'})$ satisfies for all $v\in\mathcal{V}$
    \begin{align*}
        \phi (\bar{e}_{j,j'})[M] &= \prod_{k: v(k)=1} \bar{e}_{j,j'}[k]
        = \prod_{k: v(k)=1}\mathbbm{1}_{k\neq j\land k\neq j'} = \mathbbm{1}_{j\notin v \land j'\notin v}\\
        &= 1-\mathbbm{1}_{j\in v \lor j'\in v}\\
        &=1-\mathbbm{1}_{j\in v}-\mathbbm{1}_{j'\in v}+\mathbbm{1}_{j,j'\in v}\\
        &= \mathbbm{1}_{j\notin v}+\mathbbm{1}_{j'\notin v}+\mathbbm{1}_{j,j'\in v}-1\\
    \end{align*}
    Using the fact that $\phi (\boldsymbol{1}) = \boldsymbol{1}$ and $\phi (\bar{e}_{j})[v] = \mathbbm{1}_{j\notin v}$, we conclude that for all $j,j' \in [d], v\in\mathcal{V}$
    \begin{align}   
        \mathbbm{1}_{j,j' \in v} = \phi (\boldsymbol{1})[v] + \phi (\bar{e}_{j,j'})[v]-\phi (\bar{e}_{j})[v]-\phi (\bar{e}_{j'})[v] \label{eq:diff phi}
     \end{align}
    Therefore, for all $j,j' \in [d]$, we obtain for the autocorrelation matrix
    \begin{align}
        {\Sigma}_t(q_t)[j,j'] &= \sum_{v\in\mathcal{V} }{q_t}(v)(vv^T)[j,j'] = \sum_{v\in\mathcal{V} }{q_t}(v)\mathbbm{1}_{j,j'\in v}\\
        &= \sum_{v\in\mathcal{V} }{q_t}(v)(\phi (\boldsymbol{1})[v] + \phi (\bar{e}_{j,j'})[v]-\phi (\bar{e}_{j})[v]-\phi (\bar{e}_{j'})[v])\\
        &= \frac{\langle\phi ({ b}^{(t)}),\phi (\boldsymbol{1})\rangle + \langle\phi ({ b}^{(t)}), \phi (\bar{e}_{j,j'})\rangle - \langle\phi ({ b}^{(t)}), \phi (\bar{e}_{j})\rangle - \langle\phi ({ b}^{(t)}), \phi (\bar{e}_{j'})\rangle }{K ({ b}^{(t)},\boldsymbol{1})}\\
        &= \frac{K ({ b}^{(t)},\boldsymbol{1}) + K ({ b}^{(t)}, \bar{e}_{j,j'}) - K ({ b}^{(t)}, \bar{e}_{j}) - K ({ b}^{(t)}, \bar{e}_{j'})}{K ({ b}^{(t)},\boldsymbol{1})}\\
        &= 1 - \frac{K ({ b}^{(t)}, \bar{e}_{j}) + K ({ b}^{(t)}, \bar{e}_{j'})- K ({ b}^{(t)}, \bar{e}_{j,j'})}{K ({ b}^{(t)},\boldsymbol{1})},
    \end{align}
    where the third equation follows from \eqref{eq:diff phi}, the fourth from Theorem \ref{thm:kernel}, and the fifth from the definition of $K(\cdot,\cdot)$. 
    
\end{proof}

\newpage

\section{Layered Graph Representation in Colonel Blotto}\label{sec: layered_graph}

\begin{definition}[Layered Graph \cite{simpler_solutions_blotto}]
The layered graph has $\k+1$ layers and $n+1$ vertices in each layer. Let $v_{i,j}$ denote the $j$-th vertex in the $i$-th layer ($0 \leq i \leq \k$ and $0 \leq j \leq n$). For any $0 \leq i \leq \k$ there exists a directed edge from $v_{i-1,j}$ to $v_{i,l}$ iff $0 \leq j \leq l \leq n$.   
\end{definition}
\begin{lemma}[Pure actions in a Layered Graph \cite{simpler_solutions_blotto}]
Each directed path in the layered graph starting from $v_{0,0}$ and ending at $v_{n,\k}$ is equivalent to exactly one pure action of the Colonel Blotto game, and vice versa. For each pure action, the reward for each battlefield is associated with a unique edge of the directed path.   
\end{lemma}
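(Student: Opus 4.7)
The plan is to construct an explicit bijection between directed $v_{0,0}\to v_{\k,n}$ paths in the layered graph (I take the statement's $v_{n,\k}$ to mean the vertex on the last layer $\k$ at position $n$) and pure actions of Colonel Blotto, which can be identified with tuples $(s_1,\ldots,s_\k)\in\mathbb{Z}_{\geq 0}^{\k}$ satisfying $\sum_{h=1}^{\k} s_h=n$.

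First I would observe that, by the layered structure, any directed path from $v_{0,0}$ to $v_{\k,n}$ visits exactly one vertex per layer; call this sequence $v_{0,a_0},v_{1,a_1},\ldots,v_{\k,a_\k}$ with $a_0=0$ and $a_\k=n$. The existence condition for the edge $v_{i-1,a_{i-1}}\to v_{i,a_i}$ reads $a_{i-1}\leq a_i\leq n$. I would then map such a path to the allocation $s_i := a_i-a_{i-1}$: non-negativity follows from monotonicity of $(a_i)$, and the sum telescopes to $\sum_{i=1}^\k s_i = a_\k - a_0 = n$, so the allocation is valid. Conversely, given $(s_1,\ldots,s_\k)$, set $a_0=0$ and $a_i := \sum_{j=1}^i s_j$; the constraints on $s_i$ immediately give $a_{i-1}\leq a_i\leq n$ and $a_\k=n$, so the resulting vertex sequence is a valid directed path. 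These two constructions are manifestly inverse to one another, which establishes the bijection.

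For the second claim, I would note that under this correspondence the $i$-th edge $v_{i-1,a_{i-1}}\to v_{i,a_i}$ of the path is uniquely determined by the pair $(i,s_i)$, i.e.\ by the battlefield index and the number of soldiers assigned to it, and conversely $(i,s_i)$ determines this edge. Labeling this edge by the per-battlefield payoff obtained by allocating $s_i$ soldiers to battlefield $i$ (given the fixed opponents' actions) thus yields a well-defined edge weighting; summing the weights along the path then recovers the player's total reward under the allocation.

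The lemma is essentially a combinatorial tautology and I expect no substantive technical obstacle. The only delicate point is notational: one must interpret the endpoint/orientation of the path consistently and verify that the layered-graph edge condition $a_{i-1}\leq a_i$ matches exactly the non-negativity constraint $s_i\geq 0$, with the equality $a_\k = n$ encoding the soldier-conservation constraint $\sum_i s_i = n$.
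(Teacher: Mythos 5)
Your proof is correct, and it also correctly flags the notational slip in the statement (the terminal vertex should be $v_{\k,n}$, i.e.\ the $n$-th vertex of layer $\k$, given the paper's indexing convention $v_{\text{layer},\text{position}}$). Note that the paper itself does not supply a proof: the lemma is imported as a citation to the work of Behnezhad et al.\ on simpler solutions for Colonel Blotto, so there is no in-paper argument to compare against. Your argument is the natural one. The bijection $s_i := a_i - a_{i-1}$ between the sequence of visited layer positions and the per-battlefield allocation, together with the telescoping identity $\sum_i s_i = a_\k - a_0 = n$, is exactly how one would expect this to be proved, and your observation that the edge $v_{i-1,a_{i-1}} \to v_{i,a_i}$ is in bijection with the pair $(i, s_i)$ immediately justifies the second claim about per-battlefield rewards living on edges. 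One small stylistic point: it is worth stating explicitly (as you implicitly use) that a directed $v_{0,0}\to v_{\k,n}$ path visits one vertex per layer because every edge goes from layer $i-1$ to layer $i$; this is what makes the sequence $(a_0,\dots,a_\k)$ well-defined. No substantive gap remains.
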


However, the layered graph, which has been used to succinctly represent the action space for learning in Colonel Blotto games, see \cite{vu2019combinatorial, blotto_aaai_2020, leon2021bandit}, implies a representation complexity of $\Theta(n^2\k)$ that can be a bottleneck for efficient no-regret learning and convergence to CCE.

\section{Barycentric Spanners}\label{sec:barycentric_spanners}

Before proceeding with the proposed algorithm for the bandit setting, we introduce the important notion of barycentric spanners \cite{awerbuch2004adaptive}. We will use barycentric spanners to ensure adequate exploration of each coordinate $j \in [d]$, sufficient to guarantee low variance of the loss estimators.

\begin{definition}
A subset of independent vectors $\left\{b_1, \ldots, b_d\right\} \subseteq \mathcal{V}$ is said to be $C$-approximate barycentric spanner of $\mathcal{V}_i$, with $C > 1$, if, for all $v \in \mathcal{V}$, there exists $\alpha \in \mathbb{R}^d$ such that
$$ v=\sum_{j=1}^d \alpha_j b_j \quad \text { and } \quad |\alpha_j| \leq C, \quad \text { for all } j \in[d].
$$
We define $B$ to be the matrix whose columns are the barycentric spanners $\left\{b_1, \ldots, b_d\right\}$.
\end{definition} 



The following proposition ensures that, if specific conditions hold, there exists an efficient algorithm for computing a $C$-approximate barycentric spanner.

\begin{proposition}[Proposition 2.5, \cite{awerbuch2008spanners}]\label{prop:barycentric_algo}
Suppose $S \subseteq \mathbb{R}^d$ is a compact set not contained in any proper linear subspace. Given an oracle for optimizing linear functions over S, for any $C>1$ there exists an algorithm that computes a $C$-approximate barycentric spanner for $S$ in polynomial time, using $\mathcal{O}\left(d^2 \log _C(d)\right)$ calls to the optimization oracle.
\end{proposition}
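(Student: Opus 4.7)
The plan is to follow the volume-maximization strategy of Awerbuch--Kleinberg, exploiting that the determinant is a multilinear function of its columns and can therefore be optimized column-by-column using the linear-optimization oracle on $S$.

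First, I would construct an initial basis $B^{(0)} = (b_1, \dots, b_d)$ of $S$ with nonzero determinant. Because $S$ is not contained in any proper subspace, I can pick $b_i$ iteratively as a maximizer of $|\langle u_i, v \rangle|$ over $v \in S$, where $u_i$ is any unit vector orthogonal to $\mathrm{span}(b_1, \dots, b_{i-1})$ (computable by Gram--Schmidt from the already-chosen $b_j$'s). Each step uses two oracle calls, so the initialization phase costs $O(d)$ oracle calls and yields a basis.

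Next I would run the swap-improvement loop. For each index $i$, cofactor expansion along the $i$-th column shows that the map $v \mapsto \det(b_1, \dots, b_{i-1}, v, b_{i+1}, \dots, b_d)$ is a linear functional of $v$, so both its maximum and its minimum over $S$ are computable with two oracle calls. Whenever the best candidate exceeds $C \cdot |\det(B)|$ in absolute value, I would replace $b_i$ by it. I repeat full sweeps over $i = 1, \dots, d$ until no swap yields a $C$-factor improvement. Correctness is then immediate from Cramer's rule: for any $v \in S$, writing $v = \sum_j \alpha_j b_j$ gives $\alpha_j = \det(b_1, \dots, b_{j-1}, v, b_{j+1}, \dots, b_d)/\det(B)$, and the termination condition forces $|\alpha_j| \leq C$, as required.

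The main obstacle, and the step requiring the most care, is bounding the total number of swaps so that the oracle complexity works out to the target $O(d^2 \log_C d)$. Each swap multiplies $|\det(B)|$ by at least $C$, so the number of swaps is at most $\log_C\bigl(|\det(B^\star)|/|\det(B^{(0)})|\bigr)$, where $B^\star$ is a volume-maximizer over all choices of $d$ points in $S$; without additional control on $B^{(0)}$ this ratio is \emph{a priori} unbounded. The standard remedy is a bootstrap: first run the same swap loop with the coarser threshold $C' = d$, which by the correctness argument above already produces a $d$-approximate spanner. In such a spanner every $v \in S$ decomposes with coefficients of absolute value at most $d$, so writing $B^\star = B M$ with $\|M\|_{\max} \leq d$ and applying Hadamard's inequality gives $|\det(B^\star)|/|\det(B)| \leq d^{O(d)}$. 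Feeding this warm start into the refinement with threshold $C$ yields at most $\log_C d^{O(d)} = O(d \log_C d)$ further swaps, and since each sweep uses $O(d)$ oracle calls, the overall oracle complexity is $O(d^2 \log_C d)$.
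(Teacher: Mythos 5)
The paper does not prove this statement—it is imported verbatim as Proposition~2.5 of Awerbuch--Kleinberg \cite{awerbuch2008spanners}—so your attempt has to stand on its own, and it follows the Awerbuch--Kleinberg template (greedy initialization, then $C$-threshold column swaps, correctness via Cramer's rule). The initialization, the swap loop, and the termination-implies-spanner argument are fine. The gap is in the oracle-call bound: you correctly observe that the number of swaps is at most $\log_C\bigl(|\det B^\star|/|\det B^{(0)}|\bigr)$ and that this ratio is a priori unbounded for your initialization, but your ``bootstrap'' does not remove this problem. The bootstrap phase with threshold $C'=d$ is itself the same swap loop started from the same uncontrolled basis $B^{(0)}$, so its number of swaps is $\log_d\bigl(|\det B^\star|/|\det B^{(0)}|\bigr)$, which is governed by exactly the ratio you just conceded is unbounded (it depends on the geometry of $S$, not only on $d$). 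You never bound the cost of this first stage, so the claimed total of $\mathcal{O}(d^2\log_C d)$ oracle calls does not follow; at best you get $\mathcal{O}\bigl(d\log_d(|\det B^\star|/|\det B^{(0)}|)+d^2\log_C d\bigr)$, with the first term uncontrolled.

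What is missing is precisely the key lemma of Awerbuch--Kleinberg: the \emph{initialization itself} must output a basis whose determinant is within a $d^{O(d)}$ factor of the maximum $|\det B^\star|$. In their algorithm this is arranged by choosing, at step $i$, the point $b_i\in S$ maximizing $|\det(b_1,\dots,b_{i-1},x,e_{i+1},\dots,e_d)|$ (a greedy volume maximization, implementable with two oracle calls since this is a linear functional of $x$), and proving that the resulting determinant is at least $M/d^{d/2}$; then a \emph{single} run of the swap loop with threshold $C$ makes $\mathcal{O}(d\log_C d)$ swaps and the $\mathcal{O}(d^2\log_C d)$ bound follows with no bootstrap. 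Your initialization—maximizing $|\langle u_i,v\rangle|$ for an arbitrarily chosen unit vector $u_i$ orthogonal to the previously selected points—is not shown to enjoy any such volume guarantee (the natural ``slab'' argument loses a factor $1/|\det(u_1,\dots,u_d)|$, which you cannot control since the $u_i$ need not be well-conditioned). To repair the proof, either switch to the determinant-greedy first phase and prove the $M/d^{O(d)}$ lower bound on its output, or supply an analogous volume guarantee for your orthogonal-direction greedy; as written, the step bounding the total number of swaps is unsupported.
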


\subsection{Computing an Approximate-Barycentric Spanner for Colonel Blotto Games}\label{appendix: barycentric_blotto}

\begin{proposition}[Oracle for finding best-response in polynomial-time]\label{prop:best-response}
Given a reward vector $r$, the following linear optimization problem 

$$
    \max_{{V} \in \mathcal{V}_i} {r}^T {V}
$$

can be solved in time $\mathcal{O}(n^2 \k)$.
    
\end{proposition}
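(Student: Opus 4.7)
The plan is to reformulate the problem as a constrained discrete allocation and then solve it by dynamic programming on the battlefield index and the running soldier budget. Recall that under the $\Theta(nk)$-representation, a vector $V\in\mathcal{V}_i$ encodes an assignment that picks exactly one $s\in\{0,\dots,n\}$ per battlefield $h\in[k]$, subject to the total budget $\sum_{h=1}^k s = n$. Hence the linear objective $r^\top V$ decomposes as $\sum_{h=1}^k r[h, s_h]$, where $s_h$ is the number of soldiers assigned to battlefield $h$, and the constraint set is $\{(s_1,\dots,s_k)\in[n]_0^k : \sum_h s_h = n\}$.

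First, I would define the DP table $f(h, b)$ to be the maximum value $\sum_{h'=1}^h r[h', s_{h'}]$ attainable by a partial assignment over the first $h$ battlefields that uses exactly $b$ soldiers. The base case is $f(0,0) = 0$ and $f(0,b) = -\infty$ for $b > 0$. The recursion is
\[
f(h, b) \;=\; \max_{0 \le s \le b}\bigl(f(h-1, b-s) + r[h, s]\bigr),
\]
and the optimum of the original problem is $f(k, n)$. A standard back-pointer stored at each cell lets me reconstruct an optimal $(s_1,\dots,s_k)$, and thus the maximizing vector $V\in\mathcal{V}_i$, in $\mathcal{O}(k)$ additional time after the table is filled.

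Next, I would bound the running time. The table has $k\cdot(n+1)$ entries, and each entry is computed by a maximization over at most $n+1$ candidate values of $s$, giving $\mathcal{O}(n)$ work per cell and $\mathcal{O}(n^2 k)$ work in total, matching the claimed complexity. Correctness follows directly from the optimal-substructure property: any optimum for the first $h$ battlefields that spends exactly $b$ soldiers must restrict to an optimum for the first $h-1$ battlefields spending $b-s$ soldiers for some $s\in[b]_0$, and vice versa.

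I do not foresee a substantive obstacle; the main care is notational, namely making sure the feasibility constraint $\sum_h s_h = n$ is enforced by the final cell $f(k,n)$ (not by a $\le n$ formulation) so that the returned $V$ is indeed in $\mathcal{V}_i$ under the one-hot encoding. If one prefers the relaxation where up to $n$ soldiers may remain unallocated, the same DP with the additional initialization $f(h, b)$ ranging over $b \le n$ and a final $\max_{b\le n} f(k,b)$ works with the same $\mathcal{O}(n^2 k)$ complexity.
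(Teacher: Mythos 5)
Your proof is correct and is essentially the same argument as the paper's: the paper solves the problem as a longest-path computation on the layered graph, whose nodes $v_{h,j}$ are exactly your DP states $(h,b)$ (``after $h$ battlefields, $b$ soldiers spent'') and whose edges encode the choice of $s$, yielding the same $\mathcal{O}(n^2\k)$ recursion and the same reconstruction of the one-hot vector $V^*$. Your presentation just writes the dynamic program directly instead of invoking the graph formalism.
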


\begin{proof}
We will solve the following linear optimization problem (which corresponds to playing best-response with respect to the reward vector ${r}$):

\begin{align}
    \max_{{V} \in \mathcal{V}_i} {r}^T {V}
\end{align}

The above problem is equivalent to the problem of finding the longest path from a directed weighted DAG (with $|V|$ nodes and $|E|$ edges), which can be solved via Dynamic Programming in time $|V| \cdot |E|$. To do so, we leverage the Layered Graph representation (see Section \ref{sec: layered_graph}), which is a DAG with $\Theta(n\k)$ nodes and $\Theta(n^2\k)$ edges. More specifically, in the Layered Graph, in layer $h \in [\k]$ the edge $e_h = (u_{h,i}, u_{h+1, j})$ for $i \leq j$ and $i,j \in [n]_0$ corresponds to assigning $j-i$ soldiers on battlefield $h+1$. On each edge, we use as edge weight the battlefield reward taken by assigning the corresponding number of soldiers on the corresponding battlefield, and the longest path of this graph, denoted by ${x}^{*} \in \mathbb{R}^{n^2 \k}$, represents the best response with respect to ${r}$. Thus, we can solve the linear optimization problem in time $\mathcal{O}(n^2\k)$. 

The only thing left to do is to get ${V^*} = \argmax_{{V} \in \mathcal{V}_i} {r}^T {V}$ from ${x}^*$. It is straightforward that there exists an one-to-one correspondence between these two vectors. To get ${V}^*$, one must do the following: 

\begin{itemize}
    \item Initialize ${V}^* = 0 \in \mathbb{R}^d$.
    \item For each layer $h \in [\k]$, given the selected edge $e_h = (u_{h,i}, u_{h+1, j})$ in ${x}^*$, assign ${V}^*[h,j-i]=1$. 
\end{itemize}

\end{proof}

\begin{lemma}[Polynomial-time algorithm for $C$-barycentric spanner in Colonel Blotto]\label{lem:barycentric_algo}
In Colonel Blotto, for $C > 1$, there exists a polynomial-time algorithm that computes a $C$-approximate barycentric spanner for $\mathcal{V}_i$ in time $\mathcal{O}(n^4 \k^3 \log_C(n\k))$. 
\end{lemma}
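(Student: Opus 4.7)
The plan is to directly combine Proposition \ref{prop:barycentric_algo} (the Awerbuch--Kleinberg black-box algorithm for computing an approximate barycentric spanner from a linear-optimization oracle) with the best-response oracle of Proposition \ref{prop:best-response}, which runs in time $\mathcal{O}(n^2\k)$. Since the succinct representation places each action vector in $\mathbb{R}^d$ with $d=(n+1)\k = \Theta(n\k)$, Proposition \ref{prop:barycentric_algo} promises a $C$-approximate spanner after $\mathcal{O}(d^2\log_C d)=\mathcal{O}(n^2\k^2\log_C(n\k))$ oracle calls, and multiplying by the per-call cost already gives the advertised $\mathcal{O}(n^4\k^3\log_C(n\k))$ runtime.

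Concretely, I would proceed in three steps. First, verify the hypotheses of Proposition \ref{prop:barycentric_algo}: compactness is immediate since $\mathcal{V}_i\subset\{0,1\}^d$ is finite, and the linear-optimization oracle over $\mathcal{V}_i$ is provided by Proposition \ref{prop:best-response}. Second, instantiate the algorithm of Proposition \ref{prop:barycentric_algo} by invoking the oracle on each of the $\mathcal{O}(d^2\log_C d)$ reward vectors it requires; each call returns the corresponding best-response action $V^\star\in\mathcal{V}_i$. Third, sum the costs: $\mathcal{O}(n^2\k^2\log_C(n\k))$ oracle calls times $\mathcal{O}(n^2\k)$ per call equals $\mathcal{O}(n^4\k^3\log_C(n\k))$.

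The one mildly subtle point, and the only real obstacle, is that the set $\mathcal{V}_i$ lives inside a proper affine subspace of $\mathbb{R}^d$: every feasible $v$ satisfies the $\k$ equalities $\sum_{s\in[n]_0} v[h,s]=1$ (one-hot per battlefield) together with $\sum_{h,s} s\cdot v[h,s]=n$ (total soldiers). Hence the technical precondition in Proposition \ref{prop:barycentric_algo} that the set not be contained in a proper linear subspace fails verbatim. The standard fix, which I would apply, is to pick any fixed $v_0\in\mathcal{V}_i$ and run the Awerbuch--Kleinberg procedure on the translated set $\mathcal{V}_i-v_0$ restricted to its linear span $L\subseteq\mathbb{R}^d$ of dimension $d'\le d$; the linear-optimization oracle over this translated and restricted set is realized, up to an affine correction in the objective, by the same best-response oracle of Proposition \ref{prop:best-response}. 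The resulting spanner of $\mathcal{V}_i-v_0$ then yields a $C$-approximate barycentric spanner of $\mathcal{V}_i$ of size at most $d$, and the runtime analysis above is unchanged since $d'\le d=\Theta(n\k)$.
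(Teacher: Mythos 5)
Your core argument is exactly the paper's: invoke Proposition~\ref{prop:barycentric_algo} with the best-response oracle of Proposition~\ref{prop:best-response} as the linear-optimization oracle, and multiply the $\mathcal{O}(d^2\log_C d)$ oracle calls (with $d=\Theta(n\k)$) by the $\mathcal{O}(n^2\k)$ per-call cost to get $\mathcal{O}(n^4\k^3\log_C(n\k))$. That part matches the paper's proof step for step. Where you diverge is on the precondition of Proposition~\ref{prop:barycentric_algo}: the paper simply asserts that $\mathcal{V}_i$ is not contained in any proper linear subspace of $\mathbb{R}^d$, whereas you flag that this fails. You are right that it fails (and your reasoning can even be sharpened: the one-hot constraints are affine, but taking differences of two of them gives genuinely homogeneous relations such as $\sum_s v[1,s]-\sum_s v[2,s]=0$, so for $\k\ge 2$ the set really does lie in a proper \emph{linear} subspace of dimension at most $d-\k$). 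So you have caught a gap in the paper's own argument rather than introduced one.

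The problem is with your proposed repair. Translating by a fixed $v_0\in\mathcal{V}_i$ and computing a $C$-approximate spanner of $\mathcal{V}_i-v_0$ does not, as claimed, yield a $C$-approximate barycentric spanner of $\mathcal{V}_i$ in the sense of the paper's definition: writing $v=v_0+\sum_j\alpha_j(b_j-v_0)=(1-\sum_j\alpha_j)v_0+\sum_j\alpha_j b_j$, the coefficient on $v_0$ can be as large as $1+Cd'$ in magnitude, so the guarantee degrades by a factor of order $d$ --- and this matters downstream, since the bandit analysis in Appendix~\ref{appendix: bandit} bounds variances by $C^2d^2/\gamma$ using precisely the $[-C,C]$ coefficient bound on the untranslated spanner elements. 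The standard (and simpler) fix is to run the Awerbuch--Kleinberg procedure inside the \emph{linear span} of $\mathcal{V}_i$ itself, with no translation: the same best-response oracle optimizes any linear objective restricted to that span, and one obtains $d'<d$ linearly independent actions $b_1,\dots,b_{d'}\in\mathcal{V}_i$ such that every $v\in\mathcal{V}_i$ is $\sum_j\alpha_j b_j$ with $|\alpha_j|\le C$; the matrix $B$ is then $d\times d'$ and the subsequent analysis goes through with pseudo-inverses on the span. The runtime accounting you give is unaffected. So: same route as the paper, a correctly identified hypothesis failure that the paper glosses over, but the specific translation-based patch as stated does not deliver the object the lemma (and the later regret analysis) needs.
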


\begin{proof}
To prove Lemma \ref{lem:barycentric_algo}, first observe that ${\mathcal{V}}_i$ satisfies Proposition \ref{prop:barycentric_algo} because it is compact and is not contained in any proper linear subspace of $\mathbb{R}^d$. Now, we need to have access to an oracle for optimizing linear functions over $\mathcal{V}_i$. To do so, we utilize the oracle for finding a best response given a reward vector from Proposition \ref{prop:best-response}.

Therefore, each oracle call needs time $\mathcal{O}(n^2\k)$. Now, using Proposition \ref{prop:barycentric_algo}, to compute the barycentric spanner for $\mathcal{V}_i$, one may use the algorithm defined in \cite{awerbuch2008spanners} that computes an approximate $C$-spanner (with $C > 1$) in time $\mathcal{O}(n^4 \k^3 \log_C(n\k))$.

\subsection{Computing an Approximate-Barycentric Spanner for Graphic Matroid Congestion Games}

The idea is similar to above but using the Kruskal algorithm as the oracle. To compute a $C$-barycentric spanner here we need $\mathcal{O}(|E|^2\log_C (|E|)$.

\newpage

\section{Bandit No-Regret Learning: Analysis of Algorithm \ref{alg: bandit}}\label{appendix: bandit}

We have the following:

\begin{align} 
\sup _{{x}, {y} \in \mathcal{V}} {x}^T \Sigma_t^{+} {y} & \leq \sup _{\alpha, \beta \in[-C,C]^d} {\alpha}^T B^T \Sigma_t^{+} B {\beta} \\ 
& \leq \sup _{\|\alpha\|=\|\beta\|=C\sqrt{d}} {\alpha}^T B^T \Sigma_t^{+} B {\beta} \\
& \leq C^2d \cdot \|B^T \Sigma_t^{+} B\|_2 \\ 
&= C^2d \cdot \lambda_{\max }\left(B^T \Sigma_t^{+} B\right) \\ 
& = C^2d \cdot \frac{1}{\lambda_{\text{min}}\left(B^{-1} \Sigma_t B^{-T}\right)} \label{ineq:eigenvalue_explain1} \\ 
& =C^2\frac{d}{\lambda_{\min }\left(B^{-1}\left(\frac{\gamma}{d} B B^T+(1-\gamma)\mathbb{E}_{\hat{p}_t}[{v}{v}^T]\right) B^{-T}\right)} \\ 
& \leq \frac{C^2 d^2}{\gamma \lambda_{\min }\left(B^{-1} B B^T B^{-T}\right)} \label{ineq:eigenvalue_explain2} \\ 
&=\frac{C^2 d^2}{\gamma}
\end{align}

where $B \in R^{d \times d}$ is a full rank matrix that has the approximate spanners as columns. In \ref{ineq:eigenvalue_explain1}, we have used the fact that $B$ is invertible because it is full rank, and also that $\Sigma_t$ is non-singular (see \cite{dani2007price}) which implies that $\Sigma_t^{-1}$ exists and thus the pseudo-inverse matrix $\Sigma_t^{+}$ equals the inverse matrix, i.e., $\Sigma_t^{+} = \Sigma_t^{-1}$. Moreover, in \ref{ineq:eigenvalue_explain2}, we used the Weyl's inequality.

Let $C_t$ be the autocorrelation matrix under the law of the exploration distribution on the barycentric spanner. We aim to bound the minimum non-zero eigenvalue of $C_t$. Similarly to above, we have

\begin{align} 
\sup _{{x} \in \mathcal{V}} {x}^T C_t^{-1} {x} & \leq \sup _{\alpha \in[-C,C]^d} {\alpha}^T B^T C_t^{-1} B {\alpha} \\ 
& \leq \sup _{\|\alpha\|=\|\beta\|=C\sqrt{d}} {\alpha}^T B^T C_t^{-1} B {\alpha} \\
& \leq C^2d \cdot \|B^T C_t^{-1} B\|_2 \\ 
&= C^2d \cdot \lambda_{\max }\left(B^T C_t^{-1} B\right) \\ 
& = C^2d \cdot \frac{1}{\lambda_{\text{min}}\left(B^{-1} C_t B^{-T}\right)}  \\ 
& =C^2\frac{d}{\lambda_{\min }\left(B^{-1}\left(\frac{1}{d} B B^T\right) B^{-T}\right)} \\ 
& \leq C^2 d^2.
\end{align}

Let $\mathbb{E}_t[\cdot]$ denote expectation conditioned on the past events; i.e. the realized rewards received and the actions taken by player $i$ up to time step $t-1$. Also, let $\boldsymbol{1}$ be the ones vector. We define $L_t(v) = \ell_t \cdot v$, and similarly $\widehat{L}_t(v) = \widehat{\ell}_t \cdot v$. In the following analysis, we drop the superscript $i$ and sometimes write $q_t$ and $p_t$ for the distributions of player $i$. For now, we assume that $T \geq 8d^2\m$, an assumption that will be verified later by the average regret guarantee for the convergence to CCE.

Using the above analysis, along with the basic lemma of \cite{bartlett2008high}, we can easily get the following lemma with the basic properties of the algorithm.

\begin{lemma}\label{lem:bandits_properties}
For any $v \in \mathcal{V}_i$ and $t \in[T]$, the following hold:
\begin{enumerate}
    \item (unbiasedness) $\mathbb{E}_t[\hat{\ell}_t] = \ell_t$ 
    \item $v^{T} \mathbf{\Sigma}_t^{-1} v \leq \frac{d^2 C^2}{\gamma}$
    \item $\left|\widehat{L}_t(v)\right| \leq \frac{d^2 \m C^2}{\gamma}$
    \item $\mathbb{E}_t[v^{T}_t \mathbf{\Sigma}_t^{-1} v_t] = d$
    \item $\mathbb{E}_t\left[\left(\widehat{\ell}_t\cdot v\right)^2\right] \leq \m^2 v^{T} \mathbf{\Sigma}_t^{-1} v $
\end{enumerate}   
\end{lemma}

Now, using Lemma \ref{lem:bandits_properties} and selecting $\eta = \frac{\gamma}{d^2 \m C^2} = \frac{1}{d^{4/3} \m^{2/3} C^2 T^{1/3}}$,  we have

$$
\left|\eta \widehat{L}_t(v)\right| \leq 1.
$$

\medskip

\begin{lemma}[Bernstein's inequality for martingales]\label{bernstein}

Let $Y_1$, ..., $Y_T$ be a martingale difference sequence. Suppose that $Y_t \in[a, b]$ and
$$
\mathbb{E}\left[Y_t^2 \mid X_{t-1}, \ldots, X_1\right] \leq  \sigma
$$

for all $t \in\{1, \ldots, T\}$. Then for all $\varepsilon>0$,
$$
\operatorname{Pr}\left(\sum_{t=1}^T Y_t>\sqrt{2 \sigma
T \ln (1 / \delta)}+2 \ln (1 / \delta)(b-a) / 3\right) \leq \delta
$$
    
\end{lemma}

\begin{lemma} \label{lem:bernstein_app}
Simultaneously for any $v \in \mathcal{V}_i$, with probability at least $1-\delta$, it holds that
$$
\sum_{t=1}^T \left(\widehat{L}_t(v) - L_t(v) \right) \leq \left({\frac{d\m^{3/2}C}{\sqrt{\gamma}}} + \m^{3/2} \right) \sqrt{2 T\ln (d / \delta)}+\frac{4}{3} \ln (d / \delta)\left(\frac{d^2\m^2C^2}{\gamma} + \m^2 \right)
$$  
\end{lemma}

\begin{proof}

Fix any $v \in \mathcal{V}_i$, we define $Y_t(v) = \widehat{L}_t(v)-L_t(v)$. $Y_t$ is a martingale difference sequence. Using Lemma \ref{lem:bandits_properties}, the following hold:

\begin{align}
\bullet \quad \sqrt{\operatorname{Var}_t Y_t(v)} &= \sqrt{\operatorname{Var}_t\left[\widehat{L}_t(v)-L_t(v)\right]} \\
& \leq \sqrt{\mathbb{E}_t\left[\left(\widehat{L}_t(v)- L_t(v)\right)^2\right]} \\ 
&\leq \sqrt{\mathbb{E}_t\left[\left(\widehat{L}_t(v)\right)^2\right]} + \sqrt{\mathbb{E}_t\left[\left({L}_t(v)\right)^2\right]} \label{ineq: cauchy} \\
& \leq \m \sqrt{v^{\top} \Sigma_t^{-1} v} + \m \\
&\leq \m\sqrt{\frac{d^2C^2}{\gamma}} + \m \label{ineq: bandits_some_step} \\
&= {\frac{\m dC}{\sqrt{\gamma}}} + \m 
\end{align}

where in \ref{ineq: cauchy} we used the Cauchy-Schwarz inequality and in \ref{ineq: bandits_some_step} we used Lemma \ref{lem:bandits_properties}.

\begin{align}
\bullet \quad 
|\widehat{L}_t(v)-L_t(v)| \leq \frac{\m d^2C^2}{\gamma} + \m 
\end{align}

Now by applying the Bernstein's inequality (Lemma \ref{bernstein}), with probability at least $1 - \frac{\delta}{|\mathcal{V}_i|}$ we obtain

\begin{align}
    \sum_{t=1}^{T} Y_t(v) & \leq  \left({\frac{\m dC}{\sqrt{\gamma}}} + \m \right) \sqrt{2 T\ln (|\mathcal{V}_i| / \delta)}+\frac{4}{3} \ln (|\mathcal{V}_i| / \delta)\left(\frac{\m d^2C^2}{\gamma} + \m \right) \\
    & = \left({\frac{d\m^{3/2}C}{\sqrt{\gamma}}} + \m^{3/2} \right) \sqrt{2T \ln (d / \delta)} 
    + \frac{4}{3} \ln (d / \delta)\left(\frac{\m^2d^2C^2}{\gamma} + \m^2 \right)
\end{align}

Taking the union bound, we obtain the desired result.
    
\end{proof}

\begin{lemma}\label{lem:baryc_bound}
With probability at least $1-\delta$,
\begin{align}
    \sum_{t=1}^T \sum_{v \in \mathcal{B}} \frac{\gamma}{n\m} \widehat{L}_t(v) & 
    \leq \gamma \m T 
    +  \left(\sqrt{\gamma}d\m^{3/2}C + \gamma {\m}^{3/2} \right) \sqrt{2 T\ln (d / \delta)}  +\frac{4}{3} \ln (d / \delta)\left(\m^2d^2C^2 + \gamma \m^2 \right)\nonumber
\end{align}

\end{lemma}

\begin{proof}
Using Lemma \ref{lem:bernstein_app}, with probability at least $1-\delta$, we have, simultaneously for all $v \in \mathcal{B}$, 

\begin{align}
\frac{1}{d} \sum_t \gamma \widehat{L}_t(v) 
& \leq \frac{1}{d} \bigg( 
\gamma \sum_t L_t(v) 
+ \gamma \left({\frac{d\m^{3/2}C}{\sqrt{\gamma}}} + \m^{3/2} \right) \sqrt{2 \ln (d / \delta)} \nonumber 
\\ & +\frac{4\gamma}{3} \ln (d / \delta)\left(\frac{d^2\m^2C^2}{\gamma} + \m^2 \right)\bigg) \\
& \leq \frac{1}{d} \bigg( 
\gamma \m T 
+  \left(\sqrt{\gamma}d\m^{3/2}C + \gamma \m^{3/2} \right) \sqrt{2 \ln (d / \delta)} \nonumber 
\\ & +\frac{4}{3} \ln (d / \delta)\left(d^2\m^2C^2 + \gamma \m^2 \right)\bigg) 
\end{align}


Summing over the $d$ elements of the spanner, and using the fact that $L_t(v) \leq \m$, we get the result of the statement.
\end{proof}

\begin{lemma} \label{lem: bandits_1}
With probability at least $1-\delta$,
$$
\begin{aligned}
\sum_{t=1}^T \ell_t \cdot v_t -\sum_{t=1}^T \sum_{v \in \mathcal{V}_i} p_t(v) \widehat{\ell}_t \cdot v \leq (\m\sqrt{d}+\m) \sqrt{2 T \ln (1 / \delta)}+\frac{4}{3} \ln (1 / \delta)\left(\frac{d^2 \m C^2}{\gamma} + \m\right).
\end{aligned}
$$
\end{lemma}

\begin{proof}
The proof follows directly from the proof of Lemma 6 in \cite{bartlett2008high}, using $|Y_t| \leq \frac{d^2 \m C^2}{\gamma} + \m$, and $\operatorname{Var}_t Y_t \leq \m\sqrt{d}+\m$.  
\end{proof}

\medskip

\begin{lemma} \label{lem: bandits_2}
With probability at least $1-\delta$, 
\begin{align}
    \sum_{t=1}^T \eta \sum_{v \in \mathcal{V}_i} p_t(v) \left(\widehat{\ell}_t \cdot v\right)^2 \nonumber  
    &\leq \eta d\m^2T + \eta \left(\frac{d^2\m^2C^2}{\gamma} + d\m^2\right) \sqrt{2T \ln(1/\delta)} \nonumber 
\end{align}
\end{lemma}

\begin{proof}
The proof directly follows the proof of Lemma 8 from \cite{bartlett2008high}, by using that the summands $v_t^T \Sigma_t^{-1} v_t$ are bounded by $\frac{d^2 C^2}{\gamma}$.    

\end{proof}

\begin{theorem}[Theorem \ref{thm: no_regret_bandit} restated]\label{thm: no_regret_bandit__}
For $T \geq 8d^2\m$ and for any $\delta \in (0,1)$, the sequence $v_1, \ldots, v_T$ of actions played by Algorithm \ref{alg:semi_band} with $\gamma = \frac{d^{2/3} \m^{1/3}
}{T^{1/3}}$ and $\eta = \frac{1}{4d^{4/3} \m^{2/3} T^{1/3}}$ satisfies
$$ R_T \leq \mathcal{\widetilde{O}}\left(d^{2/3}\m^{4/3}T^{2/3}\right) .$$
\end{theorem}

\begin{proof}

Following the standard analysis of MWU (also similar to our analysis in the semi-bandit setting), we have that,

\begin{align}
\frac{W_{t+1}}{W_t} & =  \sum_{v \in \mathcal{V}_i} \frac{w_t(v) \exp \left(-\eta \widehat{L}_t(v)\right)}{W_t} \\
& \leq  \sum_{v \in \mathcal{V}_i} \frac{w_t(v)}{W_t}\left(1-\eta \widehat{L}_t(v) + \eta^2 \left(\widehat{L}_t(v)\right)^2\right) \\
&\leq  1+\frac{\eta}{1-\gamma}\left(-\sum_{v \in \mathcal{V}_i} p_t(v) \widehat{L}_t(v)\right. \left.+\sum_{v \in \mathcal{B}} \frac{\gamma}{d} \widehat{L}_t(v)+\sum_{v \in \mathcal{V}_i} p_t(v) \eta\left(\widehat{L}_t(v)\right)^2\right) \label{ineq: potential}
\end{align}

\medskip

since by definition of $p_t$,
$$
\frac{w_t(v)}{W_t}=\frac{p_t(v)-\frac{\gamma}{d}\mathbbm{1}\{v \in \mathcal{B}\}}{1-\gamma} .
$$

Fix any $v^* \in \mathcal{V}_i$. We have that,

\begin{align}
\ln \left(\frac{W_{T+1}}{W_1}\right) 
& \geq -\eta \left(\sum_{t=1}^T \widehat{L}_t(v^*)\right)-\ln |\mathcal{V}_i| \\
& \geq - \eta \left[ \sum_{t=1}^T {L}_t(v^*) + \left( \frac{d\m^{3/2}C}{\sqrt{\gamma}} + \m^{3/2} \right) \sqrt{2T \ln ( d/\delta)} 
+ \frac{4}{3} \ln ( d/\delta) \left( \frac{d^2\m^2C^2}{\gamma} + \m^2 \right) + \frac{\m \ln d}{\eta} \right] \label{ineq: last_step} \\
& \geq - 2\eta \bigg[ \sum_{t=1}^T {L}_t(v^*) + \left( d^{2/3}\m^{4/3}CT^{2/3} + \m^{3/2}\sqrt{T} \right) \sqrt{2 \ln ( d/\delta)} \nonumber \\
& \quad \quad \quad \quad + \frac{4}{3} \ln ( d/\delta) \left( {d^{2/3}\m^{4/3}C^2 T^{1/3}} + \m^2 \right) + \frac{\m \ln d}{\eta} \bigg] \label{ineq: last_step_}
\end{align}

where in \ref{ineq: last_step} we used Lemma \ref{lem:bernstein_app}, and in \ref{ineq: last_step_} we used the fact that $\gamma = \frac{d^{2/3}\m^{1/3}}{T^{1/3}}$.

Putting these together, using Lemmas \ref{lem:baryc_bound}, \ref{lem: bandits_1} and \ref{lem: bandits_2} in \ref{ineq: potential}, we have

\begin{align}
\frac{W_{t+1}}{W_t} \leq 1 + \frac{\eta}{1-\gamma} \Bigg( & - \sum_{t=1}^{T} L_t(u_t) +
(\m\sqrt{d}+\m) \sqrt{2 T \ln (1 / \delta)}+\frac{4}{3} \ln (1 / \delta)\left(\frac{d^2 \m C^2}{\gamma} + \m\right)
\nonumber \\ 
& + \gamma \m T +  \left(\sqrt{\gamma}d\m^{3/2}C + \gamma \m^{3/2} \right) \sqrt{2T \ln (d / \delta)}  +\frac{4}{3} \ln (d / \delta)\left(d^2\m^2C^2 + \gamma \m^2 \right) \nonumber \\ 
& + \eta d\m^2T + \eta \left(\frac{d^2\m^2C^2}{\gamma} + d\m^2\right) \sqrt{2T \ln(1/\delta)}  \Bigg) 
\end{align}

Taking logs, using the fact that $\ln (1+x) \leq x$, and also the fact that $\frac{\eta}{1-\gamma} \leq 2\eta$, because we have assumed that $T \geq 8d^2\m$, and summing over $t$, we have

\begin{align}
\ln \left(\frac{W_{T+1}}{W_1}\right) \leq 
2\eta \Bigg( & - \sum_{t=1}^{T} L_t(u_t) +
(\m\sqrt{d}+\m) \sqrt{2 T \ln (1 / \delta)}+\frac{4}{3} \ln (1 / \delta)\left(\frac{d^2 \m C^2}{\gamma} + \m\right)
\nonumber \\ 
& + \gamma \m T +  \left(\sqrt{\gamma}d\m^{3/2}C + \gamma \m^{3/2} \right) \sqrt{2 T\ln (d / \delta)}  +\frac{4}{3} \ln (d / \delta)\left(d^2\m^2C^2 + \gamma \m^2 \right) \nonumber \\ 
& + \eta d\m^2T + \eta \left(\frac{d^2\m^2C^2}{\gamma} + d\m^2\right) \sqrt{2T \ln(1/\delta)} \Bigg) 
 \label{bandits: last_last_step} \\
= 2\eta \Bigg( & - \sum_{t=1}^{T} L_t(u_t) +
(\m\sqrt{d}+\m) \sqrt{2 T \ln (1 / \delta)}+\frac{4}{3} \ln (1 / \delta)\left({d^{4/3} \m^{2/3} C^2 T^{1/3}} + \m\right)
\nonumber \\ 
& +  d^{2/3}\m^{4/3}T^{2/3} +  \left(d^{4/3}\m^{5/3}C T^{1/3} + d^{2/3}\m^{11/6}T^{1/6} \right) \sqrt{2 \ln (d / \delta)}  \nonumber \\
& + \frac{4}{3} \ln (d / \delta)\left(d^2\m^2C^2 +  d^{2/3}\m^{7/3} T^{1/3} \right) \nonumber \\ 
& + \frac{\m^{2/3}T^{2/3}}{d^{1/3}} + \left({\m} \sqrt{T} + \frac{\m^{2/3}T^{1/6}}{d^{1/3}}\right) \sqrt{2 \ln(1/\delta)} \Bigg) 
 \label{bandits: last_last_step_}
\end{align}

where in \ref{bandits: last_last_step_} we used the definitions of $\gamma$ and $\eta$.

Finally, using \ref{ineq: last_step_} and \ref{bandits: last_last_step_}, rearranging terms, dividing with $\eta$, using the fact that $\ln d / \eta = d^{4/3}\m^{2/3} T^{1/3} \ln d$, and rescaling $\delta = 4\delta$, with probability at least $1-\delta$, simultaneously for all $u^* \in \mathcal{V}_i$, we have that, 

$$ \sum_{t=1}^{T} \left( L_t(v_t) - L_t(v^*) \right) \leq \mathcal{\widetilde{O}}\left(d^{2/3}\m^{4/3}T^{2/3}\right) .$$

\end{proof}

\newpage

\section{Kernelization in Colonel Blotto games}\label{appendix: algo_kernel_blotto}

\begin{algorithm2e}[h]
\DontPrintSemicolon
\caption{Efficient First-Moment Kernel Computations in Colonel Blotto games\hspace*{-2cm}}\label{alg:blotto_kernel}
\KwData{$C_t$}
{\color{blue}/$^*$ Compute the partial products $P_1$ and $P_2$ $^*$/} \;
$P^{(t)}_1[0] = 1$\;
\For{$h = 0, ..., \k-1$}{
    $P_{\text{left}}(z) =  P^{(t)}_1[h](z)\cdot \sum\limits_{s=0}^n C_t[h+1,s]\cdot z^s$\;
    $P^{(t)}_1[h+1](z)=\text{truncate } P_{\text{left}}(z) \text{ to degree } n$ \;
}
$P^{(t)}_2[0] = 1$\;
\For{$h = 0, ..., \k-1$}{
    $P_{\text{right}}(z) = P^{(t)}_2[h](z)\cdot \sum\limits_{s=0}^n C_t[h+1,s]\cdot z^s$\;
    $P^{(t)}_2[h+1](z)=\text{truncate } P_{\text{right}}(z) \text{ to degree } n$\;
}
{\color{blue}/$^*$ Compute the $d+1$ kernels $^*$/}\;
$\K(C_t,\textbf{1})=n$-th degree coefficient of $P^{(t)}_1[k]$\;
\For{$h = 1, ..., \k$}{
    $P_{-h} = P^{(t)}_1[h-1](z) \cdot P^{(t)}_2[\k-h](z)$\;
    $\sum\limits_{s=0}^n \alpha_s \cdot z^s=\text{truncate } P_{-h}  \text{ to degree } n$\;
    \For{$s = 0, ..., n$}{
        $\K(C_t,\bar{e}_{h,s})=\K(C_t,\textbf{1})-\alpha_{n-s}\cdot C_t[h,s]$\;
    }
}
\end{algorithm2e}


\subsection{Proof of Proposition \ref{prop: kernel}}

\begin{proposition}
    For given $x,y \in \{0,1\}^d$, there exists an algorithm that computes the kernel $K(x,y)$ in time $\mathcal{O}(n\k\log n)$. 
\end{proposition}

\begin{proof}
To compute the $n$-th coefficient of  \ref{polynomial}, we execute a running product over the factors of the polynomial. This process involves $k$ updates of the partial product. After each update, the partial product is truncated down to degree $n$. Thus, inductively we ensure that all $k$ multiplications involve polynomials of degree at most $n$. Each multiplication can be implemented with FFT \cite{fft_fourier} in $\mathcal{O}(n\log n)$ time. The overall complexity over the $k$ multiplications is $\mathcal{O}(n\k\log n)$ and after the truncated product is computed the target coefficient is obtained in $\mathcal{O}(1)$. 
\end{proof}

\begin{algorithm}[H]
\setcounter{algorithm}{3}
\caption{Efficient Second-Moment Kernel Computations in Colonel Blotto games}\label{alg:blotto_kernel_2}
\begin{algorithmic}[1]
\REQUIRE $C^{(t)}$
\STATE{\color{blue}\# Compute the interval products} 
\FOR{$h,h'$ in $[\k+1]_0 \times [\k+1]_0$}
\STATE $P_{int}[h,h'] = 1$
\ENDFOR
\FOR{$h = 1, ..., \k-1$}
\STATE $P_{int}[h,h](z) = \sum\limits_{s=0}^n C^{(t)}_{h,s}\cdot z^s$
\FOR{$h' = h, ..., \k-1$}
\STATE $P(z) = P_{int}[h,h'](z)\cdot \sum\limits_{s=0}^n C^{(t)}_{h'+1,s}\cdot z^s$
\STATE $P_{int}[h,h'+1](z)=\text{truncate } P(z) \text{ to degree } n$ 
\ENDFOR
\ENDFOR
\STATE{\color{green}\# $P_{int}[h,h'](z)=\prod\limits_{i=h}^{h'}\sum\limits_{s=0}^n C^{(t)}_{i,s}\cdot z^s$ }

\STATE {\color{blue}\# Compute the $d^2$ kernels}
\FOR{$h = 1, ..., \k$}
\STATE {\color{blue}\# case 1: $h'=h$}
\STATE $P_{-h}(z) = P_{int}[1,h-1](z)\cdot P_{int}[h+1,\k](z)$
\STATE $\sum\limits_{s=0}^n \alpha_s \cdot z^s=\text{truncate } P_{-h}  \text{ to degree } n$
\FOR{$s = 0, ..., n$}
\FOR{$s' = 0, ..., n$}
\STATE $\K(C^{(t)},\bar{e}_{h,h,s,s'})=\K(C^{(t)},\textbf{1})-\alpha_{n-s}\cdot C^{(t)}_{h,s}-\alpha_{n-s'}\cdot C^{(t)}_{h,s'}\cdot \mathbbm{1}\{s\neq s'\}$
\ENDFOR
\ENDFOR
\STATE {\color{blue}\# case 2: $h'>h$}
\IF{$h<\k$}
\FOR{$h' = h+1, ..., \k$}
\STATE $P_h(z)=\sum\limits_{s=0}^n C^{(t)}_{h,s}\cdot z^s$

\FOR{$s = 0, ..., n$}
\STATE $P_{h,s}(z)=P_h(z)-C^{(t)}_{h,s}\cdot z^s$
\STATE $P_{-h,h'}(z) = P_{int}[1,h-1](z) \cdot P_{int}[h+1,h'-1](z)\cdot P_{int}[h'+1,\k](z)$
\STATE $P_{-h'}(z)=P_{-h,h'}(z)\cdot P_{h,s}(z)$
\STATE $\sum\limits_{s=0}^n \alpha_s \cdot z^s=\text{truncate } P_{-h'}  \text{ to degree } n$
\FOR{$s' = 0, ..., n$}
\STATE $\K(C^{(t)},\bar{e}_{h,h',s,s'})=\K(C^{(t)},\textbf{1})-\alpha_{n- s'}\cdot C^{(t)}_{h',s'}$
\ENDFOR
\ENDFOR
\ENDFOR
\ENDIF
\ENDFOR
\end{algorithmic}
\end{algorithm}



\subsection{Proof of Lemma \ref{thm: blotto_time_1}}

The proof is based on Algorithm \ref{alg:blotto_kernel}. We define the running product $P^{(t)}_l[i]$ from left to right, which is the sum of the degree $0$ to $n$ terms of the polynomial $\prod\limits_{i'=1}^i\sum\limits_{j=0}^n C_t[i',j]\cdot z^j$. We can compute all polynomials $P^{(t)}_l[i]$, for $i=1, ..., \k$ in total time $n \k \log n$ using the following induction argument:

Given $P^{(t)}_l[i]$, we compute $P^{(t)}_l[i+1]$ by performing the polynomial multiplication $P^{(t)}_l[i](z)\cdot \sum\limits_{j=0}^n C_t[i+1,j]\cdot z^j$ and truncating all terms of degree greater than $n$. The two multiplied polynomials have degree $n$, so the multiplication can be done in time $n\log n$ using FFT, while the truncation of the higher degree terms can be done in time $n$ since the product polynomial has degree $2n$. Repeating this procedure for $i=1,...,\k-1$ we get all left-to-right partial products in total time $n\k \log n$.

Similarly, we define the running product $P^{(t)}_r[i]$ from right to left as the sum of the degree $0$ to $n$ terms of the polynomial $\prod\limits_{i'=\k-i+1}^{\k}\sum\limits_{j=0}^n C_t[i',j]\cdot z^j$. Similarly to $P^{(t)}_l$, we can compute all right-to-left partial products $P^{(t)}_r[i]$, for $i=1,...,\k$ in total time $n \k \log n$.

Now, using the above partial products, we compute all the kernels required for (O)MWU at time step $t$. All polynomial multiplications in the Algorithm are performed using FFT so that each of them takes time $n \log(n)$. 

Following similar logic as above, via Algorithm \ref{alg:blotto_kernel_2} we get the desired result.

\end{proof}

\subsection{Alternative Proof of Lemma \ref{thm: blotto_time_1}} \label{appendix: proof_blotto}
Efficient sampling of MWU in CBGs has been studied in \cite{beaglehole2023sampling}. A useful tool for this purpose is the partition function defined in equation \ref{eq:partition-function-f}. 
Here we describe their method with details and we extend their ideas to the efficient calculation of first and second order moments of the MWU distribution. 

\begin{remark}
\textit{We give an algorithm (Algorithm \ref{alg:blotto_kernel_2}) that performs the second moment computation in terms of kernels. The algorithm follows a similar logic to Algorithm \ref{alg:blotto_kernel}, but is somewhat more complicated, due to the nature of the problem. In steps where polynomial multiplication is performed, we imply that the multiplication is implemented efficiently through FFT.}
\end{remark}

We remind that the calculation of first order moments was used in our method for learning in the semi-bandit setting and the second moments appear in the calculation of the autocorrelation matrix which is used in the bandit setting. Next we proceed to the technical details of our methods.

Focusing on a single player, at time step $t$,  let $\ell_t[h,s]$ be the loss observed by the player when assigning $s$ soldiers to the $h$-th battlefield, given the assignments of the other players in this battlefield. Moreover, let 
$$c^{(t)}_{h}(s)=\sum_{\tau=1}^t \ell_t[h,s] $$

We define the partition function
\begin{align}
\label{eq:partition-function-f}
    f_h(y)
    = \sum_{ x_1 + \cdots + x_h = y } \prod_{i=1}^h \exp\left( -\eta  c^{(t)}_{i}(x_i)\right) .
\end{align}
We also define the partition function $g_h(y)$, which is similar to $f_h(y)$ but aggregates battlefields in the reverse order.
\begin{align}
\label{eq:partition-function-g}
    g_h(y)
    = \sum_{ x_h + \cdots + x_\k = y } \prod_{i=h}^\k \exp\left( -\eta  c^{(t)}_{i}(x_i)\right) .
\end{align}
Let $L^{(t)}(x_1,..., x_\k)$ be the cumulative loss at timestep $t$. $L^{(t)}$ can be decomposed into the cumulative losses per battlefield as follows:
\begin{align}
    L^{(t)}(x_1,..., x_\k)=\sum_{h=1}^\k  c^{(t)}_{h}(x_h)
\end{align}
Under MWU the probability of some assignment $x_1,..., x_\k$ at timestep $t$ can be written as
\begin{align}
    \label{rag_mwu}
    \Pr[s_1,...,s_\k = x_1, ..., x_k] &\propto \exp\left(-\eta L^{(t)}(x_1,..., x_\k)\right)\\
     &= \exp\left(-\eta\sum_{h=1}^k  c^{(t)}_{h}(x_h)\right)
 \end{align}

\medskip

Marginal probabilities of soldier assignments at a single battlefield can be written as follows:
\begin{align}
    \Pr[s_\k = s] &=  \sum_{ x_1 + \cdots + x_{\k-1} = n-s }\Pr[s_1, ..., s_{\k-1},s_\k = x_1,...,x_{\k-1},s]\\
    &\propto \sum_{ x_1 + \cdots + x_{\k-1} = n-s } \exp\left(-\eta\sum_{h=1}^{\k-1}  c^{(t)}_{h}(x_h)-\eta c^{(t)}_{\k}(s)\right)\\
    &= \exp\left( -\eta c^{(t)}_{\k}(s) \right) \sum_{ x_1 + \cdots + x_{\k-1} = n-s } \exp\left(-\eta\sum_{h=1}^{\k-1}  c^{(t)}_{h}(x_h)\right)\\
    &= \exp\left( -\eta c^{(t)}_{\k}(s) \right) \sum_{ x_1 + \cdots + x_{\k-1} = n-s } \prod_{h=1}^{\k-1} \exp\left(-\eta  c^{(t)}_{h}(x_h)\right)\\
    &=\exp\left( -\eta c^{(t)}_{\k}(s) \right) \cdot f_{\k-1}(n - s)
\end{align}
Moreover, we can compute the conditional probability of each soldier assignment at a single battlefield, given a set of soldier assignments at other battlefields:
\begin{align}
    \Pr \{s_{\k-h} = s \mid s_{\k-h+1},\ldots,s_{\k} \}  &=  \sum_{\substack{ x_1 + \cdots + x_{\k-h-1} =\\ n -s-
 \sum\limits_{j=\k-h+1}^\k s_j  }}\Pr[s_1,..., s_{\k-h-1},s_{\k-h} = x_1,...,x_{\k-h-1},s \mid s_{\k-h+1},\ldots,s_{\k}]\\
    &\propto \sum_{\substack{ x_1 + \cdots + x_{\k-h-1} =\\ n -
s-\sum\limits_{j=\k-h+1}^k s_j  }} \exp\left(-\eta\sum_{j=1}^{\k-h-1}  c^{(t)}_{j}(x_j)-\eta c^{(t)}_{\k-h}(s)-\eta
 \sum\limits_{j=\k-h+1}^\k c^{(t)}_{j}(s_j) \right)\\
&= \exp\left( -\eta c^{(t)}_{\k-h}(s)-\eta
 \sum\limits_{j=\k-h+1}^k c^{(t)}_{j}(s_j) \right) \sum_{\substack{ x_1 + \cdots + x_{\k-h-1} =\\ n -s-
 \sum\limits_{j=\k-h+1}^k s_j  }} \exp\left(-\eta\sum_{j=1}^{\k-h-1}  c^{(t)}_{j}(x_j)\right)\\
 &= \exp\left( -\eta c^{(t)}_{\k-h}(s)-\eta
 \sum\limits_{j=\k-h+1}^\k c^{(t)}_{j}(s_j) \right) f_{\k-h-1}\left(n -
\left( \sum_{j=\k-h+1}^\k s_j \right) - s\right)\\
 &\propto\exp\left( -\eta c^{(t)}_{\k-h}(s) \right) \cdot f_{\k-h-1}\left(n -
\left( \sum_{j=\k-h+1}^\k s_j \right) - s\right)
\end{align}

Similarly, in terms of the partition function $g_h(y)$, we derive
\begin{align}
     \Pr[s_1 = s] \propto \exp\left( -\eta c^{(t)}_{1}(s) \right) \cdot g_{\k-1}(n - s)
\end{align}
\begin{align}
    \Pr \{s_{h} = s \mid s_{1},\ldots,s_{h-1} \} 
\propto \exp\left( -\eta c^{(t)}_{h}(s) \right) \cdot g_{\k-h-1}\left(n -\left( \sum_{j=1}^{h-1} s_j \right) - s\right) 
\end{align}
The conditional probabilities can be used to implement an efficient sampling procedure for the MWU distribution, as was proposed in \cite{beaglehole2023sampling}. For completeness we write the algorithm below.

\setcounter{algorithm}{4}
\begin{algorithm}[h]
\caption{Sampling from the MWU distibution in Colonel Blotto games}\label{alg:sampling_blotto}
\begin{algorithmic}[1]
\REQUIRE Soldiers $n \geq 0$, battlefields $\k \geq 1$ and cumulative loss $c^{(t)}_{h}(s)$ for $h,s\in[\k]\times[n]_0$ 
\STATE $f_0(s)=1$ for all $s \in [n]_0$ 
\FOR{$h = 1, ..., \k-1$}
\STATE Using FFT, calculate the convolution $(a\ast b) (s)$  where $a(s)=\exp\left( -\eta c^{(t)}_{h}(s)\right)$ and $b(s)=f_{h-1}(s)$, $s \in [n]_0$.
\STATE $\forall$ $ s \in [n]_0$, calculate the partition function for battlefield $h$:
$$ f_h(s) = \sum_{s'=0}^s \exp\left( -\eta c^{(t)}_{h}(s') \right) \cdot f_{h-1} \left( s - s' \right)=(a\ast b) (s),\ $$
\ENDFOR
\STATE Sample the number $s_\k$ of soldiers at the last battlefield: $$\Pr[s_\k = s] \propto \exp\left( -\eta c^{(t)}_{\k}(s) \right) \cdot f_{\k-1}(n - s),\ s \in [n]_0$$
\FOR{$h = 1, ..., \k-1$}
\STATE Sample the number $s_{\k-h}$ of soldiers at battlefield $\k-h$ given the numbers of soldiers, $s_{\k-h+1},\ldots,s_{\k}$, assigned to battlefields $\k-h+1,\ldots, \k$ as follows:
$$\Pr \{s_{\k-h} = s \mid s_{\k-h+1},\ldots,s_{\k} \} 
\propto 
\exp\left( -\eta c^{(t)}_{\k-h}(s) \right) \cdot f_{\k-h-1}\left(n -
\left( \sum_{j=\k-h+1}^\k s_j \right) - s\right)  $$ for $s\in\left[n -
 \sum\limits_{j=\k-h+1}^\k s_j \right]_0$
\ENDFOR
\end{algorithmic}
\end{algorithm}

\begin{remark}
\textit{
Algorithm \ref{alg:sampling_blotto} implements the \textsc{Sampling} procedure of Algorithms \ref{alg: bandit} and \ref{alg:semi_band}. The key point is that instead of explicitly calculating the required kernels, it directly computes the conditional probabilities via a partition function, which corresponds to kernelizing the conditional polytope.
}
\end{remark}

The unconditional marginals that constitute the first moment are calculated as follows:
\begin{align}
     \Pr\left[ s_h = s\right] &= \sum_{ \sum\limits_{j\neq h}x_j = n-s }\Pr [(s_1,...,s_{h},..., s_\k) = (x_1,...,s,..., x_\k)]\\
     &\propto \sum_{ \sum\limits_{j\neq h}x_j = n-s } \exp\left(-\eta\sum\limits_{j\neq h}c^{(t)}_{j}(x_j)  -\eta c^{(t)}_{h}(s)\right)\\
     &=\exp\left( -\eta c^{(t)}_{h}(s)\right)\sum_{s'=0}^{n-s}\sum_{ \sum\limits_{j=1}^{h-1}x_j = s'}\exp\left(-\eta\sum\limits_{j=1}^{h-1}c^{(t)}_{j}(x_j) \right)\sum_{ \sum\limits_{j=h+1}^{\k}x_j= n-s-s'}\exp\left(-\eta\sum\limits_{j=h+1}^{\k}c^{(t)}_{j}(x_j) \right)\\
     &=\exp\left( -\eta c^{(t)}_{h}(s)\right)\sum_{s'=0}^{n-s}f_{h-1}(s')g_{h+1}(n-s-s')\\
     &=\exp\left( -\eta c^{(t)}_{h}(s)\right)(f_{h-1}\ast g_{h+1})(n-s)
\end{align}
We can precompute the partition functions $f_h(y),g_h(y)$, for all $h\in [\k]$ and $y\in [n]_0$ in total time $\k n\log n$ utilizing the self reducible structure of the partition function (see algorithm \ref{alg:sampling} lines 1-5 for details). Then we compute $f_{h-1}\ast g_{h+1}$ for all $h \in [\k]$ and $y\in[n]_0$ in total time $\k n\log n$ with FFT. Using these calculations each term  $\Pr\left[ s_h = s\right]$ computation takes constant time. Note that this method is essentially equivalent to the kernel method we describe in the main paper (Algorithm \ref{alg:blotto_kernel}).

\allowdisplaybreaks

For the calculation of the second-order marginals that constitute the second moment, we will make use of the interval partition function $f_{h,h'}(y)$, that aggregates possible assignments between the $h$ and the $h'$ battlefields.
\begin{align}
\label{eq:interval-partition-function-f}
    f_{h,h'}(y)
    = \sum_{ x_h + \cdots + x_{h'} = y } \prod_{i=h}^{h'} \exp\left( -\eta  c^{(t)}_{i}(x_i)\right) .
\end{align}
\begin{align}
     \Pr\left[  s_h,s_{h'} = s,s'\right] &= \sum_{ \sum\limits_{j\neq h,h'}x_j = n-s-s' }\Pr [(s_1,...,s_{h},...,s_{h'},...,s_\k) = (x_1,...,s,...,s',...,x_\k)]\\
     &\propto \sum_{ \sum\limits_{j\neq h,h'}x_j = n-s-s' } \exp\left(-\eta\sum\limits_{j\neq h,h'}c^{(t)}_{j}(x_j)  -\eta c^{(t)}_{h}(s)-\eta c^{(t)}_{h'}(s')\right)\\\nonumber
     &=\exp\left( -\eta c^{(t)}_{h}(s)\right)\exp\left( -\eta c^{(t)}_{h'}(s')\right)
     \sum_{x=0}^{n-s-s'}\sum_{y=0}^{n-s-s'-x}\sum_{ \sum\limits_{j=1}^{h-1}x_j = x}\exp\left(-\eta\sum\limits_{j=1}^{h-1}c^{(t)}_{j}(x_j) \right)\\
     &\ \sum_{ \sum\limits_{j=h+1}^{h'-1}x_j= y}\exp\left(-\eta\sum\limits_{j=h+1}^{h'-1}c^{(t)}_{j}(x_j) \right)\sum_{ \sum\limits_{j=h'+1}^{\k}x_j= n-s-s'-x-y}\exp\left(-\eta\sum\limits_{j=h'+1}^{\k}c^{(t)}_{j}(x_j) \right)\\\nonumber
     &=\exp\left( -\eta c^{(t)}_{h}(s)\right)\exp\left( -\eta c^{(t)}_{h'}(s')\right)
     \sum_{x=0}^{n-s-s'}\sum_{ \sum\limits_{j=1}^{h-1}x_j = x}\exp\left(-\eta\sum\limits_{j=1}^{h-1}c^{(t)}_{j}(x_j) \right)\\
     &\ \sum_{y=0}^{n-s-s'-x}\sum_{ \sum\limits_{j=h+1}^{h'-1}x_j= y}\exp\left(-\eta\sum\limits_{j=h+1}^{h'-1}c^{(t)}_{j}(x_j) \right)\sum_{ \sum\limits_{j=h'+1}^{\k}x_j= n-s-s'-x-y}\exp\left(-\eta\sum\limits_{j=h'+1}^{\k}c^{(t)}_{j}(x_j) \right)\\\nonumber
     &=\exp\left( -\eta c^{(t)}_{h}(s)\right)\exp\left( -\eta c^{(t)}_{h'}(s')\right) \sum_{x=0}^{n-s-s'}f_{1,h-1}(x)\\
     &\ \sum_{y=0}^{n-s-s'-x}f_{h+1,h'-1}(y) f_{h'+1,\k}(n-s-s'-x-y)\\\nonumber
      &=\exp\left( -\eta c^{(t)}_{h}(s)\right)\exp\left( -\eta c^{(t)}_{h'}(s')\right) \sum_{x=0}^{n-s-s'}f_{1,h-1}(x)(f_{h+1,h'-1} \ast f_{h'+1,\k})(n-s-s'-x)\\
      &=\exp\left( -\eta c^{(t)}_{h}(s)\right)\exp\left( -\eta c^{(t)}_{h'}(s')\right) (f_{1,h-1} \ast (f_{h+1,h'-1} \ast f_{h'+1,\k}))(n-s-s')
\end{align}
We observe that the marginal probabilities only depend on the interval partition function and the cumulative loss per battlefield. We can precompute the partition function $f_{h,h'}(y)$, for all $h,h'\in [\k]^2: h\leq h'$ and $y\in [n]_0$ in total time $n\k^2\log n$ utilizing the self reducible structure of the partition function (see Algorithm \ref{alg:sampling_blotto}). Then we compute the convolutions $(f_{1,h-1} \ast f_{h+1,h'-1} \ast f_{h'+1,\k})(y)$ for all $h,h'\in [\k]^2: h<h'$ and $y\in[n]_0$ in total time $n\k^2\log n$ with FFT. Using these calculations each term $\Pr\left[ s_h,s_{h'} = s,s'\right]$ computation takes constant time.

\subsection{Proof of Theorem \ref{thm: blotto}}

\begin{proof}
    
Using Lemma \ref{thm: blotto_time_1} and the exact sampling procedure provided in \cite{beaglehole2023sampling} (see Algorithm \ref{alg:sampling_blotto}), based on which we can calculate the required kernels of our \textsc{Sampling} procedure in time ${\mathcal{O}}(n\k\log n)$, the per-iteration complexity for the bandit and semi-bandit algorithms is ${\mathcal{O}}(n^{\omega}\k^{\omega}\log n)$ and ${\mathcal{O}}(n\k\log n)$, respectively. 
By combining Theorems \ref{thm: no_regret_bandit} and \ref{thm: no_regret_semi_bandit} with Theorem \ref{thm: folklore}, we can achieve the desired results. 

\end{proof}

\newpage

\subsection{Similar Techniques for Efficient Implementation of Kernelized \textsc{GeometricHegde} in m-sets} \label{appendix: msets}

\begin{algorithm}[h]
\caption{Sampling the MWU distribution in m-sets}\label{alg:sampling}
\begin{algorithmic}[1]
\REQUIRE Soldiers $n \geq 0$, battlefields $d \geq 1$ and cumulative loss ${c}^{(t)}_{h}\cdot b$ for $h,s\in[d]\times[n]_0$ 
\STATE $f_0(y)=1$ for all $y \in [m]_0$ 
\FOR{$h = 1, ..., d-1$}
\STATE $\forall$ $ y \in [m]_0$, calculate the partition function for item $h$:
$$ f_h(y) = \exp\left( -\eta {c}^{(t)}_{h} \right) \cdot f_{h-1} \left( y - 1 \right)+f_{h-1} \left( y \right),\ $$
\ENDFOR
\STATE Sample the selection of the last item: $$\Pr[v_d = b] \propto \exp\left( -\eta {c}^{(t)}_{d}\cdot b \right) \cdot f_{d-1}(m - b),\ b \in \{0,1\}$$
\FOR{$h = 1, ..., d-1$}
\STATE Sample the selection $v_{d-h}$  of item $d-h$ given the selections, $v_{d-h+1},\ldots,v_{d}$, of items $d-h+1,\ldots, d$ as follows:
$$\Pr \{v_{d-h} = b \mid v_{d-h+1},\ldots,v_{d} \} 
\propto\exp\left( -\eta {c}^{(t)}_{d-h}\cdot b \right) \cdot f_{d-h-1}\left(m -
\left( \sum_{j=d-h+1}^d v_j \right) - b\right)  $$ for $b\in\left\{0,\min\left(1,m -
 \sum\limits_{j=d-h+1}^d v_j \right)\right\}$
\ENDFOR
\end{algorithmic}
\end{algorithm}

\medskip

\textbf{Summary:} \textit{We can apply similar techniques to efficiently compute the second moment used in Algorithm \ref{alg: bandit} for the classic \textit{m-sets} setting. 
In particular, our approach requires time $\widetilde{\mathcal{O}}(m d^2)$, improving upon the DAG formulation approach of \cite{combinatorial_bandits, takimoto2003path} which requires time ${\mathcal{O}}(m^2d^2)$.}

\medskip

A classic setting in combinatorial bandits which is also considered in \cite{farina2022kernelized} are m-sets, where actions are selections of $m$ out of $d$ items. Its binary representation the action set can be written as $\mathcal{V} = \{v \in \{0,1\}^d \mid \sum_{i} v_i = m\}$. 


We will show how to perform efficient exact sampling and autocorrelation matrix calculation in m-sets. For this purpose we will use the partition function defined in equation \ref{eq:partition-function-f}, similarly to Blotto. The partition function resembles the kernels used in kernelized MWU. Next we proceed to the technical details of our methods.

At time step $t$,  let $\ell_t[i]$ be the loss observed by the player when selecting the $i$-th item. Moreover, let 
$$c^{(t)}_{i}=\sum_{\tau=1}^t \ell_t[i] $$
be the cumulative loss of the $i$-th item over the first $t$ time steps.
We define the partition function
\begin{align}
\label{eq:partition-function-f}
    f_h(y) = \sum_{ x_1 + \cdots + x_h = y } \prod_{i=1}^h \exp\left( -\eta  {c}^{(t)}_{i} \cdot x_i\right)
\end{align}
where $x_i \in\{0,1\},h \in [d]$ and $y\in[m]$.\\
We also define the partition function $g_h(y)$, which is similar to $f_h(y)$ but aggregates the set items in the reverse order.
\begin{align}
\label{eq:partition-function-g}
    g_h(y)
    = \sum_{ x_h + \cdots + x_d = y } \prod_{i=h}^d \exp\left( -\eta  {c}^{(t)}_{i}\cdot x_i\right) .
\end{align}
Let $L^{(t)}(x_1,..., x_d)$ be the cumulative loss at timestep $t$. $L^{(t)}$ can be decomposed into the cumulative losses per item as follows:
\begin{align}
    L^{(t)}(x_1,..., x_d)=\sum_{h=1}^d {c}^{(t)}_{h}\cdot x_h
\end{align}
Under MWU the probability of some assignment $x_1,..., x_d$ at timestep $t$ can be written as
\begin{align}
    \label{rag_mwu}
    \Pr[v_1,...,v_d = x_1, ..., x_d] &\propto \exp\left(-\eta L^{(t)}(x_1,..., x_d)\right)\\
     &= \exp\left(-\eta\sum_{h=1}^d  {c}^{(t)}_{h}\cdot x_h\right)
 \end{align}

\medskip

Marginal probabilities over assignments can be written as follows:
\begin{align}
    \Pr[v_d = b] &=  \sum_{ x_1 + \cdots + x_{d-1} = m-b }\Pr[v_1, ..., v_{d-1},v_d = x_1,...,x_{d-1},b]\\
    &\propto \sum_{ x_1 + \cdots + x_{d-1} = n-b } \exp\left(-\eta\sum_{h=1}^{d-1}  {c}^{(t)}_{h}\cdot x_h-\eta {c}^{(t)}_{d}\cdot b\right)\\
    &= \exp\left( -\eta {c}^{(t)}_{d}\cdot b \right) \sum_{ x_1 + \cdots + x_{d-1} = m-b } \exp\left(-\eta\sum_{h=1}^{d-1}  {c}^{(t)}_{h}\cdot x_h\right)\\
    &= \exp\left( -\eta {c}^{(t)}_{d}\cdot b \right) \sum_{ x_1 + \cdots + x_{d-1} = m-b } \prod_{h=1}^{d-1} \exp\left(-\eta  {c}^{(t)}_{h}\cdot x_h\right)\\
    &=\exp\left( -\eta {c}^{(t)}_{d}\cdot b \right) \cdot f_{d-1}(m - b)
\end{align}
Conditional probabilities over assignments are calculated as follows:
\begin{align}
    \Pr \{v_{d-h} = b \mid v_{d-h+1},\ldots,v_{d} \}  &=  \sum_{\substack{ x_1 + \cdots + x_{d-h-1} =\\ m -b-
 \sum\limits_{j=d-h+1}^d v_j  }}\Pr[v_1,..., v_{d-h-1},v_{d-h} = x_1,...,x_{d-h-1},b \mid v_{d-h+1},\ldots,v_{d}]\\
    &\propto \sum_{\substack{ x_1 + \cdots + x_{d-h-1} =\\ m -
b-\sum\limits_{j=d-h+1}^d v_j  }} \exp\left(-\eta\sum_{j=1}^{d-h-1}  {c}^{(t)}_{j}\cdot x_j-\eta {c}^{(t)}_{d-h}\cdot b-\eta
 \sum\limits_{j=d-h+1}^d {c}^{(t)}_{j}\cdot v_j \right)\\
&= \exp\left( -\eta {c}^{(t)}_{d-h}\cdot b-\eta
 \sum\limits_{j=d-h+1}^d {c}^{(t)}_{j}\cdot v_j \right) \sum_{\substack{ x_1 + \cdots + x_{d-h-1} =\\ m -b-
 \sum\limits_{j=d-h+1}^d v_j  }} \exp\left(-\eta\sum_{j=1}^{d-h-1}  {c}^{(t)}_{j}\cdot x_j\right)\\
 &= \exp\left( -\eta {c}^{(t)}_{d-h}\cdot b-\eta
 \sum\limits_{j=d-h+1}^d {c}^{(t)}_{j}\cdot v_j \right) f_{d-h-1}\left(m -
\left( \sum_{j=d-h+1}^d v_j \right) - b\right)\\
 &\propto\exp\left( -\eta {c}^{(t)}_{d-h}\cdot b \right) \cdot f_{d-h-1}\left(m -
\left( \sum_{j=d-h+1}^d v_j \right) - b\right)
\end{align}

Similarly we derive
\begin{align}
     \Pr[v_1 = b] \propto \exp\left( -\eta {c}^{(t)}_{1}\cdot b \right) \cdot g_{d-1}(m - b)
\end{align}
\begin{align}
    \Pr \{v_{h} = b \mid v_{1},\ldots,v_{h-1} \} 
\propto \exp\left( -\eta {c}^{(t)}_{h}\cdot b \right) \cdot g_{d-h-1}\left(m -\left( \sum_{j=1}^{h-1} v_j \right) - b\right) 
\end{align}
The conditional probabilities can be used to implement an efficient sampling procedure for the MWU distribution. We write the algorithm below.

\allowdisplaybreaks

For the calculation of second-order marginals we will make use of the interval partition function $f_{h,h'}(y)$, that aggregates possible assignments between the $h$ and the $h'$ battlefields.
\begin{align}
\label{eq:interval-partition-function-f}
    f_{h,h'}(y)
    = \sum_{ x_h + \cdots + x_{h'} = y } \prod_{i=h}^{h'} \exp\left( -\eta  {c}^{(t)}_{i}(x_i)\right) .
\end{align}
\begin{align}
     \Pr\left[ v_h,v_{h'} = b,b'\right] &= \sum_{ \sum\limits_{j\neq h,h'}x_j = m-b-b' }\Pr [(v_1,...,v_{h},...,v_{h'},...,v_d) = (x_1,...,b,...,b',...,x_d)]\\
     &\propto \sum_{ \sum\limits_{j\neq h,h'}x_j = m-b-b' } \exp\left(-\eta\sum\limits_{j\neq h,h'}{c}^{(t)}_{j} \cdot x_j  -\eta {c}^{(t)}_{h}\cdot b-\eta {c}^{(t)}_{h'}\cdot b'\right)\\\nonumber
     &=\exp\left( -\eta {c}^{(t)}_{h}\cdot b\right)\exp\left( -\eta {c}^{(t)}_{h'}\cdot b'\right)
     \sum_{x=0}^{m-b-b'}\sum_{y=0}^{m-b-b'-x}\sum_{ \sum\limits_{j=1}^{h-1}x_j = x}\exp\left(-\eta\sum\limits_{j=1}^{h-1}{c}^{(t)}_{j} \cdot x_j \right)\\
     &\ \sum_{ \sum\limits_{j=h+1}^{h'-1}x_j= y}\exp\left(-\eta\sum\limits_{j=h+1}^{h'-1}{c}^{(t)}_{j} \cdot x_j \right)\sum_{ \sum\limits_{j=h'+1}^{d}x_j= m-b-b'-x-y}\exp\left(-\eta\sum\limits_{j=h'+1}^{d}{c}^{(t)}_{j} \cdot x_j \right)\\\nonumber
     &=\exp\left( -\eta {c}^{(t)}_{h}\cdot b\right)\exp\left( -\eta {c}^{(t)}_{h'}\cdot b'\right)
     \sum_{x=0}^{m-b-b'}\sum_{ \sum\limits_{j=1}^{h-1}x_j = x}\exp\left(-\eta\sum\limits_{j=1}^{h-1}{c}^{(t)}_{j} \cdot x_j \right)\\
     &\ \sum_{y=0}^{m-b-b'-x}\sum_{ \sum\limits_{j=h+1}^{h'-1}x_j= y}\exp\left(-\eta\sum\limits_{j=h+1}^{h'-1}{c}^{(t)}_{j} \cdot x_j \right)\sum_{ \sum\limits_{j=h'+1}^{d}x_j= m-b-b'-x-y}\exp\left(-\eta\sum\limits_{j=h'+1}^{d}{c}^{(t)}_{j} \cdot x_j \right)\\\nonumber
     &=\exp\left( -\eta {c}^{(t)}_{h}\cdot b\right)\exp\left( -\eta {c}^{(t)}_{h'}\cdot b'\right) \sum_{x=0}^{m-b-b'}f_{1,h-1}(x)\\
     &\ \sum_{y=0}^{m-b-b'-x}f_{h+1,h'-1}(y) f_{h'+1,d}(m-b-b'-x-y)\\
      &=\exp\left( -\eta {c}^{(t)}_{h}\cdot b\right)\exp\left( -\eta {c}^{(t)}_{h'}\cdot b'\right) \sum_{x=0}^{m-b-b'}f_{1,h-1}(x)(f_{h+1,h'-1} \ast f_{h'+1,d})(m-b-b'-x)\\
      &=\exp\left( -\eta {c}^{(t)}_{h}\cdot b\right)\exp\left( -\eta {c}^{(t)}_{h'}\cdot b'\right) (f_{1,h-1} \ast (f_{h+1,h'-1} \ast f_{h'+1,d}))(m-b-b')
\end{align}
We observe that the marginal probabilities only depend on the interval partition function and the cumulative loss per battlefield. We can precompute the partition function $f_{h,h'}(y)$, for all $h,h'\in [d]^2: h\leq h'$ and $y\in [n]_0$ in total time $m d^2\log m$ utilizing the self reducible structure of the partition function (see algorithm \ref{alg:sampling} lines 1-5 for details). Then we compute the convolutions $(f_{1,h-1} \ast f_{h+1,h'-1} \ast f_{h'+1,d})(y)$ for all $h,h'\in [d]^2: h<h'$ and $y\in[n]_0$ in total time $m d^2\log m$ with FFT. Using these calculations, each term $\Pr\left[ v_h,v_{h'} = b,b'\right]$ computation takes constant time. 

We remind that the autocorrelation matrix, which is used to construct the loss estimator in $\textsc{GeometricHegde}$,  has the probabilities $\Pr\left[ v_h,v_{h'} = b,b'\right]$ as entries. At this point we have shown how to efficiently sample from MWU and how to compute the autocorrelation matrix and thus that $\textsc{GeometricHegde}$ can be efficiently implemented.

\subsubsection{Comparison with a DAG approach}
One can easily see that online learning in \textbf{m-sets} can be modeled as online path planning in an appropriately constructed DAG with $E=\mathcal{O}(d*m)$ edges. In this graph, nodes are parameterized by two indices $i,j\in[d+1]\times[m]$. The source is node $N(0,0)$ and the sink is $N(d+1,m)$. At node $N(i,j)$, $1\leq i \leq d$ we have considered items $1$ to $i-1$ and we have selected $j$ of them.
If we select item $i$ we make a transition from $N(i,j)$ to $N(i+1,j+1)$, otherwise we make a transition to $N(i+1,j)$. Transitions that lead to selecting more than $m$ items are illegal and at the sink node $N(d+1,m)$ we should have selected exactly $m$ items. This way, there is an equivalence between paths in the constructed DAG and selections of $m$ out of $d$ items and in both cases the reward is linear to the components.

For m-sets over $d$ items the corresponding DAG has $|E|={\Theta}(m d)$ edges. Then, sampling can be performed through weight pushing \cite{takimoto2003path} in $\mathcal{O}(E)=\mathcal{O}(m d)$, which is similar to complexity of sampling via the partition function. For the calculation of the autocorrelation matrix the approach of path planning in the m-set DAG would need $\mathcal{O}(E^2)=\mathcal{O}(m^2d^2)$ using the techniques of \cite{takimoto2003path}. Compared to the above our approach saves an $m$ factor. Thus, along with partitions (which is the action set in Blotto) m-sets is another application, where kernelization is beneficial compared to standard techniques such as weight pushing.

\newpage

\section{Kernelization in Graphic Matroid Congestion Games} \label{appendix: matroids}

\setcounter{algocf}{6}
\begin{algorithm2e}[h]
\label{alg: matroids}
\DontPrintSemicolon
\caption{Efficient First-Moment Kernel Computations in Graphic Matroid Congestion games\hspace*{-2cm}}\label{alg:matroid_kernel}
\KwData{$C\in \mathbb{R}^{E}$}
{\color{blue}/$^*$ Compute the weighted Laplacian $A\in \mathbb{R}^{|V|\times|V|}$ $^*$/} \;
$A[u,v] = 
\begin{cases} 
\sum_{e \in E \text{ incident to } u} C(e) & \text{if } u = v \\
-C(e) & \text{if } e = (u,v) \in E \\
0 & \text{otherwise.}
\end{cases}$ \;
\vspace{0.25em}

{\color{blue}/$^*$ Compute the LU decompositions of the submatrices $A_{-u,-u}$ $^*$/} \;
\For{$u = 1, ..., |V|$}{
    $A_{-u,-u}$  =$\ \ $ the submatrix of $A$ derived by deleting row $u$ and column $u$\;
    \vspace{0.25em}
    Compute the LU decomposition $(L_u,U_u)$ of $A_{-u,-u}$, that is lower triangular $L_u$ and upper triangular $U_u$ such that $A_{-u,-u}=L_u\cdot U_u$ \;   
    \vspace{0.25em}
}

{\color{blue}/$^*$ Compute the $d$ kernels $^*$/}\;

\For{$j = 1, ..., |E|$}{
Let $u_j,v_j$ be the two nodes connected by edge $j$ \;
\vspace{0.25em}
Let $E_{-j} = E \setminus \{j\}$ be the subgraph that does not have edge $j$ \;
\vspace{0.25em}
{\color{blue}/$^*$ Compute the weighted Laplacian $A^{(j)}$ in the subgraph where edge $j$ is missing $^*$/} \;
$A^{(j)}[u,v] = 
\begin{cases} 
\sum_{e \in E_{-j} \text{ incident to } u} C(e) & \text{if } u = v \\
-C(e) & \text{if } e = (u,v) \in E_{-j}\\
0 & \text{otherwise.}
\end{cases}$ \;
$A^{(j)}_{-u_j,-u_j}$  =$\ \ $ the submatrix of $A^{(j)}$ derived by deleting row $u_j$ and column $u_j$\;
\vspace{0.25em}
Compute the LU decomposition ($L$, $U$) of $A^{(j)}_{-u_j,-u_j}$ in $\mathcal{O}(|V|^2)$ using the precomputed matrices $L_{u_j},U_{u_j}$ and the technique of \cite{stange2006efficient}\;
\vspace{0.25em}
Compute the kernel $\K(C,\bar{e}_{j})=\det(L \cdot U)$ in $\mathcal{O}(|V|^2)$\;
}
\end{algorithm2e}

\subsection{Proof of Proposition \ref{prop: induction}}


Let $\bar{G} = (\bar{\mathcal{V}}, \bar{E})$ be a connected multigraph, and let $\hat{G} = (\hat{\mathcal{V}}, \hat{E})$ be the \emph{meta-graph} associated with $\bar{G}$, defined as follows:

\begin{enumerate}
    \item The vertex sets coincide:
    \[
        \hat{\mathcal{V}} = \bar{\mathcal{V}}.
    \]
    
    \item For an edge $e = (u, v) \in \bar{E}$ with weight $\bar{w}(e)$:
    \begin{itemize}
        \item If $e$ is the unique edge between $u$ and $v$, then $e \in \hat{E}$ with the same weight:
        \[
            \hat{w}(e) = \bar{w}(e).
        \]
        \item Otherwise, let $\{e'\} \subset \bar{E}$ be all parallel edges connecting $u$ and $v$. Then, we define a single \emph{meta-edge} $\hat{e} \in \hat{E}$ with weight equal to the total weight of the merged edges:
        \[
            \hat{w}(\hat{e}) = \sum_{e' \in \{e'\}} \bar{w}(e').
        \]
        We say that $\hat{e}$ is a \emph{merged meta-edge}, and write $e' \subset \hat{e}$ if the edge $e' \in \bar{E}$ participates in the construction of $\hat{e} \in \hat{E}$.
    \end{itemize}
    
    \item Let $\bar{\mathcal{T}}$ denote the set of spanning trees of $\bar{G}$, and let $\hat{\mathcal{T}}$ denote the set of spanning trees of $\hat{G}$.
\end{enumerate}

We derive the following proposition.

\medskip

\begin{proposition}\label{prop: induction}
    It holds that $K_{\hat{\mathcal{V}}}(\hat{w},\boldsymbol{1}) = K_{\bar{\mathcal{V}}}(\bar{w},\boldsymbol{1})$.
\end{proposition}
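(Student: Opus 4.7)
\textbf{Proof plan for Proposition \ref{prop: induction}.} The plan is to unpack both kernels via the weighted generating function over spanning trees and then match terms through a projection map between $\bar{\mathcal{T}}$ and $\hat{\mathcal{T}}$. Concretely, using the kernel definition together with the fact that actions in a graphic matroid are incidence vectors of spanning trees, I would first rewrite
\[
K_{\bar{\mathcal{V}}}(\bar{w},\boldsymbol{1}) \;=\; \sum_{\bar{T}\in\bar{\mathcal{T}}}\prod_{e\in\bar{T}}\bar{w}(e), \qquad K_{\hat{\mathcal{V}}}(\hat{w},\boldsymbol{1}) \;=\; \sum_{\hat{T}\in\hat{\mathcal{T}}}\prod_{\hat{e}\in\hat{T}}\hat{w}(\hat{e}).
\]
So the claim reduces to a weighted bijection-style identity between spanning trees of $\bar{G}$ and of its meta-graph $\hat{G}$.

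Next, I would define the natural projection $\pi:\bar{\mathcal{T}}\to\hat{\mathcal{T}}$ that replaces every edge $e\in\bar{T}$ with the unique meta-edge $\hat{e}\in\hat{E}$ containing it (if $e$ has no parallel sibling then $\hat{e}=e$). The crucial structural observation, which I would verify first, is that $\pi$ is well-defined: because spanning trees are acyclic, $\bar{T}$ cannot contain two distinct parallel edges between the same pair of vertices (this would form a $2$-cycle in the multigraph). Therefore $\pi(\bar{T})$ is a subgraph of $\hat{G}$ with the same number of edges $|\bar{\mathcal{V}}|-1$ and the same connectivity pattern between vertices, so it is indeed a spanning tree of $\hat{G}$.

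Then I would compute the fiber of $\pi$: for any $\hat{T}\in\hat{\mathcal{T}}$, the elements of $\pi^{-1}(\hat{T})$ are in bijection with independent choices, for each merged meta-edge $\hat{e}\in\hat{T}$, of exactly one constituent $e'\subset\hat{e}$. Summing the weights over the fiber factorizes across meta-edges,
\[
\sum_{\bar{T}\in\pi^{-1}(\hat{T})}\prod_{e\in\bar{T}}\bar{w}(e) \;=\; \prod_{\hat{e}\in\hat{T}}\Bigl(\sum_{e'\subset\hat{e}}\bar{w}(e')\Bigr) \;=\; \prod_{\hat{e}\in\hat{T}}\hat{w}(\hat{e}),
\]
where the last equality uses the very definition of $\hat{w}$ as the aggregated weight of parallel edges. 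Partitioning the sum over $\bar{\mathcal{T}}$ according to the fibers of $\pi$ then gives $K_{\bar{\mathcal{V}}}(\bar{w},\boldsymbol{1}) = K_{\hat{\mathcal{V}}}(\hat{w},\boldsymbol{1})$, as desired.

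The only genuine point to check carefully — and the one I would treat as the main (mild) obstacle — is the surjectivity-with-correct-multiplicity of $\pi$: every $\hat{T}\in\hat{\mathcal{T}}$ is realized exactly by the product of the multiplicities $|\{e'\subset\hat{e}\}|$, and conversely every independent choice of constituents does yield a spanning tree of $\bar{G}$ (no accidental cycles are created because the underlying simple incidence structure is inherited from $\hat{T}$). Once this combinatorial fact is nailed down, the rest is a direct rearrangement of sums and products.
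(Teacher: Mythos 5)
Your proposal is correct and is essentially the paper's own argument: the paper proves the identity by expanding, for each spanning tree of the meta-graph, the product of aggregated weights $\prod_{\hat{e}}\sum_{e'\subset\hat{e}}\bar{w}(e')$ into a sum over multigraph spanning trees, which is exactly the fiber decomposition you describe from the opposite direction. Your version simply makes explicit the projection map, its well-definedness, and the multiplicity count that the paper's algebraic expansion uses implicitly.
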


\begin{proof}
    It holds that:

\begin{align}
    K_{\hat{\mathcal{V}}}(\hat{w},\boldsymbol{1}) & = \sum_{\hat{\mathcal{T}} \in \hat{\mathcal{V}}}\prod_{\hat{e} \in \hat{\mathcal{T}}}\hat{w}(\hat{e}) \\
    & = \sum_{\hat{\mathcal{T}} \in \hat{\mathcal{V}}} \prod_{e \in \hat{\mathcal{T}}: e \text{ not merged}} \hat{w}(e) \prod_{\hat{e} \in \hat{\mathcal{T}}: {\hat{e} \text{ merged}}} \hat{w}(\hat{e}) \\
    & = \sum_{\hat{\mathcal{T}} \in \hat{\mathcal{V}}} \prod_{e \in \hat{\mathcal{T}}: e \text{ not merged}} \bar{w}(e) \prod_{\hat{e} \in \hat{\mathcal{T}}: {\hat{e} \text{ merged}}} \sum_{e' \subset \hat{e}} \bar{w}({e'}) \\
    & = \sum_{\bar{\mathcal{T}} \in \bar{\mathcal{V}}}\prod_{\bar{e} \in \bar{\mathcal{T}}}\bar{w}(\bar{e}) \\
    & = K_{\bar{\mathcal{V}}}(\bar{w},\boldsymbol{1})
\end{align}
    
\end{proof}

\subsection{Proof of Lemma \ref{lem: matroids}}

\begin{proof}

\textbf{Kernelization.}

The above algorithm (Algorithm 7) shows how to compute the first-moment kernel computations in Graphic Matroid Congestion games. The time complexity and correctness of the algorithm are discussed below.

We will use the following:
\begin{itemize}
    \item We need $\mathcal{O}(|V|^{\omega})$ time for computing an LU decomposition. \\
    \item We need $\mathcal{O}(|V|^{\omega+1})$ time for precomputing the LU decompositions of the minors.
\end{itemize}

We leverage the property of the Matrix-Tree Theorem which allows us to use any submatrix to compute the determinant of the Laplacian matrix. Therefore, we can always make a strategic choice of which row and column to delete. 
For each edge $j\in[|E|]$ consider the Laplacian used for the computation of the  kernel $\K(C,\bar{e}_{j})$. 
The Laplacian for this kernel is constructed in the same way as the one we described above for $\K(C,\textbf{1})$ but with the difference that $C(j)$ is set to zero. 
For each node $v\in V$, we precompute the LU decomposition of the minors $A_{-v,-v}$---that is the submatrix of $A$ derived by deleting row $v$ and column $v$---and then for each $j =(u,u') \in E$, we fast compute kernel $K(C, \bar{e}_j)$ by computing the determinant of that kernel's Laplacian via recursive LU updating \cite{stange2006efficient} in $\mathcal{O}(|V|^2)$.
The latter is due to the fact that we can always select a submatrix of the kernel's Laplacian that only differs in one element from $A_{-u,-u}$, so we can apply the techniques of \cite{stange2006efficient}. 
Similar arguments can also be used for fast computing $\K(C, \bar{e}_{j,j'})$ to derive the desired results.

\begin{algorithm2e}[h]
\label{alg: matroids}
\DontPrintSemicolon
\caption{Efficient Exact Sampling of MWU in Graphic Matroids\hspace*{-2cm}}\label{alg:matroid_kernel}
\KwData{$C\in \mathbb{R}^{|E|}$}
\textbf{Initialize} Meta-Graph = $G(V,E)$ and assign weight $C(e)$ to each edge $e$\;
Compute kernels $\K(C, \bar{e}_1)$ and $\K(C, \boldsymbol{1})$ using the Matrix-Tree Theorem\;
Sample $v(1) \sim Be\left(1 - \frac{K_{\mathcal{V}} (C, \bar{e}_1)}{\K(C, \boldsymbol{1})}\right)$ \;
Initialize the cumulative weight $w=1$\;
\For{ $j = 2, ..., d$}{ 
\vspace{0.25em}
\uIf{$v(j-1) = 0$}{
  Find the meta-edge of Meta-Graph containing edge $j-1$ and reduce its weight by $C({j-1})$\;
  }
  \Else{
    Find the meta-edge $e$ of Meta-Graph containing edge $j-1$ \;
    Update $w = w \cdot \text{weight}(e)$ \;
    Merge the two meta-nodes connected by the meta-edge $e$. \;
    If parallel edges are created then merge them into a single meta-edge containing all the parallel edges and assign to the new meta edge weight equal to the sum of the weights of the parallel edges\;
  }
\vspace{0.25em}
{\color{blue}/$^*$ Compute the kernel $K_{\mathcal{V}(j)}$ $^*$/} \;
Compute a cofactor $c$ of the Laplacian of the Meta-Graph \;
$K_{\mathcal{V}(j)}(C, \boldsymbol{1}) = w \cdot c$ \;
Find the meta-edge of Meta-Graph containing the edge $j$ and reduce its weight by $C({j})$\;
Compute a cofactor $c'$ of the Meta-Graph Laplacian using the weights of the meta-edges\;
$K_{\mathcal{V}(j)} (C, \bar{e}_j) = w \cdot c'$ \;
Find the meta-edge of Meta-Graph containing the edge $j$ and increase its weight by $C(j)$\;
$p_j=1-\frac{K_{\mathcal{V}(j)} (C, \bar{e}_j)}{K_{\mathcal{V}(j)}(C, \boldsymbol{1})}$\;
\vspace{0.25em}
Sample $v(j)\sim Be(p_j)$\;
}

\end{algorithm2e}

\newpage

\textbf{Per-Iteration complexity of \textsc{Sampling}.}

We implement the \textsc{Sampling} procedure of Algorithms \ref{alg: bandit} and \ref{alg:semi_band} based on the above algorithm (Algorithm 8).
Since we have guaranteed that the \textsc{Sampling} procedure performs exact sampling from a MWU($\mathcal{V}, C$), what remains to prove is that the implementation we propose correctly computes the conditional kernels.  
We will prove this using an induction argument on the iterations $j \in [|E|]$ of the algorithm. We will show that the algorithm correctly computes the new Bernoulli probability $p_{j+1}$.

\begin{itemize}
\item \textbf{Basis}: The meta-graph is initialized as the initial graph. From Theorem \ref{thm:kernel} and Observation \ref{obs}, we get the unconditional probability $p_1$. The algorithm samples $v(1) \sim p_1$. If the first edge $(u,v)$ is not selected then the algorithm removes it from the new meta-graph and computes $K_{\mathcal{V}(2)}$ via the cofactor of the Laplacian of the new meta-graph. 
If the first edge is selected then: (a) if the first edge is \textit{not a merging meta-edge} (that is, nodes $u$ and $v$ do not have common neighbors in the meta-graph) then the algorithm removes this edge from the graph, merges the two associated nodes of this edge in the new meta-graph, updates the cumulative weight with the weight of this edge, and computes $K_{\mathcal{V}(2)}$, (b) if the first edge is a \textit{merging meta-edge} (i.e., nodes $u$ and $v$ do have common neighbors in the meta-graph), then the algorithm makes the above steps, but now the meta-graph is a multi-graph. In this case, the algorithm also merges the resulted parallel edges connecting the associated nodes into a meta-edge with weight equal with the sum the weights of the merged edges. The computation of $K_{\mathcal{V}(2)}$ is correct, due to Proposition \ref{prop: induction}, because the kernel computation on a multi-graph (that is, the meta-graph after merging the nodes $u$ and $v$, but before merging the resulted parallel edges) equals the kernel computation on the corresponding new meta-graph.
\item \textbf{Induction Step}: We use similar arguments with the basis, with the only difference when removing an edge. Now, if edge $j$ is not selected by the Bernoulli distribution $p_j$ but $j$ is part of a merged meta-edge (i.e., a meta-edge consisting of many edges of the initial graph), then the algorithm removes its weight from this meta-edge and computes the new meta-graph. Again the computation of $K_{\mathcal{V}(j+1)}$ is correct due to Proposition \ref{prop: induction}. 
\end{itemize}

Therefore, $\textsc{Sampling}(\mathcal{V}, C_t)$ can be implemented in time $\mathcal{O}(|E||V|^{\omega})$, where $|V|^{\omega}$ is due to the time we need to compute a single kernel. 

\end{proof}

\subsection{Proof of Theorem \ref{thm: matroids}}

\begin{proof}
    We directly derive the statement of the theorem by combining Theorems \ref{thm: no_regret_bandit} and \ref{thm: no_regret_semi_bandit} with Theorem \ref{thm: folklore} and Lemma \ref{lem: matroids}.
\end{proof}

\newpage

\section{Kernelization in Network Congestion Games} \label{sec: network_congestion_games}

We consider the setting used in \cite{congestion_semi_bandit, dadi2024congestionBandit}; that is, the network congestion game takes place on a DAG, consisting of nodes $V$ and edges $E$, and thus an action of each player is a path of a DAG. We assume that the maximal path length is $K$. 
Following \cite{combinatorial_bandits}, we represent an action of each player $i \in [|\mathcal{P}|]$, as the incidence vector $v \in \{0,1\}^{|E|}$ of the corresponding path: for all $j \in [|E|]$, $v(j)=1$ if and only if the corresponding edge is present in the path. We denote the action set (i.e., a set of path vectors) of player $i \in [|\mathcal{P}|]$ by $\mathcal{V}_i$. Given an action profile $(v_i, v_{-i})$ the total loss of player $i$, $L_i$, is the sum of the losses of the selected edges of $v_i$.
Based on the above, it is easy to check that a network congestion game is a combinatorial game with $|\mathcal{P}|$ players, actions sets $\{\mathcal{V}_i\}$ and losses $\{\mathcal{L}_i\}$, where the action vectors are $|E|$-dimensional and their $L_1$-norm is at most $K$. 

To perform efficient sampling in DAGs and compute all kernels needed by Algorithms \ref{alg:semi_band} and \ref{alg: bandit}, we utilize the methodology based on DP developed by \cite{takimoto2003path}. For sampling we need time $\mathcal{O}(|E|)$. For the first moment calculation we need time $\mathcal{O}(|E|)$, while for the second moment calculation we need time $\mathcal{O}(|E|^2)$.
Using also the fact that we can compute a 2-approximate barycentric spanner in time $\widetilde{\mathcal{O}}((|E|+|V|)^3)$, we obtain the following CCE convergence results.


\medskip

\begin{theorem}[Semi-bandit Convergence to CCE]
In a network congestion game, under the semi-bandit online learning setup, if all players adopt Algorithm \ref{alg:semi_band}, then after $\widetilde{\mathcal{O}}(|E|^{1+\omega}K^2/\epsilon^2)$ runtime, with $T \geq |E|K^2/\varepsilon^2$, the time-average joint actions, $\sigma^* := \frac{1}{T} \sum_{t=1}^T v_1^{(t)} \otimes \cdots \otimes v_{|\mathcal{P}|}^{(t)} $, forms an $\varepsilon$-CCE of the game with high probability.  
\end{theorem}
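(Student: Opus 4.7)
The plan is to combine the general semi-bandit regret guarantee of Theorem~\ref{thm: no_regret_semi_bandit} with the online-learning-to-equilibrium reduction of Theorem~\ref{thm: folklore}, and then bound the per-iteration cost of Algorithm~\ref{alg:semi_band} by leveraging the DAG structure of the game. First, I would instantiate Theorem~\ref{thm: no_regret_semi_bandit} with $d = |E|$ and $m = K$: actions are incidence vectors of $s\!\to\! t$ paths of length at most $K$ in a DAG with $|E|$ edges, so each action lies in $\{0,1\}^{|E|}$ with at most $K$ ones. This yields a per-player realized regret $R_{T,i} \leq \widetilde{\mathcal{O}}(K\sqrt{|E|T})$ with high probability; a union bound over the $|\mathcal{P}|$ players preserves this guarantee. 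Invoking Theorem~\ref{thm: folklore}, the empirical joint distribution $\sigma^{*}$ is a $\big(\max_i R_{T,i}/T\big)$-CCE, and requiring $\widetilde{\mathcal{O}}(K\sqrt{|E|/T}) \le \varepsilon$ gives exactly the stated threshold $T \gtrsim |E|K^2/\varepsilon^2$.

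The remaining task is to bound the per-iteration cost of Algorithm~\ref{alg:semi_band}, which reduces to (i) computing the kernels that feed the unconditional marginals $x_t$ and thereby the IX loss estimator, and (ii) executing the \textsc{Sampling} procedure. For (i), I would run two sweeps in a topological ordering of the DAG---a forward pass producing partition functions $Z_s(v)$ and a backward pass producing $Z_t(v)$---following the Takimoto--Warmuth weight-pushing DP \cite{takimoto2003path}. This gives $\K(C_{t-1},\boldsymbol{1}) = Z_s(t)$, and for every edge $j=(u,v)$ the kernel $\K(C_{t-1},\bar{e}_j)$ is obtained in $O(1)$ arithmetic from $Z_s(u)$, $Z_t(v)$, and $C_{t-1}(j)$; altogether all first-moment kernels are computed in $\widetilde{\mathcal{O}}(|E|)$. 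For (ii), the conditional polytope $\mathcal{V}(j)$ obtained after fixing the first $j-1$ coordinates is again the set of $s\!\to\! t$ paths in an induced sub-DAG, so sequential sampling of a path using the precomputed backward partition $Z_t(\cdot)$ also takes $\mathcal{O}(|E|)$. Accounting for auxiliary linear-algebra operations on matrices of order at most $|V|\le |E|$, the per-iteration cost is bounded by $\widetilde{\mathcal{O}}(|E|^{\omega})$.

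Multiplying this per-iteration complexity by $T = \widetilde{\Theta}(|E|K^2/\varepsilon^2)$ yields the total runtime $\widetilde{\mathcal{O}}(|E|^{1+\omega}K^2/\varepsilon^2)$, as claimed. The main technical step, and the one that demands the most care, is verifying that the conditional polytopes arising in \textsc{Sampling} remain path polytopes of induced sub-DAGs so that the same weight-pushing recursion kernelizes them exactly---analogous to the meta-graph argument used in Proposition~\ref{prop: induction} for graphic matroids. Once that invariant is established, the result follows from a direct combination of the regret bound, the CCE reduction, and standard runtime accounting.
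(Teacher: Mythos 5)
Your proposal follows essentially the same route as the paper: instantiate Theorem~\ref{thm: no_regret_semi_bandit} with $d=|E|$, $\m=K$, feed the resulting high-probability regret into Theorem~\ref{thm: folklore} to get the threshold $T \gtrsim |E|K^2/\varepsilon^2$, and bound the per-iteration cost by the Takimoto--Warmuth weight-pushing DP, which the paper also invokes for computing all first-moment kernels and for exact path sampling in $\mathcal{O}(|E|)$ time. The only (harmless) difference is your appeal to ``auxiliary linear-algebra operations'' to justify an $\widetilde{\mathcal{O}}(|E|^{\omega})$ per-iteration cost: Algorithm~\ref{alg:semi_band} performs no matrix inversions, so the per-iteration cost is really $\widetilde{\mathcal{O}}(|E|)$, which is anyway dominated by $\widetilde{\mathcal{O}}(|E|^{\omega})$ and hence consistent with the stated runtime.
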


\medskip

\begin{theorem}[Bandit Convergence to CCE]
In a network congestion game, under the bandit online learning setup, if all players adopt Algorithm \ref{alg: bandit}, then after $\widetilde{\mathcal{O}}(|E|^{2+\omega}K^4/\epsilon^3)$ runtime, with $T \geq |E|^2K^4/\varepsilon^3$, the time-average joint actions, $\sigma^* := \frac{1}{T} \sum_{t=1}^T v_1^{(t)} \otimes \cdots \otimes v_{|\mathcal{P}|}^{(t)} $, forms an $\varepsilon$-CCE of the game with high probability.  
\end{theorem}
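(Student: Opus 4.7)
The plan is to instantiate Theorem \ref{thm: no_regret_bandit} with the polyhedral-game parameters of a network congestion game---namely $d = |E|$ and $m = K$---and then combine the resulting per-player no-realized-regret bound with the folklore reduction of Theorem \ref{thm: folklore} to obtain convergence to an $\varepsilon$-CCE with high probability. From Theorem \ref{thm: no_regret_bandit}, running Algorithm \ref{alg: bandit} for $T$ rounds yields, with high probability, realized regret $R_T \le \widetilde{\mathcal{O}}(|E|^{2/3} K^{4/3} T^{2/3})$ for each player. By Theorem \ref{thm: folklore}, the time-averaged joint action $\sigma^*$ is an $\varepsilon$-CCE whenever $R_T / T \le \varepsilon$, i.e., whenever $T \gtrsim |E|^2 K^4/\varepsilon^3$, which matches the horizon stated in the theorem. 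Taking a union bound over the $|\mathcal{P}|$ players (absorbed into the $\widetilde{\mathcal{O}}$ by a $\log(|\mathcal{P}|/\delta)$ factor) ensures that the regret bound---and hence the CCE guarantee---holds simultaneously for all players.

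It remains to bound the per-iteration complexity of Algorithm \ref{alg: bandit} in a DAG with $|E|$ edges and longest path length $K$. The expensive steps are: (i) computing the kernels $K(C_{t-1},\boldsymbol{1})$, $\{K(C_{t-1},\bar{e}_j)\}_j$ and $\{K(C_{t-1},\bar{e}_{j,j'})\}_{j,j'}$; (ii) forming $\Sigma_t(q_t)$ via Theorem \ref{thm: second_moment} and inverting $\Sigma_t$; (iii) running \textsc{Sampling}; (iv) the one-time computation of a $2$-approximate barycentric spanner. For (i) and (iii), I would adapt the Takimoto--Warmuth DP \cite{takimoto2003path} to our setting: for each node $u$ precompute forward and backward path-weight sums $F(u) = \sum_{s \to u}\prod_{e} C(e)$ and $B(u)=\sum_{u \to t}\prod_e C(e)$ in $\mathcal{O}(|E|)$ time. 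Then $K(C,\boldsymbol{1}) = F(t)$; for each edge $j = (u,v)$, $K(C,\bar{e}_j) = F(t) - F(u)C(j)B(v)$; and for each pair of edges $j = (u,v), j'=(u',v')$, one can similarly express $K(C,\bar{e}_{j,j'})$ as $F(t)$ minus path-contributions through $j$, through $j'$, and plus double-counted contributions using intermediate partition sums---each computable in $\mathcal{O}(1)$ after $\mathcal{O}(|E|)$ preprocessing, giving $\mathcal{O}(|E|^2)$ total for all second-moment kernels. The same DP enables exact sampling via standard weight pushing in $\mathcal{O}(|E|)$. For (iv), a best-response oracle on a DAG is just a shortest-path computation, so Lemma \ref{lem:barycentric_algo}-style reasoning (cf.\ Proposition \ref{prop:barycentric_algo}) yields a spanner in $\widetilde{\mathcal{O}}((|E|+|V|)^3)$ time, amortized to zero over the horizon.

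Putting these together, the dominant per-round cost is the $\mathcal{O}(|E|^\omega)$ inversion of $\Sigma_t$, dwarfing the $\mathcal{O}(|E|^2)$ kernel and sampling work. Multiplying by the required number of rounds $T = \widetilde{\mathcal{O}}(|E|^2 K^4/\varepsilon^3)$ gives total runtime $\widetilde{\mathcal{O}}(|E|^{2+\omega} K^4/\varepsilon^3)$, as claimed.

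The main obstacle I anticipate is verifying that all $\Theta(|E|^2)$ second-moment kernels $K(C,\bar{e}_{j,j'})$ can genuinely be computed in amortized $\mathcal{O}(1)$ each via DP on the DAG---the first-moment case is classical Takimoto--Warmuth, but conditioning simultaneously on two edges being absent requires a more careful inclusion--exclusion over paths that may pass through one, the other, or both, handled by maintaining auxiliary partition sums over pairs of DAG vertices. Once this DP is in place, everything else is a direct combination of the regret bound of Theorem \ref{thm: no_regret_bandit} with the CCE reduction of Theorem \ref{thm: folklore}, entirely analogous to the derivation of Theorem \ref{thm: matroids} for graphic matroids.
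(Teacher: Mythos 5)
Your proposal is correct and follows essentially the same route as the paper: instantiate Theorem \ref{thm: no_regret_bandit} with $d=|E|$, $m=K$, invoke Theorem \ref{thm: folklore} to get $T \geq |E|^2K^4/\varepsilon^3$, and bound the per-iteration cost by $\widetilde{\mathcal{O}}(|E|^{\omega})$ using Takimoto--Warmuth-style DP/weight pushing for the kernels ($\mathcal{O}(|E|)$ first moments, $\mathcal{O}(|E|^2)$ second moments), sampling, and a one-time barycentric spanner. The only difference is that you sketch the pairwise inclusion--exclusion DP explicitly, whereas the paper simply cites \cite{takimoto2003path} for the $\mathcal{O}(|E|^2)$ second-moment computation.
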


\section{Efficient Uniform Random Path Sampling from a DAG} \label{sampling_dags}
In this section, we describe a method for efficiently and exactly sampling paths uniformly at random from a Directed Acyclic Graph (DAG). This process is essential for the initialization phase of MWU. We present the algorithm's pseudocode and analyze its computational complexity as well as its correctness.

\setcounter{algorithm}{8}
\begin{algorithm}[H]  
\caption{Uniform Random Path Sampling from a DAG}  
\label{alg:uniform_sampling}  
\begin{algorithmic}[1]  
\REQUIRE A DAG \( G = (V, E) \), source node \( s \), target node \( t \)  
\ENSURE A uniformly random path \( P \) from \( s \) to \( t \)  

\STATE {\color{blue}/$^*$ Path Count Precomputation: $^*$/} 
\STATE Perform a topological sort of the nodes in \( G \).  
\STATE Set \( C(v) \gets 0 \) for all \( v \in V \), and \( C(t) \gets 1 \).  
\FOR{each node \( v \) in reverse topological order}  
    \STATE \( C(v) \gets \sum_{(v, u) \in E} C(u) \)  
\ENDFOR  

\STATE {\color{blue}/$^*$ Path Sampling: $^*$/}  
\STATE Initialize \( P \gets [s] \) and set \( v \gets s \).  
\WHILE{\( v \neq t \)}  
    \STATE Calculate probabilities \( P(u) \gets \frac{C(u)}{\sum_{(v, w) \in E} C(w)} \) for all \( (v, u) \in E \).  
    \STATE Select the next node \( u \) based on probabilities \( P(u) \).  
    \STATE Add \( u \) to \( P \) and update \( v \gets u \).  
\ENDWHILE  

\STATE \textbf{return} \( P \)  
\end{algorithmic}  
\end{algorithm}  

\paragraph{Computational Complexity.}  
The precomputation step requires \( \mathcal{O}(V + E) \) for the topological sort and another \( \mathcal{O}(V + E) \) for calculating the path counts. Therefore, the overall complexity of the precomputation step is \( \mathcal{O}(V + E) \). During the sampling phase, there are at most \( \mathcal{O}(V) \) iterations, one for each node in the path. Computing transition probabilities takes \( \mathcal{O}(\deg(v)) \) for each node \( v \), leading to a total of \( \mathcal{O}(E) \) operations. Thus, the complexity of the sampling phase is \( \mathcal{O}(E) \).  

Combining both steps, the total complexity of the algorithm is \( \mathcal{O}(V + E) \).  

\paragraph{Correctness Proof.}  
To demonstrate correctness, we prove that every path \( P \) from \( s \) to \( t \) is selected with equal probability.  

The dynamic programming step calculates \( C(v) \), the number of paths from node \( v \) to \( t \). Using the recurrence relation:  
\[
C(v) = \sum_{(v, u) \in E} C(u),
\]  
we ensure that \( C(s) \) represents the total number of paths from \( s \) to \( t \), and \( C(v) \) indicates the number of paths passing through \( v \). At each node \( v \), the transition probability to a neighboring node \( u \) is:  
\[
P(u \mid v) = \frac{C(u)}{\sum_{(v, w) \in E} C(w)} = \frac{C(u)}{C(v)}.  
\]  
For any path \( P = s \to v_1 \to v_2 \to \dots \to t \) in $   G$ the probability of selecting it is given by the product of transition probabilities:  
\[
P(P) = P(v_1 \mid s) \cdot P(v_2 \mid v_1) \cdot \dots \cdot P(t \mid v_K).  
\]  
By substituting \( P(u \mid v) = \frac{C(u)}{C(v)} \), we get:  
\[
P(P) = \frac{C(v_1)}{C(s)} \cdot \frac{C(v_2)}{C(v_1)} \cdot \dots \cdot \frac{C(t)}{C(v_K)}= \frac{C(t)}{C(s)} = \frac{1}{C(s)}.  
\]  

Since \( C(s) \) equals the total number of paths from \( s \) to \( t \), every path is selected with an equal probability of \( \frac{1}{C(s)} \).


\end{document}